\documentclass[11pt]{article}

\usepackage{geometry}
\makeatletter

\def\@listI{
	\leftmargin\leftmargini
	\parsep .25ex \@plus .1ex
	\topsep .25ex \@plus .1ex
	\itemsep \parsep
}
\let\@listi\@listI
\@listi

\makeatother

\usepackage{amsmath,amssymb,amsfonts,mathrsfs}
\usepackage{amsthm}
\usepackage{bm}
\usepackage{xfrac}
\usepackage{nicefrac}

\newtheorem{proposition}{Proposition}

\allowdisplaybreaks

\let\emptyset\varnothing
\newcommand{\norm}[1]{\left\lVert#1\right\rVert}

\usepackage{graphicx}
\graphicspath{{Figures/}}

\usepackage{booktabs}
\usepackage{multirow}
\usepackage{array}
\usepackage{multicol}
\usepackage{longtable}
\usepackage{caption}

\usepackage{enumitem}

\usepackage[ruled]{algorithm}
\usepackage{algpseudocode}
\usepackage{comment}
\usepackage{xcolor}
\usepackage{soul}
\usepackage{ulem}

\definecolor{gris}{rgb}{0.44,0.44,0.44}

\newcommand{\tocheck}{\textcolor{black}}
\newcommand{\tocheckTwo}{\textcolor{black}}

\usepackage[colorinlistoftodos,textsize=scriptsize]{todonotes}

\usepackage{natbib}
\bibpunct[, ]{(}{)}{,}{a}{}{,}

\PassOptionsToPackage{hyphens}{url}
\usepackage{hyperref}
\hypersetup{
	colorlinks=true,
	citecolor=blue,
	urlcolor=blue,
	linkcolor=blue,
	breaklinks=true
}

\RequirePackage{tgtermes}
\RequirePackage{endnotes}

\usepackage{tikz}
\usetikzlibrary{calc,decorations.markings}

\usepackage{fancyhdr}
\usepackage{authblk}

\title{\textbf{Modern column generation for estimating single- and multi-purchase ranked list choice models}}

\author[1]{Luciano Costa
\thanks{Corresponding author.\\
	Email addresses: \texttt{luciano.costa@academico.ufpb.br} (Luciano Costa), \texttt{g.berbeglia@mbs.edu} (Gerardo Berbeglia), \texttt{claudio.contardo@concordia.ca} (Claudio Contardo),
	\texttt{jean-francois.cordeau@hec.ca} (Jean-Fran{\c c}ois Cordeau),
	}
}
\author[2]{Gerardo Berbeglia}
\author[3]{Claudio Contardo}
\author[4]{Jean-Fran{\c c}ois Cordeau}

\affil[1]{\small Department of Production Engineering and LOG-UFPB, Federal University of Paraíba, João Pessoa, Brazil, 55014-900}
\affil[2]{\small Melbourne Business School, The University of Melbourne, Carlton, Victoria 3053, Australia}
\affil[3]{\small Gina Cody School of Engineering and Computer Science, Concordia University, CIRRELT \& GERAD, H3G 2W1, Montr{\'e}al (Qu{\'e}bec) Canada}
\affil[4]{\small Department of Logistics and Operations Management, HEC Montr{\'e}al, CIRRELT \& GERAD, H3T 2A7, Montr{\'e}al, Canada}

\date{\today}

\begin{document}
	
\maketitle
	
\paragraph{Abstract:}
This paper studies the estimation of ranked-list discrete choice models with single and multiple purchases.
In this setting, each consumer type is characterized by a ranking over a subset of products and a desired number of purchases, and the estimation task is to identify the set of consumer types and their probabilities that best explain the observed transactional data.
This problem is computationally challenging due to the exponential number of possible consumer types and becomes more difficult when multiple purchases are allowed.
We propose a column generation framework for this problem.
Our main contribution is a dynamic programming algorithm for the column generation subproblem.
This subproblem generalizes the linear ordering problem and incorporates acceleration techniques to improve computational efficiency.
To the best of our knowledge, this is the first dynamic programming-based approach for generating consumer types in non-parametric models.
The proposed framework supports multiple model variants with minor modifications.
Computational experiments on synthetic and real data show substantial speedups over existing methods while maintaining high solution quality, and demonstrate effectiveness in both estimation and assortment optimization.
	
	\paragraph{Keywords:}
	Column generation;
	Ranked list choice model;
	Generalized linear ordering problem;
	Assortment optimization.


\section{Introduction}\label{sec:introction}

Discrete choice models (DCMs) are a class of stochastic models used to understand and predict the choices of consumers when faced with a finite set of alternatives.
Most DCMs are based on the notion of random utility, an idea originally proposed by \citet{thurstone1927law} and mathematically formalized by \citet{Block1959}.
Under the random utility framework, alternatives are endowed with a probability distribution of utility.
When a consumer observes the alternatives, the random utilities are realized, and the person selects the one with the highest utility.
The richness of DCMs in modeling complex choice behavior lies in the flexibility of specifying the joint probability distribution of utilities.
DCMs have been studied extensively across different fields, including mathematical psychology, economics, operations, revenue management, and marketing.
In operations and revenue management, applications of DCMs range from
predicting demand for transportation systems \citep{Cirillo2011},
estimating consumer demand for inventory management \citep{Lee2016},
assortment optimization \citep{gallego2019assortment, Heger2024},
to pricing of products and services \citep{Talluri2004}
and the design of emergency systems \citep{Lovreglio2014}.

One of the most studied applications of DCMs in revenue management is the \textit{assortment optimization problem}.
In this problem, a firm must determine an optimal subset of products to offer its consumers to maximize expected revenue \citep{Talluri2004}.
The complexity of the assortment problem arises not only from the exponential number of potential assortments, which can be obtained by selecting subsets from all available products, but also from the need to consider multiple aspects when solving the problem.
Simple strategies, such as offering only high-priced or only inexpensive products, are usually suboptimal.
If only expensive products are offered, many consumers may decide not to purchase anything.
On the other hand, focusing solely on low-profit products in an attempt to attract a specific segment may cannibalize demand for highly profitable items, leading to a decrease in the firm's overall profit.
Moreover, the assortment problem may also be subject to additional operational complexities and constraints, such as space capacity restrictions and product positioning, which have an impact on consumer behavior.

Most DCMs assume that consumers purchase only one product at a time. While this assumption can be reasonable for high-priced items, in most day-to-day contexts, consumers typically buy multiple related products within the same category \citep{Tulabandhula2023}. 
More recently, several works modeling multi-purchase behavior have appeared in the literature.
To capture such behavior, models typically rely on extensions of the well-established multinomial logit (MNL) framework \citep{Sun2020, Chen2022, Jasin2024, Tulabandhula2023, Wang2023, Bai2024}.
Very recently, a new approach based on a multi-choice ranked-list model has also been proposed by \citet{Lin2025}.
In a multi-purchase ranked-list model, a consumer type is represented with two components:
(i) a preference list, in which items, including the no-purchase option, are ranked in decreasing order of desirability, and
(ii) a maximum purchase limit, which specifies the largest number of products the consumer is willing to purchase.
Faced with an assortment, the consumer chooses the highest-ranked available items from their list, up to their predefined limit.

In the context of assortment optimization problems, when attempting to determine the best possible offer set, it is important to know, or at least accurately estimate, consumer preferences since they directly affect the products' demand.
Firms typically rely on historical transactional data to understand consumer behavior in response to various alternatives.
However, accurately estimating demand can be challenging.
In the past, several applications assumed that item demands were independent, i.e., unaffected by the availability of other products \citep{Strauss2018}. This assumption, however, ignores \textit{substitution effects}, leading to poor outcomes.
The substitution effect plays an important role in demand estimation:
if consumers cannot find their preferred item, they may opt for a similar alternative if available.
Independent demand models are also incapable of capturing \textit{demand spilling}, which refers to the demand lost when a consumer's first choice is not available.
Ignoring substitution and spilling effects can result in significant under- or over-estimation of demand.
In this context, DCMs are particularly suitable for demand estimation, as they explicitly model consumer preferences over sets of alternatives and account for these effects.
When considering ranked-list-based models, demand estimation involves determining a discrete \emph{probability mass function} (pmf) associated with the various consumer types, making them a generic and powerful tool in choice modeling.
Although \citet{Block1959} showed that this approach contains all utility-based choice models, such as MNL models, a significant drawback lies in the exponential number of consumer types. 
Nonetheless, this issue has been effectively addressed in the literature through various strategies, including robust optimization \citep{Farias2013}, \emph{column generation} (CG) \citep{Vulcano2015, bertsimas2016data, Lin2025}, and Benders decomposition \citep{Bertsimas2019}.

\paragraph{\textbf{Contributions}}
The primary contribution of this work is the development of an efficient unified framework for estimating general single- and multi-purchase DCMs.
Inspired by the works of \citet{Vulcano2015}, \citet{bertsimas2016data}, and \citet{Lin2025}, we provide a scalable and exact CG procedure for estimating non-parametric single- and multi-purchase choice models from transactional data. 
Beyond this central contribution, the paper introduces several additional technical advances:
\begin{enumerate}
	\item \textbf{Dynamic programming for solving the CG subproblem}:
	We develop a novel DP algorithm for solving the CG subproblem, a generalized linear ordering problem (GLOP).
	To the best of our knowledge, this is the first DP-based approach for generating consumer types in non-parametric ranked-list models.
	
	\item \textbf{Acceleration strategies for solving CG subproblem exactly}:
	We propose several acceleration techniques (e.g., completion bounds, unreachable products, and DP-based heuristic pricing) that substantially reduce computational effort while maintaining exact pricing guarantees.
	
	\item \textbf{Unified framework for multiple settings}:
	Our method can easily handle single- and multi-purchase model estimation and settings where consumers have limited consideration sets \citep{Feldman2019} with minor modifications to the DP, providing a unified framework that covers a broad range of choice model variants.
	Moreover, our experiments show that, when applied to estimating single- and multi-purchase ranked-list models, our algorithm is substantially faster than the state of the art.
	It is worth noting that the DP method can exactly solve the problem of generating consumer types under both maximum-likelihood and minimum $\ell_1$-error objectives.
	This is not possible, for instance, when relying on existing mixed-integer linear programming (MILP) formulations to enumerate consumer types, as they assume non-negative revenues and, therefore, can only be used within EM estimation models.
\end{enumerate}

In addition to these methodological contributions, we have conducted extensive experiments using synthetic and real datasets to assess the performance of the methodology developed.
We observed that the method performs well even for instances with a relatively large number of transactions, a setting that poses challenges for several existing methods in the literature \citep{Vulcano2015}.
\tocheck{When compared to the methodology of \cite{Vulcano2015} ---a CG framework that relies on the maximum-likelihood estimation and solves the pricing subproblem via a MIP solver--- our approach delivers speedups of nearly $40\times$ when we consider instances with 10 products and 6,000 transactions.
	This speed-up could have been greater, had we not excluded the instances for which the benchmark method did not reach a solution within the time limit.}
The experiments also highlight the practical relevance of multi-purchase behavior: estimating a model that allows consumers to purchase multiple products, rather than restricting them to a single item, improves predictive accuracy and can increase revenues in assortment optimization.
\tocheck{In particular, when considering synthetic data, we observe that} accounting for multi-purchase behavior can reduce prediction errors by up to 70\% compared to a single-purchase setting, while increasing revenues by up to 20\%, \tocheck{and ensuring speedups greater than $100\times$ when compared to a standard market-discovery method similar to that of \citet{Vulcano2015}}.
Moreover, using the same synthetic dataset as \citet{Lin2025}, our multi-purchase method delivers higher predictive accuracy and higher revenues, despite relying on a simpler behavioral model that ignores the richer attribute set considered in \citet{Lin2025}, which considers behavior-revealed-preference (BRP) rules in the estimation framework.

\paragraph{\textbf{Organization of the paper}}

The remainder of this paper is organized as follows:
In Section \ref{sec:lit_review}, we provide an overview of some important works proposing both single- and multi-purchase DCMs.
Section \ref{sec:problem_description} introduces some notation and formulates the estimation problems that we address.
In Section \ref{sec:column_generation}, we describe our CG method, along with the newly developed DP algorithm proposed to tackle the CG subproblems in this context, namely the problem of finding consumer types of maximum profit for addressing the single- and multi-purchase estimation problems.
Section \ref{sec:additional} is devoted to adapting the proposed DP to address additional attributes and particular cases, namely the single-purchase setting, consumer types with limited-sized lists (i.e., small consideration sets).
Section \ref{sec:acceleration_techiniques} is devoted to presenting several acceleration techniques designed to speed up the resolution of the DP, and consequently, the overall estimation procedure.
Section \ref{sec:computational_experiments} presents numerical results demonstrating the efficiency of our method and the accuracy in the estimation and revenue generation steps.
Finally, Section \ref{sec:conclusions} provides some conclusions and insights derived from the computational results presented.

\section{Literature review}\label{sec:lit_review}

In this section, we provide a brief overview of parametric and non-parametric DCMs in the context of both single- (Section \ref{sec:single_purchase_models}) and multi-purchase (Section \ref{sec:multi_purchase_models}) models.

\subsection{Single-purchase models}\label{sec:single_purchase_models}

In the literature on single-purchase DCMs, both parametric and non-parametric approaches have been extensively explored.
Parametric models rely on analytical expressions with a limited number of parameters, enabling the inclusion of covariates such as product features and prices \citep{Vulcano2010, Berbeglia2022}.
The most classical formulation is the MNL model \citep{Luce1965}, whose closed-form probabilities and concave log-likelihood allow efficient estimation.
However, MNL suffers from the independence-from-irrelevant-alternatives (IIA) property \citep{Train2009, Gallego2019}, which can produce unrealistic substitution patterns.
To mitigate the effects of IIA, several extensions of the MNL model have been developed over the years, including:
latent-class MNL \citep{Bhat1997},
nested MNL \citep{Train2009},
mixed MNL \citep{McFadden2000},
and Markov chain choice models \citep{Blanchet2016}.
Although these models alleviate the IIA limitation, they typically increase computational burden.
Recent surveys provide comprehensive overviews of parametric formulations \citep{Strauss2018, Berbeglia2022}.

Advances in data and computation have motivated an increase in interest in non-parametric approaches \citep{Misic2020, Berbeglia2022}.
Among these, ranked-list (stochastic preference) models form a fundamental class introduced by \citet{Block1959} and later adopted for assortment optimization by \citet{Mahajan2001}.
These models rely solely on observed behavior rather than predefined utility structures, making them particularly appealing in operational settings \citep{Honhon2012, Farias2013, Vulcano2015, bertsimas2016data, VanRyzin2017, Jena2020}.
\citet{Honhon2012} study assortment optimization under several structured substitution patterns, highlighting that offering the most profitable product is not necessarily optimal.
\citet{Farias2013} estimate sparse non-parametric demand distributions via robust optimization and constraint sampling, producing accurate revenue predictions. A complementary line of work considers structured probability models over rankings to retain tractability \citep{DesirGoyalJagabathulaSegev2021, feng2025mallows}.

An important methodological advance is the market-discovery framework of \citet{Vulcano2015}, which employs a CG algorithm under maximum likelihood to estimate consumer types under a general ranked-list model from sales transactions and item availability data.
The generation of new consumer types is done by solving a GLOP via MILP.
Although effective, this approach scales to only a few hundred transactions.
\citet{bertsimas2016data} propose an alternative based on $\ell_1$-error minimization with heuristic pricing.
While both studies aim to estimate the pmf of consumer types consistent with observed transactions, they differ in their estimation frameworks and in the strategies used to solve the GLOP subproblem.
Relative to \citet{Vulcano2015}, in our method, we replace the MILP-based pricing with an exact DP pricing algorithm, allowing us to solve instances with tens of thousands of transactions.
Compared to \citet{bertsimas2016data}, who employ heuristic pricing, our method maintains exact pricing guarantees while optionally allowing heuristics to speed up early iterations.

Estimators based on expectation-maximization (EM) have also been explored.
\citet{VanRyzin2017} build upon the method of \citet{Dempster1977} to develop an EM algorithm and apply it to recover both arrival rates and type probabilities.
\citet{HajMirzaei2020} propose an enhancement to the market-discovery approach of \citet{Vulcano2015} and \citet{VanRyzin2017}, initializing the algorithm with consumer types generated by an evolutionary algorithm rather than singleton consumers.
From a methodological perspective, \citet{MendezDiaz2019} develop a branch-and-cut (BC) framework for the GLOP.
When embedded into an estimation setting, the resulting framework yields solutions with better statistical quality than those obtained using a simple greedy heuristic, at the expense of a computational time that is approximately one order of magnitude larger.

Often, consumers do not consider for purchase more than just a small number of products. Limited-consideration set models have been investigated by \citet{Feldman2019}, who design LP-based rounding algorithms for assortment problems under short preference lists.
Unlike \citet{Feldman2019}, who develop approximation results for limited-consideration consumers, our framework incorporates restricted preference lists within an exact estimation algorithm, requiring only minor modifications to our DP.
Other important contributions in the literature include Benders-based formulations for product-line design \citep{Bertsimas2019} and worst-case performance guarantees for revenue-ordered assortments  \citep{Rusmevichientong2014, aouad2018approximability, Berbeglia2020}.
More recently, \citet{Jena2020} introduce partially ranked consumer preferences and estimate them under an $\ell_1$-error framework using CG over preference trees.
\citet{Jena2022} further extend this idea to the generalized stochastic preference (GSP) model \citep{Berbeglia2018a}, capturing behavioral biases and efficiently scaling to irrational choice data.
Recent work also studies how to learn ranked preferences reliably from noisy or strategic platform data;
\citet{Golrezaei2022a}, for instance, develop methods to learn product rankings that are robust to manipulation by fake users. Finally, \citet{aouad2025representing} showed that every random utility discrete choice model (or equivalently every ranked list model) can be represented by a well-defined class of neural networks.

\subsection{Multi-purchase models}\label{sec:multi_purchase_models}

In contrast to single-purchase models, multi-purchase DCMs have only recently gained attention in the literature.
Most contributions extend parametric MNL-based structures, though semi- and non-parametric ranked-list models have also begun to emerge.

One of the earliest works is \citet{Tulabandhula2023}, motivated by online recommendation systems.
The authors propose BundleMVL-$K$, an extension of the MNL model that allows consumers to purchase up to $K$ items.
The model captures richer substitution patterns without substantially increasing complexity, remaining computationally tractable.
In practice, BundleMVL-2 yields revenue improvements of at least 5\% in instances with roughly 1,500 products and 17\% on average across six real datasets.
To solve the associated assortment problem, the authors design a binary-search-based algorithm exploiting structural properties of the optimal solution.

\citet{Jasin2024} study uncapacitated and capacitated assortment problems under the Multivariate MNL (MVMNL) model, highlighting differences relative to BundleMVL.
Whereas BundleMVL fixes bundle size exogenously, MVMNL determines purchases endogenously through utility maximization and allows grouping of products to capture substitution and complementarity effects.
The authors show that, unlike in single-purchase settings, revenue-ordered assortments may not always be optimal; however, in uncapacitated problems, the optimal solution can consist of revenue-ordered local assortments by category.
\citet{Chen2022} also work with MVMNL, proving strong NP-hardness and proposing a $\frac{1}{2}$-approximation via adjusted revenue ordering, strengthened to $0.74$ through LP relaxation.

\citet{Wang2023} introduce a multi-purchase MNL model where consumers buy at most two distinct products.
This assumption is supported by data from a Chinese retailer showing that fewer than 3\% of consumers purchase more than two items and fewer than 4\% purchase more than one unit of a product.
The resulting log-likelihood function is concave, enabling efficient estimation.
The authors study scenarios with full and partial observability of purchase sequences and solve the corresponding assortment problem via MILP, validating their approach on real and synthetic datasets.
\citet{Bai2024} extend the random-utility MNL framework to multi-purchase settings (MP-MNL) with an endogenous purchase limit.
A consumer selects all items whose utilities exceed the no-purchase option, subject to capacity.
In contrast to \citet{Tulabandhula2023}, who rely on particular structural assumptions, \citet{Bai2024} develop a recursive probability computation and propose polynomial-time approximation schemes for both bounded and general settings.


	\citet{he2026mip} study logit-based multi-purchase choice models and propose MIP formulations for assortment optimization problems.
	The authors introduce a hypergraph-based representation that captures general bundle-based choice structures and unifies several models in the literature, including BundleMVL-$K$ and MVMNL.
	This representation enables the derivation of strong formulations and shows that assortment optimization under general Logit-MP models is NP-hard to approximate.
	Their approach yields tighter LP relaxations than the classical Big-M formulation and characterizes conditions under which the formulations are locally sharp.
	To ensure computational tractability, they develop a cutting-plane algorithm and demonstrate significant computational improvements over existing methods.

Beyond parametric approaches, \citet{Sun2020} propose marginal distribution models (MDMs), a semiparametric framework that generalizes MNL, multiple-discrete choice models, and threshold-utility models.
They introduce a mixed-MDM version to account for heterogeneity and show that the framework sits strictly between RUMs and regular choice models, achieving strong empirical performance.
Moving toward non-parametric methods, \citet{Lin2025} propose a multi-choice ranked-list model (MC-RLM) in which a preference list and a purchase capacity define each consumer type.
Consumers select up to $\eta$ products according to their rankings.
The authors extend the market-discovery method of \citet{Vulcano2015}, incorporate BRP rules (e.g., views, clicks, cart additions), and generate new consumer types using an MILP.
They further propose a revenue-ordered heuristic with performance guarantees and demonstrate strong predictive accuracy relative to independent and MNL-based models.
Although \citet{Lin2025} also consider multi-purchase ranked-list models, our approach differs in that the pricing is determined via an exact DP rather than using an MILP solver, and our focus is on efficient estimation directly from transactional data rather than incorporating BRP rules.

Overall, when compared to other works in the literature, our framework is general and scalable: it handles single-purchase, multi-purchase, and limited-consideration settings with minimal adjustments, while providing significant computational improvements through an exact DP pricing algorithm enhanced with multiple acceleration strategies.

\section{Problem description}\label{sec:problem_description}

In this section, we describe our approach to estimating pmfs over consumer types from historical transaction data.
In Section \ref{sec:model_basis}, we describe the problem that we study and introduce the notation that is considered throughout the paper.
Then, in Section \ref{sec:estimation_problem}, we describe the estimation methods that we employ to solve the problem.

\subsection{Notation}\label{sec:model_basis}

Let $\mathcal{N} = \{1, \ldots, n\}$ be a set of $n$ substitutable items and let $\mathcal{T}$ denote a set of $T$ transactions.
Each transaction $t \in \mathcal{T}$ is associated with a non-empty offer set $S_t \subseteq \mathcal{N}$ and a bundle $B_t \subseteq S_t$ of purchases.
Based on $\mathcal{T}$, we can compute the empirical probabilities for purchasing a bundle as follows. Let us define
$\mathcal{T}(S) = \{t \in \mathcal{T} : S_t = S\}$ and
$\mathcal{T}(S, B) = \{t\in\mathcal{T}: S_t = S, B_t = B\}$.
For any given assortment $S \subseteq \mathcal{N}$ appearing in $\mathcal{T}$, and any bundle $B \subseteq S$, we define the empirical probability that bundle $B$ is selected when assortment $S$ is offered as:
\begin{equation}
	\hat{\mathbb{P}}(B | S) = \frac{|\mathcal{T}(S, B)|}{|\mathcal{T}(S)|}. \label{eq:empirical_probability}
\end{equation}

We assume that consumer demand can be represented as a distribution over a set of \textit{consumer types}.
A consumer type is represented by a tuple $c = (\sigma, \eta)$, where $\sigma$ is a sequence of products in $\mathcal{N}$ without repetition, and $\eta \in \{1, \ldots, n\}$.
A consumer of type $c = (\sigma, \eta)$ strictly prefers item $\sigma_i$ over item $\sigma_{i+1}$ for every $i \in \{1, \ldots, |\sigma| - 1\}$, where $|\sigma|$ denotes the length of the list.
When faced with an assortment $S$, the consumer first constructs a sub-list $\iota(\sigma, S)$ by removing from $\sigma$ all products not in $S$.
Then, the consumer purchases the bundle consisting of the top $\eta$ products in $\iota(\sigma, S)$.
If $|\iota(\sigma, S)| < \eta$, the consumer buys all products in $\iota(\sigma, S)$, and if $|\iota(\sigma, S)| = 0$, the consumer buys nothing.
Let $\pi(c, S)$ denote the bundle purchased by consumer $c$ when facing assortment $S$, according to this definition.
We say that a consumer type $c = (\sigma, \eta)$ is \textit{compatible} with a transaction $(S, B)$ if $\pi(c, S) = B$.

Let us denote $\mathcal{C}$ the set of all possible consumer types $c = (\sigma, \eta)$, whose cardinality is of exponential size in $n$.
By looking at the information from $\mathcal{T}$, it might not be possible to identify specifically which consumer types were involved in the transactions.
Yet, given the full set of consumer types $\mathcal{C}$, we can indicate which ones are \emph{compatible} with a transaction $t \in \mathcal{T}$.
This compatibility implies that these consumers would select all the products in the bundle $B_t$ and none from $S_t \setminus B_t$ when presented with an assortment $S_t$.
For each transaction $t \in \mathcal{T}$, we define
$\mathcal{C}_t = \{c = (\sigma, \eta) \in \mathcal{C} : \pi(c, S_t) = B_t \}$, which is the set of consumers compatible with transaction $t$.
By defining $\textbf{x} = \mathbb{P}(\boldsymbol{c})$ as the pmf associated with consumer types in $\mathcal{C}$, the probability that a consumer will select bundle $B_t$ when offered the assortment in $S_t$ is as follows:
\begin{equation}
	\mathbb{P}(B_t | S_t, \textbf{x}) = \sum\limits_{c \in \mathcal{C}_t} x_{c}\label{eq:conditional_probability}.
\end{equation}
For the sake of conciseness, expression \eqref{eq:conditional_probability} can be expressed simply as $\mathbb{P}(B_t | S_t)$ when it is clear that the probability results from the DCM.

\subsection{Formulation of the estimation problem}\label{sec:estimation_problem}

The estimation problem for a ranked-list based choice model consists of finding the pmf $\textbf{x} = \mathbb{P}(\boldsymbol{c})$ associated with consumer types in $\mathcal{C}$.
In this section, we show how the estimation problem can be defined when two of the most important estimation methods are considered, namely:
\emph{maximum likelihood} and
\emph{minimum $\ell_1$ predictive errors}.
Even though ranked-list-based choice models are a non-parametric approach, for simplicity, we use the term \emph{parameter} when referring to the estimated values from the training set.

\subsubsection{Maximum likelihood}\label{sec:MLE}

The maximum likelihood estimator aims to find the set of parameters associated with a model for which the collected data is the most likely to be observed.
Assuming that the full set of observations (transactions) is independent and identically distributed (\emph{i.i.d.}), it is possible to derive the likelihood function by simply multiplying the probability for each observation occurring.
Due to the properties associated with the $\log$ function, it is preferable to work with the data log-likelihood function.
For a ranked-list-based choice model, the log-likelihood function can be obtained as follows.

Let $\mathcal{P}$ be the set of periods over which transactions in $\mathcal{T}$ take place.
We do not make any assumptions about the duration of the periods.
Each period could correspond to a day, a week, or even chronological intervals with different durations.
As discussed by \citet{Vulcano2015}, one may partition the planning horizon into segments so that the consumer arrival rate is constant and equal over all segments.
In our case, we assume that a period is a chronological interval in which one transaction takes place.
It is presumed that consumer arrivals occur according to a homogeneous Bernoulli process, where at each period $p \in \mathcal{P}$, an arrival occurs with probability $0 < \lambda < 1$.
The value of $\lambda$ can be estimated from the data.

For analysis purposes, we partition the set of periods $\mathcal{P}$ into three subsets, namely:
$\mathcal{P}_p$, set of periods with selections (purchases);
$\mathcal{P}_{0}$, set of periods with arrivals but no selections (no-purchases);
and $\mathcal{P}_{\bar{\lambda}}$, set of periods with no arrivals.
Due to the definition of period that we have adopted, the probability associated with the transaction in a given period is the same as in \eqref{eq:conditional_probability}.
Thus, we adopt the same approach employed in \citet{Vulcano2015} and write the incomplete log-likelihood function as follows:
\begin{align}
	\begin{split}
		& \log \mathscr{L}(\textbf{x}, \lambda) = \sum_{t \in \mathcal{P}_p} \log \lambda + \log \mathbb{P}(B_t | S_t, \textbf{x}) \\
		& + \sum_{t \in \mathcal{P}_0} \log \lambda + \log \mathbb{P}(\emptyset | S_t, \textbf{x}) + \sum_{t \in \mathcal{P}_{\bar{\lambda}}} \log (1 - \lambda).
		\label{eq:log_likelihood_expanded}
	\end{split}
\end{align}
When performing a maximum likelihood estimation, one seeks to maximize the function \eqref{eq:log_likelihood_expanded}.
Given that \eqref{eq:log_likelihood_expanded} is separable in \textbf{x} and $\lambda$, and globally concave in $(\textbf{x}, \lambda)$, by computing $\sfrac{\partial \mathscr{L}(\textbf{x},\lambda)}{\partial \lambda} = 0$, we can find that $\lambda^{*} = \sfrac{(|\mathcal{P}_p| + |\mathcal{P}_0|)}{(|\mathcal{P}_p| + |\mathcal{P}_0| + |\mathcal{P}_{\bar{\lambda}}|)}$.
The remaining optimization problem can be formulated as follows:
\begin{equation}
	\begin{array}{ll}
		\max\limits_{\textbf{x} \geq 0} & \mathscr{L}(\textbf{x}) \\
		\text{s.t.} & \textbf{1}^\intercal \textbf{x} = 1. \\
	\end{array} \label{eq:mle}
\end{equation}
At this point, we can exploit the knowledge that we have acquired from the available data and apply a variable transformation to write the log-likelihood function $\mathscr{L}(\textbf{x})$ in terms of conditional probabilities \eqref{eq:conditional_probability}.
To do so, we define new variables $y_t$, $\forall t \in \mathcal{T}$, which are defined as $y_t = \sum_{c \in \mathcal{C}_t} x_{c}$, and correspond to the probability that a selection takes place in the period.
The data log-likelihood function can then be rewritten as: 
\begin{align}
	\mathscr{L}(\textbf{y}) = \sum_{t \in \mathcal{P}_p} \log y_t + \sum_{t \in \mathcal{P}_{0}} \log y_t,
\end{align}
and the estimation problem \eqref{eq:mle} becomes:
\begin{equation}
	\begin{array}{ll}
		\max\limits_{\textbf{x}, \textbf{y} \geq 0} & \mathscr{L}(\textbf{y}) \\
		\text{s.t.}	& \textbf{A} \textbf{x} - \textbf{y} = 0 \\
		& \textbf{1}^\intercal \textbf{x} = 1, \\
	\end{array} \label{eq:mle_y}
\end{equation}
where $\textbf{A} \in \{0, 1\}^{|\mathcal{T}| \times |\mathcal{C}|}$ is a binary matrix, where $a_{tc} = 1$, if $c \in \mathcal{C}_t$, or $a_{tc} = 0$, otherwise.

\subsubsection{Minimum $\ell_1$ predictive errors}

Let $\mathcal{M} = \{(S_t, B_t): t\in\mathcal{T}\}$ be the set of all distinct transactions, this is without considering repetitions.
Another way of estimating parameters $\textbf{x} = \mathbb{P}(\boldsymbol{c})$ associated with ranked-list based choice models is to minimize the estimation error given by the difference between the observed probabilities $\hat{\mathbb{P}}(B | S)$ and the estimated probabilities $\mathbb{P}(B | S, \textbf{x})$ for all possible distinct transactions $(S, B)\in\mathcal{M}$.
This strategy has been adopted by \citet{bertsimas2016data} and \citet{Jena2020}.

For the sake of conciseness, let us define a vector $\textbf{v}$ containing all empirical probabilities $v_{SB} > 0$, $\forall (S, B) \in \mathcal{M}$, where $v_{SB} = \hat{\mathbb{P}}(B|S)$.
The compatibility matrix $\mathbf{A}$ defined in the previous section is slightly modified, and is such that $\mathbf{A} \in \{0,1\}^{|\mathcal{M}| \times |\mathcal{C}|}$, where $a_{mc} = 1$, if $\pi(c, S) = B$ for $m = (S, B)$, or $a_{mc} = 0$, otherwise.
The estimation problem can then be formulated as follows:
\begin{equation}
	\begin{array}{ll}
		\min \limits_{\textbf{x} \geq 0} & \norm{A \textbf{x} - \textbf{v}}_{1} \\
		\text{s.t.}	&  \textbf{1}^\intercal \textbf{x} = 1.\\
	\end{array} \label{eq:l1_error}
\end{equation}
Problem \eqref{eq:l1_error} can be re-written as a linear problem by considering vectors $\boldsymbol{\epsilon}^{+}, \boldsymbol{\epsilon}^{-} \in \mathbb{R}_{+}^{|\mathcal{M}|}$ of artificial variables.
The estimation problem becomes:
\begin{equation}
	\begin{array}{ll}
		\min \limits_{\textbf{x}, \boldsymbol{\epsilon}^{+}, \boldsymbol{\epsilon}^{-} \geq 0} &  \textbf{1}^\intercal \boldsymbol{\epsilon}^{+} + \textbf{1}^\intercal \boldsymbol{\epsilon}^{-}\\
		\text{s.t.} & A \textbf{x} + \boldsymbol{\epsilon}^{+} - \boldsymbol{\epsilon}^{-} = \textbf{v} \\
		&  \textbf{1}^\intercal \textbf{x} = 1.\\
	\end{array} \label{eq:l1_error_lin}
\end{equation}

\section{Column generation}\label{sec:column_generation}

Both estimation problems \eqref{eq:mle_y} and \eqref{eq:l1_error_lin} are associated with extended formulations containing an exponential number of variables (parameters).
When solving these problems, handling all their variables at once may be computationally prohibitive.
In this context, CG arises as a suitable technique to tackle problems with such structures.
CG is an iterative approach that alternates between solving a \emph{restricted master problem} (RMP) and the CG subproblem.
Both problems \eqref{eq:mle_y} and \eqref{eq:l1_error_lin} constitute the so-called CG \emph{master problems} (MPs).
Due to their large number of variables, when applying CG, one considers at each time only a small subset of variables from the MP, forming the RMP.
At each iteration, new variables are generated by solving the CG subproblem, which considers dual information from the RMP in the process.
When the CG subproblem can no longer find variables with negative reduced cost (for a minimization problem), the CG algorithm stops.

For the sake of comprehensiveness, we organize the remainder of this section as follows. 
Section \ref{sec:master_problem} presents the MPs associated with \eqref{eq:mle_y} and \eqref{eq:l1_error_lin}. 
Section \ref{sec:pricing_suproblem} then describes the proposed DP algorithm for solving the CG subproblems, which consist of finding promising consumer types to be incorporated into the RMPs.

\subsection{Master problem}\label{sec:master_problem}

The RMP is defined by considering a subset $\mathcal{C}' \subseteq \mathcal{C}$ of variables.
To make this section self-contained, we revisit some of the notation already defined in Section \ref{sec:problem_description}.
Moreover, when formulating the RMPs, we adopt a component-wise notation for representing the matrices and vectors.

\subsubsection{Maximum-likelihood master problem}\label{sec:loglikelihood_mp}

Let $a_{tc}$ be the binary parameter indicating whether a consumer type $c = (\sigma, \eta) \in \mathcal{C}$ is compatible with transaction $t \in \mathcal{T}$, or not.
Variables $x_{c}$, $c \in \mathcal{C}$, indicate the probability of a consumer, arriving randomly, to be of type $c$.
Variables $y_t$, $t \in \mathcal{T}$, correspond to the conditional probabilities defined in terms of $x_{c}$, and that express the probability of a bundle $B_t$ being selected when an assortment $S_t$ of items is available (Section \ref{sec:model_basis}).
The RMP associated with \eqref{eq:mle_y} can be explicitly expressed as:
\begin{align}
	\max & \sum_{t \in \mathcal{T}} \log y_t && \label{eq:obj_func_CG} \\
	\text{s.t.} & \sum_{c\in \mathcal{C}'} a_{tc} x_{c} - y_t = 0 && \forall t \in \mathcal{T} \label{eq:constraints_CG} \\
	& \sum_{c \in \mathcal{C}'} x_{c} = 1 && \label{eq:convexity_CG} \\
	& x_{c} \geq 0 && c \in \mathcal{C}' \label{eq:vars_x_CG}\\  
	& y_t \geq 0 && t \in \mathcal{T}. \label{eq:vars_y_CG}
\end{align}
The objective function \eqref{eq:obj_func_CG} to be maximized is the log-likelihood function of the maximum likelihood estimator. 
Constraints \eqref{eq:constraints_CG} are linking constraints that enable the computation of the conditional probabilities of a transaction taking place (values of the $y_t$ variables). 
Constraint \eqref{eq:convexity_CG} imposes that the sum of probabilities of all consumer types must be equal to 1. 
Moreover, \eqref{eq:convexity_CG} corresponds to the CG convexity constraint. 
Finally, constraints \eqref{eq:vars_x_CG} and \eqref{eq:vars_y_CG} define the domain of the variables.
\tocheck{We note that, although the variable $y_t$ is formally allowed to take value~0, which would render the objective function undefined, this situation never occurs in our implementation.
	Indeed, the master problem is initialized with a set of consumer types constructed so that at least one type is compatible with every transaction.
	Following the recommendation of \citet{Vulcano2015}, the initial consumer set is based on an independent-demand model and consists of a simple set of $n$ consumer types, each ranking exactly one product above the no-purchase alternative.}

Let $\mu_t$ and $\gamma$ be the dual variables associated with Constraints \eqref{eq:constraints_CG} and \eqref{eq:convexity_CG}, respectively.
As discussed in \citet{Vulcano2015}, a solution to the Lagrangian problem $\mathcal{L}(\textbf{x}, \textbf{y}, \boldsymbol{\mu}, \gamma)$ arising from \eqref{eq:obj_func_CG}--\eqref{eq:vars_y_CG} can be obtained by applying \emph{Karush-Kuhn-Tucker} (KKT) conditions to $\mathcal{L}(\textbf{x}, \textbf{y}, \boldsymbol{\mu}, \gamma)$:
\begin{align}
	& \frac{\partial}{\partial y_t} \mathcal{L}(\textbf{x}, \textbf{y}, \boldsymbol{\mu}, \gamma) = \frac{1}{y_t} - \mu_t = 0 && \forall t  \in \mathcal{T} \label{eq:kkt_eml_y}\\
	& \frac{\partial}{\partial x_{c}} \mathcal{L}(\textbf{x}, \textbf{y}, \boldsymbol{\mu}, \gamma) = \sum_{t \in \mathcal{T}} a_{tc} \mu_t + \gamma  \leq 0 && \forall c  \in \mathcal{C}. \label{eq:kkt_eml_x}
\end{align}
From \eqref{eq:kkt_eml_y}, one can compute $\mu_t = \sfrac{1}{y_t}$, $\forall t \in \mathcal{T}$. 
From \eqref{eq:kkt_eml_x}, \citet{Vulcano2015} derive a condition to determine whether a newly generated column should be incorporated into the RMP.
The authors show that if a column $c \in \mathcal{C}$ satisfies $\sum_{t \in \mathcal{T}} a_{tc} \mu_t > |\mathcal{T}|$, it constitutes an improvement direction for the Lagrangian dual function and, hence, can be added to the RMP. 
However, even if this condition provides an improvement direction for the maximum likelihood log-likelihood function, a newly generated consumer type should only be added to \eqref{eq:obj_func_CG}--\eqref{eq:vars_y_CG} if it is deemed statistically significant according to a log-likelihood ratio test. 
The use of such a test aims at avoiding overfitting.

Since function \eqref{eq:obj_func_CG} is non-linear, solving RMP \eqref{eq:obj_func_CG}--\eqref{eq:vars_y_CG} directly using an off-the-shelf solver is difficult.
To circumvent this issue, the EM method of \citet{VanRyzin2017} can be employed to estimate the probabilities associated with consumer types.
The EM method can be adopted as the RMP solver in the CG algorithm applied over \eqref{eq:obj_func_CG}--\eqref{eq:vars_y_CG} with dual variables $\mu_t, \forall t \in \mathcal{T}$ being estimated from \eqref{eq:kkt_eml_y}. 

\subsubsection{$\ell_1$-error master problem}\label{sec:l1_error_mp}

The notation employed to formulate the RMP associated with problem \eqref{eq:l1_error_lin} is similar to that adopted in Section \ref{sec:loglikelihood_mp}. 
For simplicity, we abuse the notation by referring to an element of $\mathcal{M}$ simply by its index $m$. 
Hence, a binary parameter $a_{mc}$ indicates whether a consumer type $c \in \mathcal{C}$ is compatible with a transaction $m = (S, B)$. 
Variables $\epsilon_m^{+}$ and $\epsilon_m^{-}$, $m \in \mathcal{M}$, are used to compute predictive errors between the estimated and observed probabilities. 
Problem \eqref{eq:l1_error_lin} can then be explicitly formulated as:
\begin{align}
	\min & \sum_{m \in \mathcal{M}} \epsilon_{m}^{+} + \epsilon_{m}^{-} && \label{eq:obj_func_CG_l1norm} \\
	\text{s.t.}  & \sum_{c \in \mathcal{C}'} a_{mc} x_{c} + \epsilon_{m}^{+} - \epsilon_{m}^{-} =  v_{m} && \forall m \in \mathcal{M} \label{eq:constraints_CG_l1norm} \\
	& \sum_{c \in \mathcal{C}'} x_{c} = 1 && \label{eq:convexity_CG_l1norm} \\
	& x_{c} \geq 0 && c \in \mathcal{C}' \label{eq:vars_x_CG_l1norm}\\  
	& \epsilon_{m}^{+}, \epsilon_{m}^{-} \geq 0 && m \in \mathcal{M}. \label{eq:vars_eps_CG_l1norm}
\end{align}
The objective function \eqref{eq:obj_func_CG_l1norm} minimizes the sum of errors arising from the differences between the estimated and observed probabilities.
Constraints \eqref{eq:constraints_CG_l1norm} account for the predictive errors generated in the estimation process. 
Constraint \eqref{eq:convexity_CG_l1norm} is equivalent to Constraint \eqref{eq:convexity_CG}. 
Finally, Constraints \eqref{eq:vars_x_CG_l1norm} and \eqref{eq:vars_eps_CG_l1norm} define the domain of the variables.

Let $\mu_m$ and $\gamma$ be the dual variables associated with constraints \eqref{eq:constraints_CG_l1norm} and \eqref{eq:convexity_CG_l1norm}, respectively.
A variable associated with a consumer $c \in \mathcal{C}$ generated by the CG subproblem is added to the RMP if its reduced cost satisfies
$\gamma + \sum_{m \in \mathcal{M}} a_{mc} \mu_m > 0$.
\citet{bertsimas2016data} highlight several advantages of applying CG to \eqref{eq:obj_func_CG_l1norm}--\eqref{eq:vars_eps_CG_l1norm} rather than to \eqref{eq:obj_func_CG}--\eqref{eq:vars_y_CG}.
In particular, the number of basic solutions obtained when solving the former is typically smaller than when solving the latter.
At optimality, this means fewer consumer types are associated with non-zero probabilities, which reduces the risk of overfitting.
Moreover, when CG is applied to \eqref{eq:obj_func_CG_l1norm}--\eqref{eq:vars_eps_CG_l1norm}, there is no need for an additional test to decide whether a column should be added to the RMP.
The initialization of the RMP associated with the minimization of predictive errors is also simpler: while an RMP derived from a maximum likelihood estimator requires an initial set of compatible consumers (at least one consumer type compatible with each historical transaction), an RMP related to error minimization can start empty.
This is due to the presence of variables $\epsilon_m^{+}$ and $\epsilon_m^{-}$, $m \in \mathcal{M}$, which act as artificial variables.

\subsection{Subproblem: dynamic programming}\label{sec:pricing_suproblem}

The CG subproblem arising from the process of estimating consumer types can be modeled as a \emph{generalized linear ordering problem} \citep[GLOP, ][]{MendezDiaz2019}.
The GLOP is NP-hard, as it generalizes the classical linear ordering problem, which is known to be NP-hard \citep{Garey1979}.
Since this paper focuses on the estimation of DCMs, we present, for the sake of conciseness, the formal definition of the GLOP together with the formulations used to solve the problem for both single- and multi-purchase consumers \citep{Vulcano2015, Lin2025} in Sections \ref{sec:customer_generation_single_purchase_case} and \ref{sec:customer_generation_multi_purchase_case} of Appendix \ref{sec:formulations_GLOP}, respectively. 
The remainder of this section is devoted to introducing a DP approach for modeling and solving the pricing subproblem arising in multi-purchase ranked-list DCMs.

For clarity, when describing the algorithm, we adopt the notation and terminology introduced in Section \ref{sec:model_basis}.
Further, we assume w.l.o.g. that the set of transactions $\mathcal{T}$ satisfies $|\mu_t| > 0$ for every $t \in \mathcal{T}$.
Transactions with null rewards can be safely ignored in the pricing subproblem, as they do not affect the computation of the reduced costs of the ranked lists.
We also define $\mathcal{T}_+ = \{t \in \mathcal{T}: \mu_t > 0\}$ and $\mathcal{T}_- = \{t \in \mathcal{T}: \mu_t < 0\}$.
For a transaction $t \in \mathcal{T}$, let $R_t = S_t \setminus B_t$.
Our algorithm is very flexible and can accommodate different aspects that may arise when defining consumer types, without requiring drastic changes in its implementation.
These extensions will be further discussed in Section \ref{sec:additional}.

\subsubsection{Label definition}

The first step in introducing the DP method is to define a label, or state, of the DP.
We define a label $L$ as a tuple $L = (v(L), N(L), \eta(L), p(L), \{\kappa_t(L): t\in\mathcal{T}\}, \mathtt{pred}(L))$.
The first component, $v(L) \in \mathcal{N}$, corresponds to the last product appended to the consumer's preference list.
The second component, $N(L) \subseteq \mathcal{N}$, denotes the set of products already appended to the consumer's preference list.
The third component, $\eta(L) \in \{1, \ldots, |\mathcal{N}|\}$, indicates the maximum number of purchases desired by the consumer.
The fourth component, $p(L)$, is the total profit collected by the consumer type.
The fifth component, $\{\kappa_t(L): t \in \mathcal{T}\}$, tracks the compatibility status of the consumer type with each transaction in $\mathcal{T}$.
A status of $-1$ is used to mark that the consumer type can no longer add or subtract the reward from transaction $t$.
Other values (between $0$ and $\eta(L) - 1$) represent intermediate states of the transaction, some of which allow rewards to be collected or subtracted. 
Finally, the last component, $\mathtt{pred}(L)$, stores the predecessor label of $L$ and is used to reconstruct the full consumer type $(\sigma, \eta)$ via backtracking.

Our algorithm is initialized with labels encoding empty consumer types for every $\eta = 1, \ldots, |\mathcal{N}|$, namely $L^{\eta}_0 = (-1, \emptyset, \eta, p_0, \{\kappa_{t\eta}^0: t\in\mathcal{T}\}, \mathtt{null})$ where:
\begin{align*}
	p_0 &= \sum_{t\in\mathcal{T}, B_t = \emptyset}\mu_t\\
	\kappa_{t\eta}^0 & = \begin{cases}
		0 & \text{if $|B_t|\leq\eta$}\\
		-1 & \text{if $|B_t| > \eta$}
	\end{cases} & t\in\mathcal{T}.
\end{align*}

\subsubsection{Label extension}

When a consumer type encoded by a label $L$ is extended to a product $j \in \mathcal{N} \setminus N(L)$, to form, say, a new label $L' \leftarrow L \oplus j$---where the operator $\oplus$ denotes appending product $j$ at the end of the consumer's preference list---we compute $\kappa_t(L')$ and $p(L')$ according to the pseudocode in Algorithm~\ref{pseudo:extension_sets}.
We then update the components of the new label as follows: 
$v(L') \leftarrow j$, 
$N(L') \leftarrow N(L) \cup \{j\}$, 
$\eta(L') \leftarrow \eta(L)$, 
and $\mathtt{pred}(L') \leftarrow L$.
\begin{algorithm}[!ht]
	\caption{Computation of $\{\kappa_t(L'):t\in\mathcal{T}\}, p(L')$ upon extending $L$ to a product $j\in\mathcal{N}\setminus N(L)$\label{pseudo:extension_sets}}
	\begin{algorithmic}[1]
		\State $\kappa_t(L') \leftarrow \kappa_t(L)$ for every $t\in\mathcal{T}$, $p(L')\leftarrow p(L)$
		\ForAll{$t\in\mathcal{T}$ such that $\kappa_t(L) \geq 0$}
		\If{$j\in R_t$}
		\If{$\kappa_t(L) \geq |B_t|$}
		\State $p(L')\leftarrow p(L') - \mu_t$
		\EndIf
		\State $\kappa_t(L')\leftarrow -1$ 
		\ElsIf{$j\in B_t$}
		\State $\kappa_t(L')\leftarrow \kappa_t(L) + 1$
		\If{$\kappa_t(L') = |B_t|$}
		\State $p(L')\leftarrow p(L') + \mu_t$\label{pseudo:extension:add_reward}
		\EndIf
		\If{$\kappa_t(L') = \eta(L)$}
		\State $\kappa_t(L')\leftarrow -1$
		\EndIf
		\EndIf
		\EndFor
		\State\Return $\{\kappa_t(L'):t\in\mathcal{T}\}, p(L')$
	\end{algorithmic}
\end{algorithm}

The interpretation of the values $\kappa_t(L)$ is as follows.
They track the number of elements from the bundle $B_t$ encountered during the successive extensions, up to the $(\eta - 1)$-\textit{th} coincidence.
When the value reaches $|B_t|$, the consumer type becomes compatible.
However, it may become incompatible again if the consumer type purchases an additional product from $R_t$.
When the number of purchases in a transaction reaches $\eta$, the transaction is marked as closed, and no further reward can be collected or subtracted.
Transactions from which no reward can be collected or subtracted are marked with the value $-1$.

\subsubsection{Dominance}

A key component of every DP algorithm is its ability to detect 
dominated
labels during execution and discard them from further extensions.
Intuitively, a label $L'$ is declared
\textit{dominated}
if there exists another label $L$ whose extensions can only lead to consumer types with higher accumulated profits.
When two labels $L$ and $L'$, possibly encoding different consumer types (different preference lists and/or different values of $\eta$), are extended over a common sequence of products $(j_1, j_2, \ldots, j_l)$, each may trigger additions or subtractions of the rewards $\mu_t$, depending on the values taken by $\kappa_t(L)$, $\kappa_t(L')$, $\eta(L)$, and $\eta(L')$.
We specify these different possibilities to introduce the associated \emph{dominance rules}.

\paragraph{Rewards that may be added to $L$ and subtracted from $L'$:}

A reward $\mu_t$ may be added to label $L$ and subtracted from $L'$ when the following three conditions hold simultaneously:
\begin{align}
	& 0 \leq \kappa_t(L) < |B_t|,\label{dom:cond:T2:1}\\
	&\kappa_t(L') \geq |B_t|,\label{dom:cond:T2:2}\\
	&\eta(L') - \kappa_t(L') + \kappa_t(L) > |B_t|.\label{dom:cond:T2:3}    
\end{align}
For every $s \in \{+, -\}$, we define $\mathcal{T}_s^2(L, L') = \{t\in \mathcal{T}_s: \text{$t, L$, and $L'$ satisfy conditions \eqref{dom:cond:T2:1}--\eqref{dom:cond:T2:3}}\}$.

\paragraph{Rewards that may be added to $L$ and not to $L'$:}

Let us consider the following three conditions involving two labels $L$, $L'$, and a transaction $t \in \mathcal{T}$:
\begin{align}
	&0\leq \kappa_t(L) < |B_t|,\label{dom:cond:T1a:1}\\
	&\kappa_t(L') = -1,\label{dom:cond:T1a:2}\\
	&\kappa_t(L') \geq |B_t|.\label{dom:cond:T1a:3}
\end{align}
A reward $\mu_t$ may be added to a label $L$ but not to a label $L'$ when condition \eqref{dom:cond:T1a:1} holds, together with either \eqref{dom:cond:T1a:2} or \eqref{dom:cond:T1a:3}.
We then define $\mathcal{T}_s^{1a}(L, L') = \{t\in\mathcal{T}_s\setminus\mathcal{T}_s^2(L, L'): \text{$t, L$, and $L'$ satisfy condition \eqref{dom:cond:T1a:1} and [\eqref{dom:cond:T1a:2} or \eqref{dom:cond:T1a:3}]}\}$.

\paragraph{Rewards that may be subtracted from $L'$ and not from $L$:}

Let us consider the following conditions involving two labels $L$ and $L'$, and a transaction $t \in \mathcal{T}$:
\begin{align}
	&\kappa_t(L')\geq |B_t|,\label{dom:cond:T1b:1}\\
	&\kappa_t(L) < |B_t|.\label{dom:cond:T1b:2}
\end{align}
For $s \in \{+, -\}$, the set of transactions in $\mathcal{T}_s$ for which the reward may be subtracted from $L'$ but not from $L$ is denoted $\mathcal{T}_s^{1b}(L, L') = \{t\in\mathcal{T}_s\setminus\mathcal{T}_s^2(L, L'): \text{$t, L, L'$ satisfy conditions \eqref{dom:cond:T1b:1}-\eqref{dom:cond:T1b:2}}\}$.

\paragraph{Dominance criterion:}

For every $s \in \{+, -\}$, we define $\mathcal{T}_s^{1}(L, L') = \mathcal{T}_s^{1a}(L, L')\cup \mathcal{T}_s^{1b}(L, L')$.
We further define:
\begin{equation}
	\Delta^s(L, L') = 2\sum_{t\in\mathcal{T}_s^2(L, L')}\mu_t + \sum_{t\in\mathcal{T}_s^{1}(L, L')}\mu_t.
\end{equation}
We say that a label $L$ dominates another label $L'$ if the following two conditions are met:
\begin{align}
	& p(L) + \Delta^-(L, L') - \Delta^+(L', L) \geq p(L'),\label{eq:dom:profit}\\
	& N(L)\subseteq N(L').\label{eq:dom:elem}
\end{align}
The following result formalizes the \textbf{dominance criterion} between two labels $L$ and $L'$.

\begin{proposition}\label{prop:dom}
	Let $L$ and $L'$ be two labels that satisfy conditions \eqref{eq:dom:profit}--\eqref{eq:dom:elem}.
	Then, every extension of $L'$ applied to $L$ leads to a consumer type with the same or higher total profit.
	In this case, we say that label $L$ dominates $L'$, denoted by $L \prec L'$.
\end{proposition}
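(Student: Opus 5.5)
Fix any sequence $E=(j_1,\ldots,j_l)$ of products in $\mathcal{N}\setminus N(L')$, so that $L'\oplus E$ is feasible. By \eqref{eq:dom:elem} it is also feasible for $L$. Write $\delta_t(L,E)\in\{-\mu_t,0,\mu_t\}$ for the net reward change that Algorithm~\ref{pseudo:extension_sets} produces on transaction $t$ along $E$ starting from $L$. Then
\begin{align*}
p(L\oplus E)=p(L)+\sum_{t\in\mathcal{T}}\delta_t(L,E),\qquad p(L'\oplus E)=p(L')+\sum_{t\in\mathcal{T}}\delta_t(L',E).
\end{align*}
It suffices to show $\sum_t \big(\delta_t(L,E)-\delta_t(L',E)\big)\ge \Delta^-(L,L')-\Delta^+(L',L)$. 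Combined with \eqref{eq:dom:profit}, this gives $p(L\oplus E)\ge p(L')+\sum_t\delta_t(L',E)=p(L'\oplus E)$. The proof is transaction by transaction: I bound $d_t:=\delta_t(L,E)-\delta_t(L',E)$ from below.

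First I classify what a single label can do on $t$ along any extension. If $\kappa_t=-1$, nothing happens and $\delta_t=0$. If $0\le\kappa_t<|B_t|$, the reward can only be added: the subtraction in the algorithm requires $\kappa_t\ge|B_t|$, which is reached only right after the addition and then at most cancels it. So $\delta_t\in\{0,\mu_t\}$. If $\kappa_t\ge|B_t|$, the reward has already been counted in $p$ and can only be subtracted, so $\delta_t\in\{-\mu_t,0\}$.

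Crossing these cases gives $d_t\in[-2|\mu_t|,2|\mu_t|]$, with the following bounds.
\begin{itemize}
\item \emph{$t\in\mathcal{T}_-$.} Here $d_t<0$ requires $L$ to add or $L'$ not to subtract. So $d_t\ge 2\mu_t$ if $L$ can add and $L'$ can subtract ($\mathcal{T}^2_-(L,L')$), $d_t\ge\mu_t$ if only one of the two harmful moves is possible ($\mathcal{T}^{1}_-(L,L')$), and $d_t\ge 0$ otherwise.
\item \emph{$t\in\mathcal{T}_+$.} Here $d_t<0$ requires $L$ to subtract or $L'$ to add. These are exactly the situations described by the sets with roles swapped, giving $d_t\ge-2\mu_t$ on $\mathcal{T}^2_+(L',L)$, $d_t\ge-\mu_t$ on $\mathcal{T}^1_+(L',L)$, and $d_t\ge 0$ otherwise.
\end{itemize}
Summing these bounds yields exactly $\Delta^-(L,L')-\Delta^+(L',L)$, which proves the claim.

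The main obstacle is verifying that the sets $\mathcal{T}^2$, $\mathcal{T}^{1a}$, $\mathcal{T}^{1b}$ really cover all cases in which both harmful events can occur simultaneously along a common $E$. In particular, I must justify condition \eqref{dom:cond:T2:3}. The two events can coexist only if $L$ still has room to collect the remaining $|B_t|-\kappa_t(L)$ bundle items. At the same time, $L'$ must not have closed the transaction, so its purchase count stays below $\eta(L')$ until a product of $R_t$ appears. I would argue that when \eqref{dom:cond:T2:3} fails, any $E$ completing $B_t$ for $L$ forces $L'$ to reach $\eta(L')$ bundle purchases before an $R_t$ item. Then $\kappa_t(L')$ becomes $-1$, so $L'$ cannot lose the reward, and such $t$ correctly falls into $\mathcal{T}^1$. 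This requires a careful count of how many $B_t$ items remain outside $N(L')$ relative to $N(L)$. It uses the fact that the items of $B_t$ counted in $\kappa_t(L)$ lie in $N(L)\subseteq N(L')$, together with the possibly different $\eta$ values, and I expect this counting to be the delicate part.
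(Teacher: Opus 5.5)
\textbf{There is a genuine gap, and the statement as written cannot be proved.}

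Your reduction to a per-transaction lower bound on $d_t=\delta_t(L,E)-\delta_t(L',E)$ is correct, and so is your classification of what a single label can do on $t$. Together they make the paper's informal argument precise.

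The gap is the step asserting that the harmful situations are exactly those recorded in $\mathcal{T}^{1a}$, $\mathcal{T}^{1b}$, $\mathcal{T}^{2}$ (and, on $\mathcal{T}_-$, that $d_t\ge 0$ otherwise). Each of these sets requires one of the two labels to have $\kappa_t=-1$ or $\kappa_t\ge|B_t|$. So none of them contains a transaction on which both labels are open with the bundle incomplete, $0\le\kappa_t(L),\kappa_t(L')<|B_t|$. On such a $t$ the labels can still behave differently along a common $E$, in two ways.
\begin{itemize}
\item[(i)] Suppose $\kappa_t(L)<\kappa_t(L')$. Then some item of $B_t$ lies in $N(L')\setminus N(L)$ and can never occur in $E$. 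So $L'$ may complete the bundle while $L$ never can, and for $t\in\mathcal{T}_+$ this gives $d_t=-\mu_t<0$.
\item[(ii)] Suppose $\kappa_t(L)=\kappa_t(L')$. Both labels complete the bundle at the same position of $E$. But if $\eta(L)\neq\eta(L')$ and $|B_t|=\min\{\eta(L),\eta(L')\}$, only the label with the larger $\eta$ stays exposed to a later item of $R_t$. This gives $d_t=-\mu_t<0$ on $\mathcal{T}_+$ (if $\eta(L')<\eta(L)$) or $d_t=\mu_t<0$ on $\mathcal{T}_-$ (if $\eta(L)<\eta(L')$).
\end{itemize}
Nothing in $\Delta^-(L,L')-\Delta^+(L',L)$ pays for either case.

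By contrast, the point you flagged as the main obstacle is harmless. Under \eqref{eq:dom:elem}, $\kappa_t(L')\ge|B_t|$ gives $B_t\subseteq N(L')$. Meanwhile $\kappa_t(L)<|B_t|$ leaves some $b\in B_t\setminus N(L)\subseteq N(L')$, which cannot appear in $E$. So on $\mathcal{T}^2$ the label $L$ can never add. Condition \eqref{dom:cond:T2:3} and the factor $2$ are merely conservative, and no counting with $\eta(L')$ is needed.

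\textbf{Counterexample.} Take a single transaction with $S_t=B_t=\{b_1,b_2\}$ and $\mu_t=1$, and let $L=L_0^2$, $L'=L_0^2\oplus b_1$.
\begin{itemize}
\item $p(L)=p(L')=0$, $\kappa_t(L)=0$ and $\kappa_t(L')=1$.
\item $t$ lies in none of the sets entering $\Delta^-(L,L')$ or $\Delta^+(L',L)$.
\item $N(L)=\emptyset\subseteq N(L')$.
\end{itemize}
So \eqref{eq:dom:profit}--\eqref{eq:dom:elem} hold. Yet $E=(b_2)$ gives $p(L\oplus E)=0<1=p(L'\oplus E)$.

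Mechanism (ii) fails in the same way. Take $S_t=\{b,r\}$, $B_t=\{b\}$, $\mu_t=1$, $L=L_0^2$, $L'=L_0^1$ and $E=(b,r)$; then $p(L\oplus E)=0<1=p(L'\oplus E)$.

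In the first instance the labeling algorithm really does discard both children of $L_0^2$. It never builds the profit-$1$ type $((b_1,b_2),2)$, so the problem is not just an artifact of how the proposition is phrased.

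\textbf{What is missing} is an extra charge in the dominance criterion:
\begin{itemize}
\item include $\mu_t$ in $\Delta^+(L',L)$ for $t\in\mathcal{T}_+$ with both labels open and incomplete and either $\kappa_t(L)<\kappa_t(L')$ or $|B_t|=\eta(L')<\eta(L)$;
\item include $\mu_t$ in $\Delta^-(L,L')$ for $t\in\mathcal{T}_-$ with both labels open and incomplete and $|B_t|=\eta(L)<\eta(L')$.
\end{itemize}
With this amended criterion, every other pair of states is either already covered by the existing sets or forces $d_t\ge 0$. Your transaction-by-transaction argument then goes through.
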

\proof{Proof}
From the definition of $\Delta^s(L, L')$ for $s \in \{+, -\}$, it follows that the profits of the labels $L$ and $L'$ satisfy $p(L) \geq p(L')$.
The term $\Delta^-(L, L')$ represents the maximum difference in favor of $L'$ that can arise when negative rewards are added to $L$ but not to $L'$, or subtracted from $L'$ but not from $L$.
Similarly, $\Delta^+(L', L)$ represents the maximum difference, again in favor of $L'$, obtained when positive rewards are added to $L'$ but not to $L$, or subtracted from $L$ but not from $L'$.
Equation \eqref{eq:dom:profit} guarantees that these differences in favor of $L'$ are insufficient to overturn the inequality.
Finally, equation \eqref{eq:dom:elem} ensures that any extension feasible for $L'$ is also feasible for $L$.
\qed
\endproof

\subsubsection{Final algorithm}

In Algorithm \ref{pseudo:labeling_algorithm}, we provide a high-level description of the proposed DP.
It consists of three main steps:
an initialization step (Line~\ref{alg:initialization}), 
a dominance check (Line~\ref{alg:dominance}), and 
an extension step (Line~\ref{alg:extension}).
\begin{algorithm}[!ht]
	\caption{Labeling algorithm \label{pseudo:labeling_algorithm}}
	\begin{algorithmic}[1]
		\State $\mathcal{Q} = \{L_0^{\eta}: \eta = 1\ldots |\mathcal{N}|\}$, $\mathcal{U}\leftarrow\emptyset$, $L^*\leftarrow \mathtt{null}$ \label{alg:initialization}
		\While{$\mathcal{Q}\neq\emptyset$}
		\State Let $L$ be a label in $\mathcal{Q}$, and make $\mathcal{Q}\leftarrow\mathcal{Q}\setminus\{L\}$ 
		\If{$\nexists\,\, L'\in\mathcal{U}$ s.t. $L'\prec L$} \label{alg:dominance}
		\State Set $\mathcal{U}\leftarrow\mathcal{U}\cup\{L\}$
		\ForAll{$j \in \mathcal{N}\setminus N(L)$}\label{alg:2:elemext}
		\State{Let $L'\leftarrow L\oplus j$} \label{alg:extension}
		\If{$L^* = \mathtt{null}$ \textbf{or} $p(L') > p(L^*)$}
		\State Set $L^*\leftarrow L^{\prime}$ \label{pseudo:collect_best_profit}
		\EndIf
		\State Set $\mathcal{Q} \leftarrow \mathcal{Q} \cup \{L'\}$\label{pseudo:add_to_L}
		\EndFor
		\EndIf
		\EndWhile
		\State\Return $L^*$ \label{pseudo:return_best_L}
	\end{algorithmic}
\end{algorithm}

\section{Additional attributes and particular cases}\label{sec:additional}

We devote this section to introducing several relevant attributes and particular cases of multi-purchase ranked-list DCMs.
These cases can be addressed using the same DP approach introduced in Section~\ref{sec:pricing_suproblem}, with only simple and often straightforward modifications to the base algorithm.

\subsection{Addressing elementarity in the single-purchase case}\label{sec:elementarity}

We refer to the requirement that the preference list of a consumer type must not contain repeated products as \textit{elementarity}, following the terminology commonly used in vehicle routing \citep{Feillet2004exact, irnich2005shortest}.
Enforcing elementarity comes at the expense of a worst-case complexity that is exponential in the cardinality of $\mathcal{N}$, due to the dominance condition \eqref{eq:dom:elem}.
We will show that, for the single-purchase setting, the previous algorithm can be trivially simplified to address the elementarity requirement.
Let us temporarily relax the elementarity condition and assume that Algorithm~\ref{pseudo:labeling_algorithm} is modified as follows:
\begin{enumerate}
	\item Algorithm~\ref{pseudo:extension_sets} is applied without modification when a label $L$ is extended to a product $j \in N(L)$.
	\item The condition at Line~\ref{alg:2:elemext} of Algorithm~\ref{pseudo:labeling_algorithm} is extended to products $j \in N(L)$.
	\item Condition~\eqref{eq:dom:elem} is ignored when assessing dominance.
\end{enumerate}

For a label $L$ and a sequence of products $\sigma = (j_1, \ldots, j_k)$, possibly containing repetitions, let $L \oplus \sigma$ denote the label obtained by sequentially appending the products in $\sigma$ to the consumer type encoded by $L$.  
The following result formalizes the dominance relationship between two labels $L$ and $L'$.

\begin{proposition}\label{prop:dom:sp}
	Let $L$ and $L'$ be two labels satisfying condition \eqref{eq:dom:profit}.
	Then every feasible extension $\sigma = (j_1, j_2, \ldots, j_k)$ of $L'$ is also feasible for $L$ and their extensions using $\sigma$ are such that $p(L' \oplus \sigma) \leq p(L\oplus\sigma)$.
\end{proposition}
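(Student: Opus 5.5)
Since elementarity is relaxed, every sequence of products $\sigma$ is a feasible extension of every label. Feasibility is therefore immediate, and the content of the proposition is the profit inequality $p(L'\oplus\sigma)\le p(L\oplus\sigma)$.

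The plan is to write the final profit as the starting profit plus the net reward changes along $\sigma$:
\[
p(L\oplus\sigma)=p(L)+\sum_{t\in\mathcal{T}}\mu_t\,\delta_t(L,\sigma),
\]
where $\delta_t(L,\sigma)\in\{-1,0,+1\}$ records what Algorithm~\ref{pseudo:extension_sets} does to transaction $t$ along $\sigma$. The key observation is that in the single-purchase setting $\eta=1$ and $|B_t|\le 1$. Consequently each $\kappa_t$ changes state at most once: it starts at $0$ or $-1$, and the first product of $\sigma$ lying in $S_t$ sets it to $-1$. The reward $+\mu_t$ is collected exactly when that first hit lies in $B_t$, and $-\mu_t$ is subtracted exactly when that first hit lies in $R_t$ and $\kappa_t\ge|B_t|$. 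So for each $t$ the increment is a deterministic function of the current status and of whether the first element of $\sigma$ in $S_t$ is in $B_t$, in $R_t$, or absent. Repeated products do no harm, because statuses only move toward $-1$.

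Next, compare $L$ and $L'$ transaction by transaction and bound $\mu_t\bigl(\delta_t(L',\sigma)-\delta_t(L,\sigma)\bigr)$, the gain of $L'$ over $L$ on $t$. I will enumerate the status pairs $(\kappa_t(L),\kappa_t(L'))$.
\begin{itemize}
\item If both statuses are $-1$, or both are equal and nonnegative, the increments coincide and the difference is $0$.
\item If $L$ can still collect $+\mu_t$ while $L'$ has already collected it (so $\kappa_t(L')\ge|B_t|$), then $L'$ may later lose $\mu_t$ when $L$ gains it. This gives a swing of $2|\mu_t|$, which is the $\mathcal{T}^2$ case; condition \eqref{dom:cond:T2:3} holds automatically here since $\eta=1$.
\item If exactly one of the two labels can still gain or lose, the swing is $|\mu_t|$. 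These are the $\mathcal{T}^{1a}$ and $\mathcal{T}^{1b}$ cases.
\end{itemize}
In each case I will check that the worst case for $L$ is bounded by the corresponding term of $\Delta^-(L,L')$ when $\mu_t<0$, and of $\Delta^+(L',L)$ when $\mu_t>0$. For $\mu_t<0$, the bad events for $L$ are gaining $\mu_t$ or $L'$ losing it, and they are captured by $\mathcal{T}^{1a}_-(L,L')$, $\mathcal{T}^{1b}_-(L,L')$ and $\mathcal{T}^2_-(L,L')$. For $\mu_t>0$, they are mirrored by the sets with the roles of $L$ and $L'$ swapped. Because $\mu_t<0$ in $\Delta^-$, that term is a negative quantity bounding the worst loss.

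Summing these bounds over $t$ gives
\[
p(L'\oplus\sigma)-p(L\oplus\sigma)\le p(L')-p(L)-\Delta^-(L,L')+\Delta^+(L',L),
\]
and the right-hand side is $\le 0$ by \eqref{eq:dom:profit}. The main obstacle is the case enumeration itself. I must verify that every status pair producing a nonzero worst-case difference falls into exactly one of the sets $\mathcal{T}^1_s$ or $\mathcal{T}^2_s$ with the correct sign and multiplicity; the sets $\mathcal{T}^{1}_s$ exclude $\mathcal{T}^2_s$, which prevents double counting. I must also confirm that no transaction contributes in a direction unfavorable to $L$ without appearing in the appropriate $\Delta$, in particular transactions with $B_t=\emptyset$, whose reward is included from initialization. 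I will first settle this by tabulating the three possible statuses for $\eta=1$, namely $-1$, $0$ with $|B_t|=1$, and $0$ with $B_t=\emptyset$ (already rewarded), for each label, against the three possible first-hit outcomes.
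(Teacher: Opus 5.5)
Your argument is correct for the reading you adopt (every label has $\eta=1$), and it is more explicit than the paper's. The paper handles feasibility as you do, then settles the profit inequality in one sentence by appeal to the reasoning of Proposition~\ref{prop:dom}. That reasoning is the informal claim that $\Delta^-(L,L')$ and $\Delta^+(L',L)$ bound the swing in favor of $L'$ produced by any common extension, with no restriction on $\eta$. You instead write $p(L\oplus\sigma)$ as $p(L)$ plus per-transaction increments. You then use $\eta=1$ to make each status a one-shot variable in $\{0,-1\}$, whose fate is fixed by the first element of $S_t$ in $\sigma$. Since that element is the same for both labels, equal statuses give equal increments, and everything reduces to a finite table of status pairs.

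The restriction to $\eta=1$ is not a loss of generality; it is what makes the statement true, and it is all the next proposition uses. For $\eta\geq 2$, the sets $\mathcal{T}^1_s,\mathcal{T}^2_s$ ignore transactions that both labels can still collect at different stages. For example, take a single transaction with $S_t=\{b_1,b_2,r\}$, $B_t=\{b_1,b_2\}$ and $\mu_t>0$, and let $L=L^2_0$ and $L'=L\oplus b_1$, so $\kappa_t(L)=0$ and $\kappa_t(L')=1$. Then $p(L)=p(L')=0$ and $\Delta^-(L,L')=\Delta^+(L',L)=0$, so \eqref{eq:dom:profit} holds, and so does \eqref{eq:dom:elem}. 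Yet $\sigma=(b_2)$ gives $p(L'\oplus\sigma)=\mu_t>0=p(L\oplus\sigma)$. So the paper's appeal to Proposition~\ref{prop:dom} cannot deliver the unrestricted statement, whereas your accounting proves the single-purchase case rigorously.

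One remark in your sketch is wrong, though harmlessly so: with $\eta=1$ the $\mathcal{T}^2$ case never occurs.
\begin{itemize}
\item A nonnegative status with $\kappa_t\geq|B_t|$ arises only for $B_t=\emptyset$, and there $0\leq\kappa_t<|B_t|$ is impossible.
\item For $|B_t|=1$, reaching $\kappa_t=1=\eta$ immediately resets the status to $-1$.
\end{itemize}
Hence \eqref{dom:cond:T2:1} and \eqref{dom:cond:T2:2} never hold together, and $\mathcal{T}^2_s=\emptyset$. Moreover, given those two conditions, \eqref{dom:cond:T2:3} would fail when $\eta(L')=1$, rather than hold automatically. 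The situation you describe, where one label can still collect and the other has already collected, is the status pair $(0,-1)$. It is charged once, through \eqref{dom:cond:T1a:2}.

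Write $X$ for $L'$ if $\mu_t>0$ and for $L$ if $\mu_t<0$, and $Y$ for the other label. Your planned tabulation will show that $X$ can end ahead of $Y$ on $t$ in only two cases:
\begin{itemize}
\item $X$ is open with $|B_t|=1$ and $Y$ is closed, so $t\in\mathcal{T}^{1a}_s(X,Y)$;
\item $X$ is closed and $Y$ is open with $B_t=\emptyset$, so $t\in\mathcal{T}^{1b}_s(X,Y)$.
\end{itemize}
In each case the margin is at most $|\mu_t|$, which is exactly what \eqref{eq:dom:profit} pays for.
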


\begin{proof}
The feasibility of label $L \oplus \sigma$ is obvious, since once the elementarity requirement is dropped, appending any sequence to the end of $L$ or $L'$ remains feasible.
Moreover, as both $L$ and $L'$ satisfy condition~\eqref{eq:dom:profit}, the same reasoning used in the proof of Proposition~\ref{prop:dom} can be applied to $L$ and $L'$ to conclude that $p(L' \oplus \sigma) \leq p(L \oplus \sigma)$.
\end{proof}

\begin{proposition}
	Let $L$ be a label such that $\eta(L) = 1$.
	Then, for every $j\in N(L)$, $L'=L\oplus j$ is dominated by $L$.
\end{proposition}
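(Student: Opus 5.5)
Since $\eta(L)=1$, I first pin down which values $\kappa_t(L)$ can take. At initialization, $\kappa_t = 0$ if $|B_t|\le 1$ and $\kappa_t=-1$ otherwise. In Algorithm~\ref{pseudo:extension_sets}, any increment from $0$ gives $\kappa_t=1=\eta(L)$, which is immediately reset to $-1$. So every $\kappa_t(L)\in\{-1,0\}$. Moreover, $\kappa_t(L)=0$ forces $|B_t|\le 1$, and if $|B_t|=1$ then no product of $B_t$ has yet been appended.

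The central step is the claim that appending $j\in N(L)$ changes nothing: $\kappa_t(L')=\kappa_t(L)$ for all $t$ and $p(L')=p(L)$. I would prove this transaction by transaction. Take $t$ with $\kappa_t(L)=0$, so that $j$ was appended earlier, at some label $\tilde L$ on the predecessor chain of $L$. Since $\kappa_t$ is never raised from $-1$, we have $\kappa_t(\tilde L)=0$. At that earlier step, if $j\in R_t$ then $\kappa_t$ would have become $-1$, and if $j\in B_t$ then $\kappa_t$ would have become $1=\eta$ and then been reset to $-1$. Both outcomes contradict $\kappa_t(L)=0$. Hence $j\notin S_t$, and the loop of Algorithm~\ref{pseudo:extension_sets} leaves $\kappa_t$ and $p$ unchanged for this $t$. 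Transactions with $\kappa_t(L)=-1$ are skipped by the loop anyway. The remaining components agree as well: $\eta(L')=\eta(L)$ and $N(L')=N(L)\cup\{j\}=N(L)$.

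With the labels identical in $\kappa$, $p$, $\eta$ and $N$, I then verify condition~\eqref{eq:dom:profit} for the pair $(L,L')$. For $\mathcal{T}^2$ and $\mathcal{T}^{1a}$, condition~\eqref{dom:cond:T2:1} (respectively \eqref{dom:cond:T1a:1}) needs $\kappa_t(L)<|B_t|$, whereas condition~\eqref{dom:cond:T2:2} (respectively \eqref{dom:cond:T1a:3}) needs $\kappa_t(L')\ge|B_t|$. Equal $\kappa$ values make these incompatible. The alternative \eqref{dom:cond:T1a:2} requires $\kappa_t(L')=-1$ together with $\kappa_t(L)\ge 0$, which is again incompatible. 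The same incompatibility rules out $\mathcal{T}^{1b}$. All the sets are therefore empty, $\Delta^-(L,L')=\Delta^+(L',L)=0$, and \eqref{eq:dom:profit} reduces to $p(L)\ge p(L')$, which holds with equality. This argument uses only \eqref{eq:dom:profit}, so it fits the relaxed setting of Proposition~\ref{prop:dom:sp}. It is also consistent with the strict setting, where \eqref{eq:dom:elem} holds because $N(L)=N(L')$. The dominance of $L$ over $L'$ then follows from Proposition~\ref{prop:dom:sp}.

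The main obstacle is the invariance step. It relies on an induction over the extension history, showing that $\kappa_t$ is monotone: once it is $-1$ it stays $-1$. It also requires the reset at $\kappa_t=\eta=1$ to occur in the same step as the reward is added. I will check both facts carefully against the order of lines in Algorithm~\ref{pseudo:extension_sets}.
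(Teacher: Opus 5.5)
Your proof is correct and follows the same route as the paper. In both, the first time $j$ is appended it closes every transaction with $j\in S_t$ (setting $\kappa_t=-1$), so appending $j$ again leaves all $\kappa_t$ and $p$ unchanged, and $L'$ is dominated by its parent via Proposition~\ref{prop:dom:sp}. Your version adds detail the paper leaves implicit: the invariant $\kappa_t\in\{-1,0\}$, the irreversibility of $\kappa_t=-1$, and the explicit check that $\mathcal{T}^{2}_s$, $\mathcal{T}^{1a}_s$ and $\mathcal{T}^{1b}_s$ are all empty.
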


\begin{proof}
From Algorithm~\ref{pseudo:extension_sets}, we observe that the first time a product $j \in S_t$ for a transaction $t \in \mathcal{T}$ is added to the consumer type' preference list, it triggers either the addition of the reward $\mu_t$ (if $B_t = \{j\}$) or its subtraction (if $j \notin B_t$).
In both cases, the value of $\kappa_t$ becomes $-1$.  
When the product $j$ is added a second time, all transactions $t \in \mathcal{T}$ with $j \in S_t$ are also associated with $\kappa_t = -1$.
It follows that the resulting label $L'$ is trivially dominated by its parent label $\mathtt{pred}(L')$, since they satisfy the conditions of Proposition~\ref{prop:dom:sp}.
\end{proof}

These two results imply that one can simply drop the elementarity requirement in the dominance rule and prevent any extension of a label $L$ to products already in $N(L)$.
Unfortunately, this result does not extend to the multi-purchase case, since transactions may not close (i.e., change from $\kappa_t(L) \geq 0$ to $\kappa_t(L') = -1$) upon adding an item once to the preference list of the consumer type.

\subsection{Consumer types with small consideration sets}\label{sec:limited_lists}

The proposed modeling approach can also handle the setting in which all consumer types must have a preference list of size at most $q$, with $1 \leq q < n$.
This condition is relevant in situations where consumers have small consideration sets, such as environments in which brand reputation or habitual preferences lead them to focus on a limited subset of well-known or trusted alternatives \citep{Feldman2019}.

The adaptations required apply to both the extension and dominance rules, as described next.
We add an additional field to the label definition, namely $\ell(L)$, which is initially set to zero at the initialization step.
If $\ell(L) = q$, then no further extensions are allowed.
Otherwise, for the resulting label $L' = L \oplus j$, we set $\ell(L') \leftarrow \ell(L) + 1$.
The dominance rule in Proposition~\ref{prop:dom} must also be extended by including the following necessary condition for a label $L$ to dominate another label $L'$:
$\ell(L) \leq \ell(L')$.
Since the list size is limited, shorter lists are favored, as they allow more potential extensions.

\section{Accelerating the dynamic programming}\label{sec:acceleration_techiniques}

We devote this section to introducing several techniques aimed at accelerating the execution of the DP.
We consider completion bounds (Section \ref{sec:completion_bounds}), unreachable products (Section \ref{sec:unreachable_prodcts}), and heuristic pricing (Section \ref{sec:heuristic_pricing}).

\subsection{Completion bounds}\label{sec:completion_bounds}

Despite the application of dominance rules in our algorithm, the number of generated labels can still be very large.  
To reduce the number of extensions performed by the labeling algorithm, and hence speed up the method, we apply completion bounds to discard non-promising extensions, a common feature of modern solvers from the vehicle routing literature \citep{Baldacci2008, irnich2010path, martinelli2014efficient, pecin2017improved}.  
Before extending a label $L$ to products $j \in \mathcal{N}\setminus N(L)$, we check whether the rewards associated with the transactions in  
$\mathcal{T}_+'(L) = \{t \in \mathcal{T}_+ : 0 \leq \kappa_t(L) < |B_t|\}$ and  
$\mathcal{T}_-'(L) = \{t \in \mathcal{T}_- : \kappa_t(L) \geq |B_t|\}$  
are sufficient to produce a consumer type better than the incumbent.  

Let $\mathtt{LB}$ denote the total profit collected by the best consumer type constructed so far by the labeling algorithm.  
If
\begin{equation}
	p(L) + \sum_{t \in \mathcal{T}_+'(L)} \mu_t - \sum_{t \in \mathcal{T}_-'(L)} \mu_t \leq \mathtt{LB},
\end{equation}
then label $L$ can be discarded, since any further extensions will no longer collect sufficient rewards to generate a consumer type with a profit higher than $\mathtt{LB}$.

\subsection{Unreachable products}\label{sec:unreachable_prodcts}

Assume for a moment that we can determine, for a given label $L$ and product $j \in \mathcal{N} \setminus N(L)$, that appending the product $j$ to $L$ at any point in the future would only lead to a worsening of the total profit.  
In this case, we can mark product $j$ as unreachable and prevent any future extension of $L$ from appending $j$ to its associated preference list.  
Let us denote
$\mathcal{T}_1(L, j) = \{t \in \mathcal{T} : j \in B_t, \, 0 \leq \kappa_t(L) < |B_t|\}$ and  
$\mathcal{T}_2(L, j) = \{t \in \mathcal{T}_- : j \in R_t, \, \kappa_t(L) \geq |B_t|\}$.
If $\mathcal{T}_1(L, j) = \mathcal{T}_2(L, j) = \emptyset$, then product $j$ can indeed be marked as unreachable from $L$.
Let $U(L) = \{j\in\mathcal{N}\setminus N(L): \mathcal{T}_1(L, j) = \mathcal{T}_2(L, j) = \emptyset\}$.
Note that a product marked as unreachable never loses this status, and therefore any label extension $L \oplus j$ satisfies $U(L) \subseteq U(L \oplus j)$.
We then use the unreachable set to:  
(i) prevent any extensions to products in $N(L) \cup U(L)$; and  
(ii) strengthen pairwise dominance by replacing equation~\eqref{eq:dom:elem} with  
\begin{equation}
	N(L) \cup U(L) \subseteq N(L') \cup U(L').
\end{equation}  
This procedure follows similar principles to the concept of \textit{unreachable nodes} from the vehicle routing literature \citep{Feillet2004exact}.

\subsection{Heuristic pricing}\label{sec:heuristic_pricing}

Despite the efficiency of the proposed DP algorithm and the speedups provided by the acceleration techniques described so far, the use of exact DP algorithms may still be time-consuming.
Since exact methods are only required in the context of CG to prove the optimality of a given solution, fast and effective heuristics can be employed to generate columns in the initial iterations of the CG algorithm.
In fact, depending on their effectiveness, exact methods may not even be needed when applying CG to the estimation of DCMs, as suggested by \citet{bertsimas2016data}.
The purpose of using pricing heuristics is to reduce the number of calls to exact DP algorithms, often leading to substantial decreases in total computational time.  

Several heuristic strategies for solving the GLOP can be applied.
For instance, \citet{MendezDiaz2019} describe two greedy heuristics and three MILP-based heuristics.
Another common approach is to adapt DP algorithms, either by relaxing dominance rules, heuristically reducing the network size, or limiting the number of labels associated with specific items and states (a strategy known as bucket pruning), a practice widely adopted in the vehicle routing literature \citep{Costa2019}.
In our approach, we specifically implemented DP-based bucket pruning heuristics that restrict the number of labels per item and state to 2 or 5.
Preliminary results show that this method achieves a better trade-off between solution quality and computational time compared to the heuristics of \citet{MendezDiaz2019}.
Moreover, the DP-based heuristic builds directly on the exact DP implementation, eliminating the need for additional development effort while still generating good-quality solutions.  

Our experiments also highlight differences between estimation frameworks.
Under maximum likelihood estimation, heuristic pricing did not considerably benefit the CG algorithm, likely because the re-optimization step is already very fast due to the EM method.
In contrast, in the context of $\ell_1$-error estimation, heuristic pricing proved highly beneficial: the CG algorithm starts with heuristic pricing until no further consumer types can be generated, and only then calls the exact DP.
Nevertheless, heuristic pricing remains useful under maximum likelihood estimation to generate LBs for the completion bounds.
In some cases, the exact DP requires significant time before producing any valid LB, during which the cardinality of $\mathcal{Q}$ may grow excessively and increase CPU time.
To circumvent this, we run heuristic pricing prior to the CG step to generate valid LBs from the early stages of the labeling algorithm.

\section{Computational experiments} \label{sec:computational_experiments}

We have conducted comprehensive computational experiments to assess the performance of the proposed methodology. The algorithms were implemented in C++20 and compiled with GCC 14.2.0 using CMake 3.29.5.
To maintain flexibility in the implementation and to facilitate reproducibility, mathematical formulations, when required, were implemented using the MathOpt interface from Google OR-Tools.
For benchmarking purposes, we used Gurobi 11 as the linear and MILP solver, as it represents a state-of-the-art commercial solver and provides a strong baseline for comparison.
However, open-source solvers can be employed simply by specifying them via the command line.
All experiments were conducted on a computing grid composed of 11 Dell PowerEdge R740 machines, each equipped with 512 GB of RAM and two Intel Xeon Gold 6258R CPUs @ 2.70GHz (56 cores total per machine).

The section is organized as follows.
In Section~\ref{sec:instances}, we provide a brief description of the instances considered in our experiments.
In Section~\ref{sec:comp_DP_MIP}, we evaluate the performance of the proposed DP algorithm against a MILP solver.
In Section~\ref{sec:benchmarking_lin}, we compare the performance of our approach with the method proposed by \citet{Lin2025}, which addresses a problem with similar characteristics.
In Section~\ref{sec:estimation_results_MP}, we assess the estimation accuracy and revenue generation capabilities of the multi-purchase model.
In Section~\ref{sec:small_consideration}, we analyze the performance of our approach when dealing with small consideration sets.
Finally, in Section~\ref{sec:real_semisynthetic}, we evaluate the estimation performance of the method on real and semi-synthetic instances.

{
	\color{black}
	
	\subsection{Considered datasets}\label{sec:instances}
	
	In this section, we briefly describe the datasets used in our computational experiments.
	We consider both synthetic and real preference-based datasets.
    A more detailed description of the instances can be found in Appendix \ref{sec:instances_appendix}, and all instances used in the experiments are publicly available for reproducibility purposes at 
	\url{https://github.com/lucianoccosta/single-multi-purchase-cg}.
    
	\subsubsection{Synthetic datasets}\label{sec:syntetic_dataset}
	
	We consider four families of synthetic datasets, each designed to capture different aspects of consumer choice behavior.
	
	\begin{description}
		
		\item[\texttt{SinglePurchase}]
		This dataset, introduced by \citet{Berbeglia2022}, focuses on single-purchase behavior and is divided into \emph{random} (10x1) and \emph{structured} (5x3) instances.
		Random instances assume non-hierarchical product preferences, whereas structured instances impose a natural ordering among products (e.g., price levels).
		Across both categories, the datasets vary with respect to
		the number of products ($|\mathcal{N}|$),
		the number of consumer types in the ground truth (\textit{List}),
		market-share levels (\textit{Share}),
		and number of periods ($\mathcal{P}$).
		These instances serve as a benchmark for evaluating estimation performance in classical single-purchase ranked-list choice models.
		
		\item[\texttt{MultiPurchaseProbit}]
		This dataset is based on the synthetic instances proposed by \citet{Lin2025} and captures consumer behavior where multiple products may be purchased in the same transaction.
		\tocheck{Transactions are generated from consumers whose utilities are sampled from a normal distribution (\textit{Probit model}) and who select at most $\eta = 2$ products from a randomly generated offer set.}
		The instances vary primarily with respect to the number of products ($|\mathcal{N}|$).
		A fixed number of in-sample and out-of-sample transactions is considered for all instances.
		By using this dataset, we are able to compare our method with that of \citet{Lin2025}.
		
		\item[\texttt{MultiPurchaseRankedList}]
		To further investigate the impact of explicitly modeling multi-purchase behavior, we generate an additional family of synthetic instances that allow consumers to purchase more than two products ($\eta \geq 2$) in a single transaction.
		These instances are defined through ground-truth consumer types characterized by preference rankings and heterogeneous purchase capacities.
		They vary with respect to the number of products ($|\mathcal{N}|$),
		market-share levels (\textit{Share}),
		number of consumer types (\textit{List}),
		maximum purchase capacity ($\eta$), and
		number of periods ($\mathcal{P}$).
		This dataset is designed to assess how increasing the flexibility of multi-purchase behavior affects estimation accuracy and revenue performance.
		
		\item[\texttt{LimitedList}]
		This dataset, based on the instance-generation procedure of \citet{Feldman2019}, focuses on settings where consumers have small consideration sets.
		Preference lists are constructed from vertically differentiated products and are subject to random removals and local perturbations, yielding limited-size ranked lists.
		The instances vary according to the number of consumer types, the strength of substitution effects, and the number of observed transactions.
		This dataset is used to evaluate the impact of explicitly constraining preference-list sizes during model estimation.
		
	\end{description}
	
	\subsubsection{Datasets derived from real preferences}
	
	In addition to synthetic data, we also consider two datasets derived from real consumer preferences.
	\begin{description}
		\item[\texttt{hotels}]
		This dataset is constructed from hotel booking data originally analyzed by \citet{Bodea2009} and subsequently processed by \citet{Berbeglia2022}.
		Products correspond to room types, and transactions reflect observed booking decisions augmented with synthetic no-purchase observations.
		The instances vary across hotels with different numbers of products ($\mathcal{N}$) and transaction volumes ($\mathcal{T}$) and contain both in-sample and out-of-sample data.
		These datasets allow us to assess model performance in a real-world setting where no ground-truth model is available.
		
		\item[\texttt{sushi}]
		The sushi dataset is based on preference rankings collected by \citet{Kamishima2003}, where consumers provide complete rankings over a fixed set of products.
		Synthetic transactions are generated from these rankings under different truncation schemes, capturing varying degrees of consideration-set behavior.
		The instances vary with respect to the number of periods ($|\mathcal{P}|$) and the structure of the preference lists (full and truncated rankings).
		This dataset provides a semi-synthetic benchmark where ground-truth preferences are known.
		
	\end{description}
}

\subsection{Comparison between the DP and MILP solvers}\label{sec:comp_DP_MIP}

In this section, we report in Tables~\ref{tab:dp_vs_mip:singlep}--\ref{tab:dp_vs_mip:multip} the (time) performance of our DP algorithm compared with a commercial MILP solver when solving the formulations presented in Appendix~\ref{sec:formulations_GLOP} for both the single- and multi-purchase settings.
Our DP algorithm applies completion bounds and unreachable products simultaneously as acceleration techniques.
Preliminary experiments, reported in Appendix~\ref{sec:dp:accel}, show that combining both mechanisms yields the best overall performance.
In this experiment, we consider the \texttt{SinglePurchase} and the \texttt{MultiPurchaseRankedList} datasets, both detailed in Appendix \ref{sec:instances_appendix}.
Specifically for the multi-purchase setting, showing all possible combinations of attributes would result in excessively large tables.
Therefore, we report aggregate results in this section, while more detailed results for the different parameter combinations are provided in Appendix~\ref{sec:detailled_multipurchase_setting}.

We report the shifted geometric means of the computing times (in seconds), using a shift value of $\sigma = 1$, required to solve the estimation problems.
The shifted geometric mean of a vector $\mathbf{x} = (x_1, \ldots, x_n)\geq 0$ and a shift value of $\sigma > 0$ is computed as
\begin{equation}\label{eq:shifted_geomean}
	sgm(\mathbf{x}, \sigma) = \sqrt[n]{\prod_{i = 1}^n (x_i + \sigma)} - \sigma.
\end{equation}
The geometric mean is preferred because it provides a more representative summary of computing times that vary widely across instances, reducing the influence of extreme values.

We restrict our analysis to dataset configurations where at least one of the methods did not time out in at least one problem. In addition, for the single-purchase case, we restrict our analysis to the configurations where the mean computing times across both algorithms took $> 60$ seconds.
Timeouts are given 3 hours for the purpose of computing the geometric means.
We highlight in boldface the method that achieves the best performance (i.e., the smallest mean computing time across all problems from a given dataset configuration).
Because the $\ell_1$-error model may yield negative dual values during CG, it cannot be used with the MILP pricing formulation, which assumes non-negative rewards (duals).
Therefore, in this experiment, we only report results for the EM-based estimation approach.

\begin{table}[!hbtp]
	\caption{Runtime comparison between the DP and MILP approaches for estimating the single-purchase problems}
	\label{tab:dp_vs_mip:singlep}
	\begin{minipage}[t]{0.5\linewidth}
		\centering
		\begin{tabular}{ccccrr}
			\textit{Share} & \textit{Structure} & \textit{List} & $|\mathcal{P}|$ & DP & MILP\\ \hline
			\multirow{8}{*}{20\%} & \multirow{3}{*}{10x1} & 10  & 600 & \textbf{1.7} & 218.6 \\
			&                       & 100 & 300 & \textbf{0.9} & 137.1 \\
			&                       & 100 & 600 & \textbf{2.9} & 1701.3 \\
			\cline{2-6}
			& \multirow{5}{*}{5x3}  & 10  & 300 & \textbf{37.4} & 198.8 \\
			&                       & 10  & 600 & \textbf{120.5} & 2660.5 \\
			&                       & 100 & 150 & \textbf{16.2}  & 61.1 \\
			&                       & 100 & 300 & \textbf{57.9}  & 1624.4 \\
			&                       & 100 & 600 & \textbf{190.2} & 10800.0 \\
			\hline
			\multirow{8}{*}{50\%} & \multirow{5}{*}{10x1} & 10  & 300 & \textbf{1.2} & 130.8 \\
			&                       & 10  & 600 & \textbf{3.2} & 728.9 \\
			&                       & 100 & 150 & \textbf{0.6} & 111.0 \\
			&                       & 100 & 300 & \textbf{1.8} & 1138.4 \\
			&                       & 100 & 600 & \textbf{4.9} & 10800.0 \\
			\cline{2-6}
			& \multirow{3}{*}{5x3}  & 10 & 150 & \textbf{21.4} & 103.2 \\
			&                      & 10 & 300 & \textbf{67.6} & 1351.5 \\
			&                      & 10 & 600 & \textbf{167.2} & 6285.7 \\
			\hline
		\end{tabular}
	\end{minipage}%
	\begin{minipage}[t]{0.5\linewidth}
		\centering
		\begin{tabular}{ccccrr}
			\textit{Share} & \textit{Structure} & \textit{List} & $|\mathcal{P}|$ & DP & MILP\\ \hline
			\multirow{3}{*}{50\%} & \multirow{3}{*}{5x3} & 100 & 150 & \textbf{34.3} & 915.3 \\
			&                      & 100 & 300 & \textbf{93.5} & 10800.0 \\
			&                      & 100 & 600 & \textbf{270.7} & 10800.0 \\
			\hline
			\multirow{12}{*}{90\%} & \multirow{5}{*}{10x1} & 10  & 300 & \textbf{1.5} & 203.0 \\
			&                      & 10  & 600 & \textbf{4.3} & 1015.6 \\
			&                      & 100 & 150 & \textbf{1.0} & 265.8 \\
			&                      & 100 & 300 & \textbf{2.6} & 1894.2 \\
			&                      & 100 & 600 & \textbf{9.5} & 10800.0 \\
			\cline{2-6}
			& \multirow{7}{*}{5x3}  & 10  & 150 & \textbf{35.0} & 381.6 \\
			&                      & 10  & 300 & \textbf{109.7} & 4634.1 \\
			&                      & 10  & 600 & \textbf{273.4} & 10160.1 \\
			&                      & 100 & 75  & \textbf{19.0} & 196.0 \\
			&                      & 100 & 150 & \textbf{61.6} & 6497.8 \\
			&                      & 100 & 300 & \textbf{170.0} & 10800.0 \\
			&                      & 100 & 600 & \textbf{498.8} & 10800.0 \\
			\hline
			&&&&&
		\end{tabular}
	\end{minipage}%
\end{table}

\begin{table}[!hbtp]
	\caption{Runtime comparison between the DP and MILP approaches for estimating the multi-purchase problems}
	\label{tab:dp_vs_mip:multip}
	\begin{minipage}[t]{0.5\linewidth}
		\centering
		
		\begin{tabular}{cccrr}
			$|\mathcal{N}|$   & \textit{Share} & $\eta$ & DP & MILP\\ \hline
			\multirow{12}{*}{10} & \multirow{4}{*}{20\%} & 2 & \textbf{4.3} & 440.1 \\
			& & 3 & \textbf{7.9} & 227.5 \\
			& & 4 & \textbf{11.2} & 72.8 \\
			& & 5 & \textbf{15.5} & 22.7 \\
			\cline{2-5}
			& \multirow{4}{*}{50\%} & 2 & \textbf{6.9} & 842.5 \\
			& & 3 & \textbf{12.9} & 578.7 \\
			& & 4 & \textbf{17.5} & 202.0 \\
			& & 5 & \textbf{25.8} & 62.8 \\
			\cline{2-5}
			& \multirow{4}{*}{80\%} & 2 & \textbf{9.5} & 1065.2 \\
			& & 3 & \textbf{19.1} & 888.0 \\
			& & 4 & \textbf{24.9} & 382.7 \\
			& & 5 & \textbf{34.8} & 90.8 \\
			\hline
		\end{tabular}
		
	\end{minipage}%
	\begin{minipage}[t]{0.5\linewidth}
		\centering
		
		\begin{tabular}{cccrr}
			$|\mathcal{N}|$  & Share & $\eta$ & DP & MILP\\ \hline
			\multirow{12}{*}{15} & \multirow{4}{*}{20\%} & 2 & \textbf{62.1} & 10701.6 \\
			&  & 3 & \textbf{158.1} & 5677.5 \\
			&  & 4 & \textbf{265.7} & 1623.6 \\
			&  & 5 & \textbf{443.0} & 742.3 \\
			\cline{2-5}
			& \multirow{4}{*}{50\%} & 2 & \textbf{109.5} & 10800.0 \\
			&  & 3 & \textbf{271.3} & 8376.4 \\
			&  & 4 & \textbf{422.6} & 4743.3 \\
			&  & 5 & \textbf{683.0} & 2277.2 \\
			\cline{2-5}
			& \multirow{4}{*}{80\%} & 2 & \textbf{134.3} & 10313.0 \\
			&  & 3 & \textbf{345.8} & 8991.0 \\
			&  & 4 & \textbf{514.7} & 5621.1 \\
			&  & 5 & \textbf{744.5} & 3462.5 \\
			\hline
		\end{tabular}
		
	\end{minipage}%
\end{table}

We observe that the performance of the DP dominates that of the MILPs both in the single- and in the multi-purchase cases, resulting in computing times that are on average shorter by an order of magnitude. Although the MILP seems to scale better as $\eta$ increases, it never gets to the point of becoming competitive with the DP.

{
	\color{black}
	
	\subsection{Performance benchmarking of multi-purchase methods}\label{sec:benchmarking_lin}
	
	In this section, we compare the performance of our method against that obtained by the approach of \citet{Lin2025}.
	Their method follows a market-discovery framework in which new consumer types are generated by solving a MILP-based pricing problem that incorporates BRP rules.
	In this experiment, we consider the instance set \texttt{MultiPurchaseProbit} introduced by \citet{Lin2025}.
	The instances were obtained
	from the authors and are detailed in Appendix \ref{sec:instances_appendix}.
	We compare the performance of our method against that of \citet{Lin2025} using the same error metric used in their work, namely an adapted version of the \textit{hard}-RMSE.
	The metric is adapted because, although the data generation process allows for multi-purchase behavior, the evaluation is performed by considering the marginal contribution of single-product purchases.
	We refer to this metric as the \emph{Marginal-RMSE} (MRMSE).
	Appendix~\ref{sec:performance_metrics} provides details on how this metric is computed, and discusses alternative error measures.
	
	In Table~\ref{tab:comparision_lim_rmse}, we compare the performance of all the methods that we implemented against that of \citet{Lin2025} with respect to MRMSE, aggregating results for instances with the same number of products ($|\mathcal{N}|$).
	We report results for both configurations of our method:
	(i) solving the pricing subproblem using the proposed DP algorithm combined with a maximum-likelihood master problem solved using the EM method (\texttt{EM-DP}), and
	(ii) solving the pricing subproblem using DP combined with an $\ell_1$-error master problem (\texttt{L1-DP}).
	Also, for the sake of completeness, we also consider a configuration that combines a maximum-likelihood master problem,  solved using the EM method, with an MILP-based CG subproblem (\texttt{EM-MIP}), which can be viewed as a proxy for the method of \citet{Lin2025}.
	The best-performing method in each group is highlighted in bold.
	We observe that, except for instances with 30 products, the methods relying on the proposed DP algorithm yield the smallest errors.
	Yet, we point out that even when the DP-based method did not entail the best results, the difference in favor of \citet{Lin2025}'s method is very small.
	Furthermore, it is also worth noting that, even though the method of \citet{Lin2025} relies on richer behavioral information through BRP rules, our approach achieves similar or better predictive accuracy.
	A possible explanation is that the additional flexibility introduced by BRP rules may increase the risk of overfitting.
	Since the evaluation is conducted using out-of-sample transactions, this can limit the ability of the model to generalize and accurately predict consumer behavior.
	
	\begin{table}[htbp]
		\centering
		\caption{Performance comparison for different numbers of products with respect to MRMSE}
		\label{tab:performance_num_products}
		\begin{tabular}{ccccc}
			\toprule
			$|\mathcal{N}|$ & \textbf{\texttt{EM-DP}} & \textbf{\texttt{L1-DP}} & \textbf{\texttt{EM-MIP}} & \citet{Lin2025} \\
			\midrule
			5  & 0.4216 & \textbf{0.4134} & 0.4319 & 0.4212 \\
			10 & \textbf{0.3775} & 0.3881 & 0.4049 & 0.3803 \\
			15 & \textbf{0.3121} & 0.3996 & 0.3300 & 0.3162 \\
			20 & \textbf{0.2835} & 0.3596 & 0.3030 & 0.2904 \\
			25 & \textbf{0.2732} & 0.3490 & 0.2844 & 0.2733 \\
			30 & 0.2475 & 0.3047 & 0.2570 & \textbf{0.2474} \\
			\bottomrule
		\end{tabular}
		\label{tab:comparision_lim_rmse}
	\end{table}
	
	In addition to analyzing prediction errors (MRMSE), we report in Table~\ref{tab:comparision_cpu_multipurchase} the computing times required by \texttt{EM-DP}, \texttt{L1-DP}, and \texttt{EM-MIP}.
	We do not report the computing times of \citeauthor{Lin2025}'s method, as they were not provided in their work.
	However, since their approach relies on an MILP solver to generate consumer types, our \texttt{EM-MIP} implementation can be regarded as a good proxy for their method.
	The results show that the DP-based methods are able to estimate multi-purchase DCMs within reduced computing times.
	In particular, \texttt{L1-DP} solves all instances in less than 6 seconds on average.
	When considering \texttt{EM-DP}, the method is able to solve all instances in slightly more than 100 seconds, on average, while still yielding the lowest prediction errors for most instance groups.
	In contrast, \texttt{EM-MIP} is scalable only for instances with a small number of products ($n \leq 10$).
	For instances with $n = 15$, the method requires more than 1.5 hours to estimate the models.
	For instances with $n \geq 20$, the method reaches the 2-hour time limit imposed for this set of experiments.
	The results show that the DP-based methods provide a good balance between solution quality and computational time, making them suitable for practical applications.
	
	\begin{table}[htbp]
		\centering
		\caption{Computational time comparison for different numbers of products}
		\label{tab:time_num_products}
		\begin{tabular}{cccc}
			\toprule
			$|\mathcal{N}|$ & \textbf{\texttt{EM-DP}} & \textbf{\texttt{L1-DP}} & \textbf{\texttt{EM-MIP}} \\
			\midrule
			5  & 0.07   & \textbf{0.05} & \textbf{0.05}   \\
			10 & \textbf{0.42}   & 0.47 & 161.37 \\
			15 & 1.36   & \textbf{0.29} & 6069.46 \\
			20 & 5.64   & \textbf{0.79} & 7200.00 \\
			25 & 33.05  & \textbf{3.52} & 7200.00 \\
			30 & 100.66 & \textbf{5.14} & 7200.00 \\
			\bottomrule
		\end{tabular}
		\label{tab:comparision_cpu_multipurchase}
	\end{table}
	
	Finally, we assess the ability of the estimated models to produce revenue in an assortment optimization context.
	Specifically, for both \texttt{EM-DP} and \texttt{L1-DP}, we solve the \textit{associated assortment optimization} (AAO) problem using an adapted version of the formulation by \citet{Lin2025}, which is presented in Appendix \ref{sec:milp_assortement_recommendation}.
	Table \ref{sec:revenue_assort} reports the average AAO by our method in comparison with the approach proposed by \citeauthor{Lin2025}.
	Then, for each resulting assortment, the average revenue is estimated through a Monte Carlo simulation, following a procedure similar to that described by \citet{Lin2025}.
	For clarity, the complete procedure is described in Appendix \ref{sec:monte_carlo_simulation}.
	Overall, the results show that, except for the group of instances with $n=20$, the DP-based methods achieve higher average revenues than those reported by \citet{Lin2025}.
	
	\begin{table}[htbp]
		\centering
		\caption{Average AAO comparison for different numbers of products}
		\label{tab:avg_revenue_num_products}
		\begin{tabular}{cccc}
			\toprule
			$|\mathcal{N}|$
			& \textbf{\texttt{EM-DP}} 
			& \textbf{\texttt{L1-DP}} 
			& \textbf{\citet{Lin2025}} \\
			\midrule
			5  & 5.0199 & \textbf{5.8343} & 3.9944 \\
			10 & 5.5187 & \textbf{6.1368} & 4.4519 \\
			15 & 5.1010 & \textbf{5.1989} & 4.9827 \\
			20 & 4.1457 & 4.1786 & \textbf{4.8262} \\
			25 & \textbf{5.1269} & 3.1724 & 4.3844 \\
			30 & \textbf{5.7982} & 3.3740 & 4.5456 \\
			\bottomrule
		\end{tabular}
		\label{sec:revenue_assort}
	\end{table}
}

\subsection{The value of multi-purchase modeling}\label{sec:estimation_results_MP}

In this experiment, we consider problems from the \texttt{MultiPurchaseRankedList} dataset, described in Appendix \ref{sec:instances_appendix}, with ground-truth models containing consumer types $i$ such that
$1 \le \eta(i) \le 5$.
The objective of this experiment is to assess the importance of capturing this behavior via ad-hoc (i.e., multi-purchase) models.
To that end, we consider five possible settings for the estimation models, namely, restricting consumer types $i$ such that $1 \le \eta(i) \le \eta$, for $\eta \in \{1,2,3,4,5\}$.
We use the proposed DP algorithm as the engine to enumerate consumer types, and evaluate the performance under the two estimation models considered in this work, namely the EM and the $\ell_1$-error models.
We restrict our analysis to the problems in which all settings are able to produce a final choice model, either because optimality is attained (for the $\ell_1$-error model) or because the statistical test is triggered (for the EM model).


We summarize our results in Table~\ref{tab:multip}.
For each value of $\eta \in \{1, 2, 3, 4, 5\}$, we report shifted geometric means of the computing times (in seconds) required by the estimation models, using a shift value of $\sigma_{CPU} = 1$,
as well as shifted geometric means of both \textit{soft-RMSE} (SRMSE) \tocheckTwo{and \textit{marginal-RMSE} (MRMSE) computed with a shift value of $\sigma_{SRMSE} = \sigma_{MRMSE} = 0.01$.}
The SRMSE provides a measure of how close a model is to the ground truth and is the best way to assess model strength, as it is agnostic to the historical transactions.
This contrasts with the \textit{hard}-RMSE (HRMSE), which evaluates a model's ability to explain a fixed set of out-of-sample transactional data, particularly useful in data-driven contexts where no ground truth model is available.
\tocheckTwo{With respect to MRMSE, while SRMSE provides an appropriate metric when the maximum number of purchases allowed by the estimated model matches that observed in the data, it may become less informative when evaluating simpler models that restrict the number of purchases, as in the experiments reported here. 
	For this reason, we also report MRMSE in addition to SRMSE to allow a fair comparison across models with different purchase-capacity assumptions.}
More details about these metrics are presented in Appendix~\ref{sec:performance_metrics}.
We also evaluate the ability of the estimated models to capture revenue in an assortment optimization context.
Similar to what was done in Section \ref{sec:benchmarking_lin}, we solve the AAO problem using an adapted version of the formulation by \citet{Lin2025}, which is presented in Appendix \ref{sec:milp_assortement_recommendation}.
The resulting assortment is then evaluated under the ground-truth model to compute an empirical revenue $\rho_e$, which is compared to the theoretical optimal revenue $\rho_t$, obtained by assuming full knowledge of the ground truth.
For each instance, we compute the ratio $\rho_e/\rho_t \in [0,1]$ and report shifted geometric means with a shift value of $\sigma_{AAO} = 0.01$.

{
	\color{blue}
	
	\begin{table}[!hbtp]
		\caption{Estimation and assortment optimization on multi-purchase problems}
		\label{tab:multip}
		\centering
		\begin{tabular}{crrrrrrrr}
			\multirow{2}{*}{$\eta$} & \multicolumn{4}{c}{$\ell_1$-error} & \multicolumn{4}{c}{EM}\\
			\cline{2-5} \cline{6-9}\\
			& CPU & SRMSE & MRMSE & AAO & CPU & SRMSE & MRMSE & AAO\\ \hline
			1 & 0.6 & 0.0212 & 0.3518 & 0.8369 & 1.9 & 0.0264 & 0.3555 & 0.9119\\
			2 & 11.5 & 0.0106 & 0.3395 & \textbf{0.9417} & 4.4 & 0.0188 & 0.3451 & 0.8480\\
			3 & 22.8 & 0.0080 & 0.3362 & 0.9388 & 11.9 & 0.0120 & 0.3384 & 0.8423\\
			4 & 30.6 & 0.0066 & 0.3350 & 0.9301 & 25.4 & 0.0089 & 0.3354 & 0.8367\\
			5 & 42.9 & \textbf{0.0060} & 0.3348 & 0.9227 & 37.8 &  0.0069 & \textbf{0.3347} & 0.8333\\
			\hline
		\end{tabular}
	\end{table}
}

\tocheckTwo{We observe that, in this experiment, the $\ell_1$-error model yields the best results in terms of SRMSE and AAO, although its performance with respect to MRMSE is very close to that of the EM model.}
As expected, for both the $\ell_1$-error and EM models, the best accuracy is obtained when considering $\eta = 5$.
However, estimating the model with smaller values of $\eta$, i.e., $\eta \in \{3, 4\}$, still yields acceptable accuracies.
These results indicate that the proposed models are capable of capturing the essence of multi-purchase behavior.
When analyzing the revenue obtained from solving the instances, we observe that the highest revenue is attained with $\eta = 2$ under the $\ell_1$-error model.
For the EM method, the highest revenue corresponds to $\eta = 1$.
However, this value is already smaller than almost all revenues obtained under the $\ell_1$-error model, except when using $\eta = 1$, indicating that the EM method may be more sensitive to model specification in this setting.
We also emphasize that the computational effort increases substantially, as much as by orders of magnitude, as the maximum purchase size grows, particularly when moving from single-purchase models to settings where $\eta = 2$.

Finally, we highlight that our results are consistent with those reported by \citet{Wang2023}.
In their work, the authors argue that modeling consumers as purchasing at most two products is already sufficient to capture multi-purchase behavior in practice.
Our experiments support this insight.
Even though the ground-truth model assumes that consumers may purchase up to five products, the best revenue performance is obtained when setting $\eta = 2$ (under $\ell_1$-error).
Interestingly, both $\eta = 3$ and $\eta = 4$ also yield revenues higher than those associated with $\eta = 5$.
These findings corroborate that, in many cases, a less complex model can effectively capture multi-purchase behavior, even in environments where consumers exhibit higher multi-purchase capacity.

\subsection{Performance in settings with small consideration sets}\label{sec:small_consideration}

In this section, we assess the performance of our method and the modeling accuracy when consumers are subject to small consideration sets.
We consider five different values for parameter $q$ as defined in Section \ref{sec:limited_lists}, namely $q \in \{4, 5, 7, 10, +\infty\}$.
We restrict our analysis to the instances from the \texttt{LimitedList} dataset, which is detailed in Appendix \ref{sec:instances_appendix}.
In this experiment, we make use of the proposed DP to enumerate consumer types during the CG process.

In Table~\ref{tab:limlists}, we report the same type of summary results as in Table~\ref{tab:multip}.
For each value of $q \in \{4, 5, 7, 10, +\infty\}$, we report shifted geometric means of the computing times (in seconds) required by the estimation models, using a shift value of $\sigma_{CPU} = 1$, as well as shifted geometric means of the \textit{soft-RMSE} (SRMSE), computed with a shift value of $\sigma_{SRMSE} = 0.01$.
To ensure a fair comparison, all estimation models were re-run with a six-hour time limit.
We restrict our analysis to those problem classes for which no time-out occurred across any of the ten settings, ensuring that the reported SRMSE and AAO values are representative of each model's performance.

\begin{table}[!hbtp]
	\caption{Estimation and assortment optimization with small consideration sets}
	\label{tab:limlists}
	\centering
	\begin{tabular}{crrrrrr}
		\multirow{2}{*}{$q$} & \multicolumn{3}{c}{$\ell_1$-error} & \multicolumn{3}{c}{EM}\\
		\cline{2-4} \cline{5-7}\\
		& CPU & SRMSE & AAO & CPU & SRMSE & AAO\\ \hline
		4 & 11.7 & 0.2583 & 0.4523 & 108.3 & 0.0291 & 0.9674\\
		5 & 23.2 & 0.2582 & 0.4768 & 614.2 & \textbf{0.0267} & \textbf{0.9679}\\
		7 & 22.2 & 0.2562 & 0.4888 & 2331.4 & 0.0270 & 0.9466\\
		10 & 23.0 & 0.2573 & 0.4997 & 2126.0 & 0.0281 & 0.9144\\
		$+\infty$ & 19.9 & 0.2574 & 0.503 & 2089.4 & 0.0285 & 0.8712\\
		\hline
	\end{tabular}
\end{table}

We observe that the EM method captures the revenue stream with very high accuracy, especially for low values of $q$ (namely, $q = 4, 5$).
We also note that this accuracy does not align perfectly with the SRMSE, as the model estimated with $q = 4$ performs worst in terms of SRMSE, yet ranks second, very close to the best, in terms of revenue generation.
Furthermore, we observe a steep degradation in captured revenue as $q$ increases, demonstrating the importance of considering small consideration sets during the model estimation when consumers behave in such a way.
The results obtained with the $\ell_1$-error model are unsatisfactory, as it consistently performs poorly across all values of $q$, with all models capturing 50\% of the revenue or less.

\tocheck{Upon inspecting the behavior of the $\ell_1$-error model, we offer the following possible explanation for its poor performance.
	The transaction set in the \texttt{LimitedList} dataset is constructed by randomly picking a subset $S\subseteq N$ and a consumer type from the ground truth according to the associated probabilities. 
	Given the combinatorial number of possible offer sets $S$, we have observed that very rarely does an offer set appear more than once in the training set.
	In most other datasets \citep[e.g.,][]{Berbeglia2022}, instances are generated for a smaller subset of offer sets, but by sampling several consumer types according to their probabilities, thus generating a much larger sample of transactions for each offer set in the training data.
	For this reason, the empirical probabilities using expression \eqref{eq:empirical_probability} in the first case are often 0, 1, and a handful of times $\tfrac{1}{2}$.
	In the second case, these probabilities provide more precise information about the ground truth, but only on a much smaller sample of offer sets.
	Generating too few transactions for many different offer sets is what prevents the $\ell_1$-error model from providing a competitive performance.
	We believe that this has nothing to do with the use or non-use of limited lists, and possibly will extend to other estimation problems (single-purchase, multi-purchase) if the training set has a similar structure.   
}
Finally, we observe that the computational complexity increases with $q$ for small values of this parameter, but stabilizes around $q = 7$, after which the computing times appear largely unaffected.

\subsection{Performance on real and semi-synthetic instances}\label{sec:real_semisynthetic}

In this section, we present results obtained by solving the \texttt{hotel} and \texttt{sushi} datasets, \tocheck{which correspond to single-purchase choice settings}.
Since the problems from the \texttt{hotel} dataset are constructed from actual consumer preferences and do not include information about a ground-truth model, we report the \textit{hard}-RMSE (HRMSE) for this case.
The \texttt{sushi} instances, however, are semi-synthetic and the transactions are derived from a known ground-truth model; therefore, we report the \textit{soft}-RMSE.
Also note that the \texttt{sushi} problems contain pricing data, so we additionally report the AAO metric.
For both datasets, we also report mean computing times.
We report shifted geometric means for the computing times in seconds (using a shift value of 1.0), for the \textit{soft}-RMSE and for the \textit{hard}-RMSE (using a shift value of 0.01), and for the AAO (using a shift value of 0.01).

\begin{table}[!hbtp]
	\caption{Estimation using the two models on the \texttt{hotel} dataset}
	\label{tab:hotel_dl}
	\centering
	\begin{tabular}{crr}
		Model & CPU & HRMSE\\ \hline
		EM & 0.24 & 0.332 \\
		L1 & 0.10 & 0.331 \\
		\hline
	\end{tabular}
\end{table}

\begin{table}[!hbtp]
	\caption{Estimation using the two models on the \texttt{sushi} dataset}
	\label{tab:sushi_dl}
	\centering
	\begin{tabular}{crrr}
		Model & CPU & SRMSE & AAO\\ \hline
		EM &  0.83 & 0.204 & 0.862 \\
		L1 & 2.69 & 0.188 & 0.919 \\
		\hline
	\end{tabular}
\end{table}

The results reported in Tables~\ref{tab:hotel_dl}--\ref{tab:sushi_dl} show that the $\ell_1$-error estimation framework yields a slightly better performance over the EM method on both the error metrics and the revenue generation.

\section{Conclusions} \label{sec:conclusions}

We have introduced a novel DP method to address the CG subproblem of generating consumer types in single- and multi-purchase ranked-list choice models.
The proposed DP builds upon techniques from the vehicle routing literature and incorporates several acceleration mechanisms to effectively reduce the computational complexity associated with estimation problems.

We have shown that the proposed method integrates well with existing market-discovery algorithms and yields substantial computational gains compared with MILP-based approaches from the literature.
In particular, it achieves up to a $70\times$ speed-up over recent methods for multi-purchase ranked-list models and up to a $30\times$ speed-up for single-purchase models.
Moreover, the DP framework can naturally handle modeling variants such as limited consideration sizes.

\tocheck{The proposed approach has also been shown to outperform recent multi-purchase methods relying on more complex models, including behavior-revealed-preference (BRP) rules.
	Our method achieves lower prediction errors and higher revenues while requiring less computational time for the estimation.}
A comprehensive computational study also demonstrates that the resulting demand estimates lead to significant improvements in expected revenue when embedded in assortment optimization models.

We have also assessed the performance of our approach on estimating models with two key behavioral features:
(1) consumers with limited consideration sets,
and (2) consumers with multi-purchase behavior.
In both cases, our results indicate that appropriately chosen model specifications improve both predictive accuracy and revenue performance.

Although our approach is capable of estimating complex demand behaviors in which preferences are represented as distributions over large sets of ranked lists, it cannot capture consumer behavior that deviates from the classical utility maximization framework. In view of the recent interest in general choice models that explain systematic departures from rationality \citep[e.g.,][]{chen2022decision, Berbeglia2018a, chen2019use, Jena2022}, a natural direction for future work is to extend our methodology to support the efficient estimation of these richer behavioral models. A further promising direction is to introduce regularization into the estimation procedure to reduce overfitting, especially when working with small datasets.



\section*{Acknowledgements}
We would like to thank Xiaobo Li and Hongyuan Lin for sharing detailed information about the experiment used in this paper.
\section*{Notes}

This work was conducted before Claudio Contardo joined Amazon and is not related in any way to his work at the company.

	\bibliographystyle{apalike}
	\bibliography{library}

\appendix
		\section{MILP formulations for the generalized linear ordering problem}\label{sec:formulations_GLOP}

In this section, we present the problem that we solve to generate new consumer types.
The problem arising from the process of estimating consumer types can be modeled as a \emph{generalized linear ordering problem} \citep[GLOP, ][]{MendezDiaz2019}.
The GLOP is NP-hard, as it generalizes the classical linear ordering problem, which is known to be NP-hard \citep{Garey1979}.
Since we study the GLOP in the context of consumer-type generation, we revisit some of the notation previously introduced in Section~\ref{sec:model_basis} when defining the problem.
This allows us to clearly highlight how the GLOP relates to the CG subproblem that arises in the estimation framework addressed in this paper.

The GLOP can be formally defined as follows.
Let $\mathcal{N} = \{1, \ldots, n\}$ be a set of nodes (items).
Let $\mathcal{T} = \{(B_t, S_t): B_t \subseteq S_t,\, S_t \subseteq \mathcal{N}\}$ denote a set of directed trees of height one, where each tree $t \in \mathcal{T}$ has root $B_t$ and leaves $S_t$.
Here, $B_t$ represents the set of items (bundle) selected in transaction $t$, which can be possibly empty, i.e., $B_t = \emptyset$.
We allow this case because, in the context of consumer-type generation, the no-purchase option (empty bundle) is always available.
Thus, each tree (transaction) may have an empty root.
Each $t \in \mathcal{T}$ is associated with a non-negative reward $\mu_t$.
In parallel with the notation from Section~\ref{sec:problem_description}, each tree encodes a historical transaction:
the root $B_t$ corresponds to the purchased items (bundle), $S_t$ represents the offer set, and the reward $\mu_t$ corresponds to the dual value associated with Constraint~\eqref{eq:constraints_CG}.
The objective in the GLOP is to find a total order (permutation) $\sigma$ of $\mathcal{N}$ that maximizes the collected reward.
A permutation specifies a ranking of the items, with $\sigma(i)$ denoting the position of item $i$.
The reward $\mu_t$ associated with a tree $t$ is collected if the permutation satisfies
$\sigma(i) < \sigma(j)$ for all $i \in B_t$ and all $j \in S_t \setminus B_t$; that is, all purchased items precede all non-purchased items in that transaction.
Thus, the total reward of $\sigma$ is
$R(\sigma) = \sum \{ \mu_t : t \in \mathcal{T} : \sigma(i) < \sigma(j), i \in B_t, \forall j \in S_t \setminus B_t\}$.

To define the MILP associated with the GLOP, throughout this section, we associated the empty bundle (i.e., the case $B_t = \emptyset$) with a dummy index ~$0$.
For the sake of simplicity, we define an extended product set $\mathcal{N}' = \mathcal{N} \cup \{0\}$ and an extended offer set $S'_t$ including the non-purchase item, i.e., $S'_t = S_t \cup \{0\}$.
To solve the GLOP, one can consider binary variables $x_{ij}$ that assume the value 1 if item $i$ precedes item $j$ in the permutation, and 0 otherwise.
Moreover, binary variables $w_t$, $t \in \mathcal{T}$, take values $1$ if the reward associated with the corresponding tree (transaction) is collected.
Finally, in the definition of the GLOP, a parameter $d$ indicates the position $d + 1$ (or lower) at which the element zero is allowed to be placed in the ranking.

In Section~\ref{sec:customer_generation_single_purchase_case}, we present the formulation proposed by \citet{MendezDiaz2019} to tackle the GLOP.
In turn, in Section~\ref{sec:customer_generation_multi_purchase_case}, we present an adaptation of the formulation proposed by \citet{Lin2025}, which addresses the problem of estimating DCMs under multi-purchase behavior.

\subsection{Generating consumer types for the single-purchase case}\label{sec:customer_generation_single_purchase_case}

In this section, we present the formulation for GLOP, originally proposed by \citet{MendezDiaz2019}.
This formulation is applicable when the transactional data is restricted to unitary bundles, i.e., when each bundle is of the form 
$B_t = \{j_t\}$, for some $j_t \in \mathcal{N}$.
\begin{align}
\max & \sum_{t \in \mathcal{T}} \mu_t w_t && \label{eq:obj_function_glop}
\end{align}
s.t.
\begin{align}
x_{ij} + x_{ji} = 1 && \forall i,j \in \mathcal{N'}, i < j \label{eq:order}\\
x_{ij} + x_{jk} + x_{ki} \leq 2 && \forall i,j,k \in \mathcal{N'}, i \neq j \neq k \label{eq:transitivity}\\
w_t \leq x_{j_tj} && \forall t \in \mathcal{T}, B_t \neq \emptyset , j \in S'_t \setminus \{j_t\} \label{eq:reward_consistency}\\
w_t \leq x_{0j} && \forall t \in \mathcal{T}, B_t = \emptyset, j \in S_t \label{eq:reward_consistency_additional}\\
\sum_{i \in \mathcal{N}} x_{i0} \geq d \label{eq:min_pos_0}\\
\sum_{j \in S_t} x_{j_tj} \leq |S_t| - 1 + w_t && \forall t \in \mathcal{T} \label{eq:redudant}\\
x_{ij} \in \{0,1\} && \forall i,j \in \mathcal{N'} \label{eq:vars_x}\\
w_{t} \in \{0,1\} && \forall t \in \mathcal{T}. \label{eq:vars_w}
\end{align}

Objective function \eqref{eq:obj_function_glop} aims at maximizing the total rewards collected. 
Constraints \eqref{eq:order} impose that either item $i$ must precede item $j$, or item $j$ must precede item $i$.
Constraints \eqref{eq:transitivity} impose a transitive relation among any of the items.
Constraints \eqref{eq:reward_consistency} and \eqref{eq:reward_consistency_additional} ensure that the reward associated with a tree is only collected if the transaction is compatible with the permutation.
Constraint \eqref{eq:min_pos_0} controls the position in which the $0$ item appears in the ranking.
Constraints \eqref{eq:redudant} are valid inequalities used to tighten the formulation.
Finally, Constraints \eqref{eq:vars_x} and \eqref{eq:vars_w} define the domain of the variables.

When solving the GLOP, it is sufficient to consider only meaningful rankings, i.e., rankings truncated at the position of the empty bundle (index~$0$).
This follows from the fact that the relative ordering of items appearing after the empty bundle in the ranking does not affect the objective value.
Consequently, any complete ranking can be modified arbitrarily beyond the position of the empty bundle without changing the collected reward, which implies the existence of multiple symmetric solutions yielding the same objective value.
To improve computational performance when solving the model with a commercial MILP solver, it is therefore beneficial to reduce such symmetries by restricting attention to rankings only up to the position of the empty bundle.
To this end, \citet{MendezDiaz2019} propose adding the following constraint to the model:
\begin{align}
& x_{0j} + x_{0i} \leq 1 + x_{ij} && \forall i,j \in \mathcal{N}, i < j. \label{eq:reduce_symmetry}
\end{align}

In our paper, we also investigate the special case of the GLOP that arises in applications where consumer preference lists are limited to a small set of products.
This feature has been previously explored by \citet{Feldman2019}.
Since the empty bundle marks the end of a meaningful preference list, Equation~\eqref{eq:limit_list} enforces that the empty bundle (index~$0$) cannot appear beyond position $q$, where $q$ represents the maximum length of the consumer's consideration set.
In this setting, $q$ becomes the earliest position at which item $0$ may appear in the ranking, i.e., the model restricts preference lists to at most $q-1$ products before the non-purchase option.
\begin{align}
	& \sum_{i \in \mathcal{N} \setminus \{0\}} x_{i0} \leq q - 1. && \label{eq:limit_list}
\end{align}

\subsection{Generating consumer types for the multi-purchase case}\label{sec:customer_generation_multi_purchase_case}

In this section, we present the formulation used to generate consumer types associated with multiple purchases.
This formulation is an adaptation of the one proposed by \citet{Lin2025}.
In their work, the authors extend the approach of \citet{Vulcano2015} by proposing a market-discovery algorithm to address the case in which consumers are willing to buy up to $\eta$ products.

In addition to transaction data, \citet{Lin2025} also consider consumers' browsing behavior in e-commerce environments.
To develop their model, the authors assume that consumers typically follow a sequence of actions during the purchasing process:
viewing,
clicking,
adding to the shopping cart,
and purchasing.
Based on this behavior, they define behavior-reveals-preference (BRP) rules, which state that products that are clicked are preferred over those that are viewed but not clicked;
products added to the shopping cart are preferred over those that are clicked but not added;
and products that are purchased are preferred over those that are added to the shopping cart but not purchased, as well as over the non-purchase option.
Since this paper considers only information associated with transactions, we adapt the formulation proposed by \citet{Lin2025} to address the more common case in which browsing behavior is not available, as is typical in brick-and-mortar retailing.
Moreover, for the sake of comprehensiveness, we define the formulation using the notation already introduced in this paper.
Also, we rewrite the expressions for completeness.
\begin{align}
\max & \sum_{t \in \mathcal{T}} \mu_t w_t && \label{eq:obj_function_glop_mp}
\end{align}
s.t.
\begin{align}
x_{ij} + x_{ji} = 1 && \forall i,j \in \mathcal{N'}, i < j \label{eq:order_mp}\\
x_{ij} + x_{jk} + x_{ki} \leq 2 && \forall i,j,k \in \mathcal{N'}, i \neq j \neq k \label{eq:transitivity_mp}\\
\sum_{i \in \mathcal{N}} x_{i0} \geq \max \{d, \eta\} \label{eq:min_pos_0_mp}\\
w_t = 0 && t \in \mathcal{T}, |B_t| > \eta \label{eq:incompatibility_bundle_greater_eta_mp} \\
w_t \leq x_{0j} && \forall t \in \mathcal{T}, B_t = \emptyset, j \in S_t \label{eq:reward_consistency_mp}\\
\eta \, w_t \leq \sum_{i \in \mathcal{N} \setminus S_t} x_{ij} && \forall t \in \mathcal{T}, B_t = \emptyset, j \in S'_t \label{eq:reward_consistency_1_mp}\\
w_t \leq x_{0j} && \forall t \in \mathcal{T}, B_t \neq \emptyset, |B_t| < \eta, \notag \\ 
&& j \in S_t \setminus B_t \label{eq:reward_consistency_mp_2}\\
\eta \, w_t \leq \sum_{i \in B_t} x_{ij} + \sum_{i \in \mathcal{N} \setminus S_t} x_{ij} && \forall t \in \mathcal{T}, B_t \neq \emptyset, |B_t| \leq \eta, \notag \\
&& j \in S'_t \setminus B_t \label{eq:reward_consistency_mp_3}\\
x_{ij} \in \{0,1\} && \forall i,j \in \mathcal{N'} \label{eq:vars_x_mp}\\
w_{t} \in \{0,1\} && \forall t \in \mathcal{T}. \label{eq:vars_w_mp}
\end{align}

\clearpage
\pagebreak

\clearpage
\pagebreak

{
\color{black}

\section{MILP formulations for assortment recommendation}\label{sec:milp_assortement_recommendation}

In this section, we present the MILP formulation used to perform assortment recommendation in our study.
The model is adapted from the formulation proposed by \citet{Lin2025}.
This adaptation excludes the components associated with BRP rules, as they are not relevant to the scope of our analysis.

Let $\mathcal{N} = \{1, \ldots, n\}$ be the set of products, indexed by $j$, that may be purchased by consumers.
Let $\lambda_j$ denote the revenue associated with product $j \in \mathcal{N}$.
Let $\mathcal{C}$ be the set of consumer types.
Each consumer type $i \in \mathcal{C}$ is characterized by a preference list $\sigma(i) = (j_1^i, \ldots, j_{t(i)}^i)$ and a purchase capacity $\eta(i) \ge 1$, where $t(i)$ denotes the length of the preference list of consumer type~$i$.
For a given consumer type $i$ and a position $1 \le k \le t(i)$, we denote by $\sigma(i,k) = j_k^i$ the product ranked at position $k$.
The set of products appearing in the preference list of consumer type $i$ is denoted by $\mathcal{N}(i) = \{ j_k^i : 1 \le k \le t(i) \}$.
We also define $\sigma^{-1}(i,j) = k$ if $j = j_k^i$ for some $k$, and $\sigma^{-1}(i,j) = +\infty$ if $j \notin \mathcal{N}(i)$, as the position of product $j$ in the preference list of consumer type~$i$.
For a product $j \in \mathcal{N}(i)$, we define $\mathcal{N}^{+}(i,j) = \{ h \in \mathcal{N}(i) : \sigma^{-1}(i,h) < \sigma^{-1}(i,j) \}$ and $\mathcal{N}^{-}(i,j) = \{ h \in \mathcal{N}(i) : \sigma^{-1}(i,h) > \sigma^{-1}(i,j) \}$ as the sets of products ranked higher and lower than $j$, respectively, in the preference list of consumer type~$i$.
Finally, let $x_i$ denote the probability associated with consumer type $i \in \mathcal{C}$, as estimated by a given discrete choice model.

We consider binary decision variables $\xi_j$, where $\xi_j = 1$ if product $j \in \mathcal{N}$ is included in the selected assortment, and $\xi_j = 0$ otherwise.
We also consider continuous variables $z_{ijk}$ for $i \in \mathcal{C}$, $j \in \mathcal{N}(i)$, and $k = 1,\dots,\eta(i)$, where $z_{ijk}$ indicates whether product $j$ is selected as the $k$-\textit{th} purchase by consumer type~$i$.
Although the variables $z_{ijk}$ have an underlying binary interpretation, \citet{Lin2025} show that they can be modeled as nonnegative continuous variables without loss of optimality.
The resulting MILP formulation is presented below.
\begin{align}
	 \max \ &
	\sum_{i \in \mathcal{C}} 
	\sum_{j \in \mathcal{N}(i)} 
	\sum_{k=1}^{\eta(i)} 
	x_i \lambda_j z_{ijk}
	\label{eq:obj_function_opt_assortement} \\
	\text{s.t.} \
	& \sum_{j \in \mathcal{N}(i)} z_{ijk} \le 1 & \nonumber\\
	&  \qquad \qquad i \in \mathcal{C}, k = 1,\dots,\eta(i) & \label{eq:constr_each_customer_at_most_one_product_per_position} \\
	& \sum_{k=1}^{\eta(i)} z_{ijk} \le \xi_j & \nonumber\\
	&  \qquad \qquad i \in \mathcal{C},\ j \in \mathcal{N}(i) &
	\label{eq:constr_select_product_if_offered} \\
	& \sum_{l \in \mathcal{N}^-(i,j)} z_{ijl} \le 1 - \xi_j & \nonumber\\
	& \qquad \qquad i \in \mathcal{C},\ j \in \mathcal{N}(i) &
	\label{eq:constr_only_selects_most_prefered} \\
	& \sum_{l \in \mathcal{N}^-(i,j)} z_{ilk}
	- \sum_{p=1}^{k-1} z_{ijp}
	\le 1 - \xi_j & \nonumber\\
	& \qquad \qquad  i \in \mathcal{C},\ j \in \mathcal{N}(i),\ k = 2,\dots,\eta(i)
	\label{eq:constr_order_preference} & \\
	& z_{ijk} \ge 0  & \nonumber\\
	& \qquad \qquad i \in \mathcal{C},\ j \in \mathcal{N}(i),\ k = 1,\dots,\eta(i)
	\label{eq:constr_domain_z_assortement} & \\
	&  \xi_j \in \{0,1\} & \nonumber\\
	& \qquad \qquad  j \in \mathcal{N}
	\label{eq:constr_domain_x_assortement}
\end{align}
Objective function \eqref{eq:obj_function_opt_assortement} maximizes the expected revenue generated by the selected assortment across consumer types.
Constraints \eqref{eq:constr_each_customer_at_most_one_product_per_position} ensure that each consumer type $i$ selects at most one product in each of its $\eta(i)$ choice positions.
Constraints \eqref{eq:constr_select_product_if_offered} restrict product selections to offered products and impose that each product can be selected at most once.
Constraints \eqref{eq:constr_only_selects_most_prefered} enforce preference consistency by preventing a consumer type from selecting a less preferred product when a more preferred alternative is offered.
Constraints \eqref{eq:constr_order_preference} ensure that the $k$th choice ($k \geq 2$) is less preferred than the $(k-1)$th choice.
Finally, Constraints \eqref{eq:constr_domain_z_assortement} and \eqref{eq:constr_domain_x_assortement} specify the domains of the decision variables.
}


\section{Considered datasets}\label{sec:instances_appendix}

In this section, we provide a short description of the instances that were considered in our experiments.
We used synthetic and real preference-based datasets, as described in the next two subsections.
For reproducibility purposes, all the instances considered in this study are publicly available in the GitHub repository 
\url{https://github.com/lucianoccosta/single-multi-purchase-cg}.

\subsection{Synthetic datasets}\label{sec:syntetic_dataset_appendix}

We consider four families of synthetic datasets, as described next.

\begin{description}
	\item [\texttt{SinglePurchase}]
	For the single-purchase setting, we rely on the instance set introduced by \citet{Berbeglia2022}, which is grouped into two categories: random and structured.
	In the random category, product selection is non-hierarchical, whereas in the structured category, products follow a natural ordering (e.g., price levels).
	The random instances consider $n = 10$ products (10x1) and two classes of ground-truth models with $k = 10$ and $k = 100$ preference lists.
	\tocheck{In both cases, the ground-truth models are constructed by first considering a passive consumer $c_1 = (\sigma_1, 0)$, where $\sigma_1 = ()$ and $0$ indicates that the consumer is not willing to buy anything.}
	This is a consumer who, by definition, rejects any alternative from any possible assortment.
	This passive consumer type is assigned a probability $p_1 \in \{0.2, 0.5, 0.9\}$, inducing market shares of 80\%, 50\%, and 10\%.
	The data consist of $|\mathcal{P}| \in \{30, 75, 150, 300, 600\}$ periods, each generating 10 transactions, yielding $T \in \{300, 750, 1500, 3000, 6000\}$ total observations.
	In the structured instances, $n' = 5$ item types are considered, each item has three price levels (5x3), resulting in $n = 15$ distinct products.
	Preference lists are constructed by generating random permutations of the 15 products while assuring that the three price levels for every product are ranked accordingly (the cheapest fare is always preferred to the middle-priced one, which in turn is always preferred to the most expensive one).
	The permutation is then truncated to the first $\kappa$ products in the list, where $\kappa$ is an integer uniformly generated in the interval $[1, 15]$.
	The transaction-generation process mirrors that of the random case, except that assortments are built by selecting $3 \le |S'| \le 5$ base items and assigning one of three price levels to each.
	Overall, by varying $n$, $T$, $p_1$, and $k$, the authors generate 60 parameter configurations, and 60 instances per configuration, resulting in a total of 3,600 single-purchase estimation problems.
	Further details can be found in \citet[\S 3.1]{Berbeglia2022}.
	
	{
		\color{black}
        \item[\texttt{MultiPurchaseProbit}]:
		To assess the performance of our approach in a multi-purchase setting, we consider the synthetic instances proposed by \citet{Lin2025}.
        \tocheck{These instances are generated with utilities sampled from a Normal distribution as in the probit model, and allow consumers to purchase multiple products in a single transaction.}
		The instances vary with respect to the number of products, with
		$n \in \{5, 10, 15, 20, 25, 30\}$.
		For each product $i \in \mathcal{N}$, the random utility is defined as
		$U_i = V_i - \beta_i r_i + \varepsilon_i$,
		where $V_i \sim \mathcal{N}(3,1)$ represents the deterministic utility component,
		$r_i \sim \mathcal{U}(1,5)$ denotes the product revenue,
		$\beta_i = -|\tilde{\beta}_i|$ is the price-sensitivity parameter with $\tilde{\beta}_i \sim \mathcal{N}(0,1),$
		and $\varepsilon_i \sim \mathcal{N}(0,1)$ is an i.i.d.\ stochastic noise term.
		The no-purchase option is incorporated by assuming $V_0 = r_0 = 0$.
		Each instance consists of $T = 1{,}000$ transactions, which are split into $80\%$ in-sample (training) data and $20\%$ out-of-sample (testing) data.
		For each transaction, an offer set $S \subseteq \mathcal{N}$ is constructed by uniformly sampling a subset of products whose cardinality satisfies
		$|S| \in \left[\left\lceil \tfrac{n}{3} \right\rceil, \left\lfloor \tfrac{2n}{3} \right\rfloor\right]$,
		with the additional constraint that $|S| \ge 2$ when $n = 5$.
		Given an offer set $S$, an intended purchase quantity $\eta$ is drawn from the discrete support
		$q \in \{0,1,\ldots,Q_{\max}\}$, where $Q_{\max} = 2$.
		The probability of selecting quantity $q$ is proportional to
		$\mathbb{P}(Q = q) \propto \exp(\lambda q),$
		with $\lambda = 0.6$, favoring larger purchase quantities.
		Given the realized $\eta$, the consumer selects up to $q$ products from the offer set according to the MP-MNL choice probabilities induced by the utilities $U_i$, i.e., the $q$ products in $S$ with the highest utilities.
		Thus, a consumer purchases at most $q$ products, but may purchase fewer if fewer acceptable products are available in the assortment.
		For each value of $n$, the authors generate 30 independent instances, resulting in a total of 180 multi-purchase estimation problems.
		Additional details on the instance construction can be found in \citet{Lin2025}.

        To assess our methods using the family of instances proposed by \citet{Lin2025}, we used the same subset of instances used in their experiments, as well as their exact in-sample and out-of-sample split of the transactions in each instance \footnote{All details about their instances were obtained through personal communication with the authors.}.
         
	}
	
	\item [\texttt{MultiPurchaseRankedList}]
	\tocheck{Despite the \texttt{MultiPurchaseProbit} dataset of \citet{Lin2025} already capturing consumer behavior in which more than one product may be purchased within a single transaction, it restricts purchases to at most two products.
		For this reason, in an attempt to gain a deeper understanding of the impact of explicitly modeling multi-purchase behavior in DCM estimation, we additionally generate synthetic instances that allow consumers to purchase more than two products in the same transaction.}
	In these instances, each consumer type is represented by a pair $c = (\sigma,\eta)$, where $\sigma$ is a preference list and $\eta$ denotes the maximum number of products the consumer may buy.
	The instances vary with respect to:
	the number of products $n$,
	the market-share $\{80\%, 50\%, 20\%\}$ (which induce a probability $p_1$ of non-purchase),
	the number of ground-truth types $k$,
	the possible purchase-capacity values $\eta$, and the number of periods $|\mathcal{P}|$, with
	$n \in \{10,15\}$,
	$p_1 \in \{0.2, \, 0.5, \, 0.8\}$,
	$k \in \{25,50,100\}$,
	$\eta \in \{2,3,4,5\}$, and
	$|\mathcal{P}| \in \{30,75,150,300,600\}$.
	By combining these parameter values, 360 configurations are obtained, and 10 instances are generated for each, resulting in a total of 3,600 multi-purchase estimation problems.
	Consumer types in the ground truth are generated by sampling $k$ preference lists uniformly at random, with a passive consumer (i.e., one who purchases nothing) always included.
	After assigning probability $p_1$ to the passive consumer, the remaining probabilities $p_i$ are drawn from a uniform distribution, normalized so that $\sum_{i=1}^k p_i = 1$, and then assigned to the other consumer types.
	For each $\eta \in \{2,3,4,5\}$, a purchase limit $\eta'$ is associated with every list, sampled from $[1,\eta]$ and not exceeding the size of the preference list~$\sigma$, producing consumer types $c = (\sigma,\eta')$.
	This design allows multiple purchase-capacity values to be associated with the same ranking structure, enabling us to study how increased multi-purchase flexibility affects model performance.
	Transactions are generated over $|\mathcal{P}|$ periods, each containing 50 arrivals, resulting in
	$T \in \{1500,3750,7500,15000,30000\}$ observations.
	Each period considers an assortment $S$ with $5 \le |S| \le 10$ items.
	For each transaction, a consumer type $c = (\sigma_c, \eta_c)$ is sampled according to the associated probability distribution of the ground truth, and the purchase set consists of $\pi(c, S)$ as per the definitions from Section \ref{sec:problem_description}, thus allowing no-, single-, and multi-purchase behavior.
	To assess the performance of the model estimation in terms of revenue extraction, when doing assortment optimization, we generate per unit product revenues uniformly at random within the interval $[1,5]$.
	
	\item [\texttt{LimitedList}]
	For the limited-size list setting, we rely on the instance-generation procedure of \citet{Feldman2019}, where consumers purchase vertically differentiated products under single-purchase behavior ($\eta = 1$).
	Products are indexed by decreasing quality, and each consumer type $c$ is associated with a consideration interval $\mathcal{Q}_c = [i_c, \ldots, j_c]$, where $i_c$ captures the consumer's price tolerance and $j_c$ reflects the minimum acceptable quality.
	The corresponding preference list $\sigma_c$ is constructed from $\mathcal{Q}_c$ by randomly removing products with probability $p_d$ and applying up to $\mathcal{F}$ adjacent swaps (each with probability $0.5$) to introduce heterogeneous quality perception.
	We consider two experimental settings.
	In the first, we vary $k \in \{100,500\}$, $p_d \in \{0, 0.25\}$, $\mathcal{F} \in \{1,2,4\}$, and $T \in \{5,000, \, 10,000\}$ with $n = 20$, generating 20 instances per configuration for a total of 240 datasets.
	In the second, we focus on consumers with stronger substitution patterns, retaining only preference lists with at least eight products, and set $n = 20$, $k \in \{100,500\}$, $p_d = 0.25$, $\mathcal{F} = 4$, and $T = 10,000$, generating 10 instances per configuration (20 datasets).
	In all cases, arrival probabilities are sampled uniformly and normalized, assortments are drawn by including each product with probability~0.5 (plus the no-purchase option), and purchases correspond to the first preferred product in $S$ according to $\sigma_c$.
	In total, 260 limited-size list instances are generated.
	In line with the procedure used for the \texttt{MultiPurchaseRankedList} instances, revenues are generated uniformly at random within the interval $[1,5]$ so that we can assess revenue performance in the context of assortment optimization.
\end{description}

\subsection{Datasets derived from real preferences}

In addition to the synthetic datasets described before, we also consider two sets of instances derived from real preferences, as follows.

\begin{description}
	\item[\texttt{hotels}]
	This dataset has been constructed from publicly available booking data originally analyzed in \citet{Bodea2009}.
	The dataset contains booking records from five U.S. hotels over a one-month horizon (March-April 2007), including room availability and rate information at the time of each booking.
	A product corresponds to a room type (e.g., king, queen, suite).
	Products with fewer than 10 purchases are removed, and transactions with missing or inconsistent availability information are discarded \citep{Berbeglia2022}.
	For each remaining booking, four synthetic no-purchase transactions are created using the same offer set, and 80\% of the resulting observations are selected as in-sample data.
	After preprocessing, the five hotels contain $n \in \{4, 6, 8, 10\}$ room types.
	The number of in-sample transactions ranges from approximately $1,000$ to $5,300$ across hotels, with out-of-sample sets ranging from roughly $255$ to $1,325$ transactions.
	More precisely, the datasets contain:
	Hotel~\#1 with $n = 10$ products and $5,290$ in-sample / $1,325$ out-of-sample transactions;
	Hotel~\#2 with $n = 10$ and $1,845$ / $465$ transactions;
	Hotel~\#3 with $n = 8$ and $5,070$ / $1,270$ transactions;
	Hotel~\#4 with $n = 4$ and $1,100$ / $275$ transactions; and
	Hotel~\#5 with $n = 6$ and $1,000$ / $255$ transactions.
	The interested reader is referred to \citet{Berbeglia2022} for full details about the preprocessed datasets used in these experiments.
	
	\item[\texttt{sushi}]
	The sushi dataset is based on the preference data collected by \citet{Kamishima2003}, which contains complete ranked lists from 5,000 consumers over $n = 10$ sushi types.
	Each consumer provides a full ranking of the ten items, yielding one of the $10!$ possible permutations;
	note that the no-purchase option is not explicitly modeled in the survey.
	We consider the same experimental configurations adopted in \citet{Berbeglia2022}.
	The instances vary with respect to the number of periods $|\mathcal{P}| \in \{30, 75, 150, 300, 600\}$.
	In each period, an offer set $S \subseteq \{1,\ldots,10\}$ is sampled such that $3 \le |S| \le 6$, reflecting a small-assortment setting where substitution effects are relevant.
	Two preference specifications are analyzed:
	(i) \texttt{Top10}, where full rankings are used and a consumer purchases if at least one preferred product is present in the assortment; and
	(ii) \texttt{Top3}, where rankings are truncated after the third position, so a purchase occurs only if one of the top three preferred products is available. 
	The \texttt{Top3} configuration induces a stronger relationship between purchase probability and assortment size while still preserving preference heterogeneity.
	For each value of $|\mathcal{P}|$, 60 instances are considered under each specification, resulting in 300 \texttt{Top10} instances and 300 \texttt{Top3} instances overall.
\end{description}
	

\section{Performance metrics}\label{sec:performance_metrics}

To assess the performance of DCMs obtained under different estimation strategies, we consider the \emph{root mean square error} (RMSE) metric.
RMSE is widely used in the choice-modeling literature and evaluates how accurately the estimated choice model predicts the ground-truth model.
Following \citet{Berbeglia2022}, we consider two variants of RMSE depending on the available information:
a \emph{soft RMSE} (SRMSE) and a \emph{hard RMSE} (HRMSE).
The SRMSE is computed using all possible assortments and transactions and evaluates how closely the estimated model recovers the underlying ground-truth probabilities.
When ground-truth probabilities are not available, we instead compute the HRMSE using only the out-of-sample transactions.

Since the metrics SRMSE and HRMSE, commonly used in revenue management (e.g. \citet{Berbeglia2022}) only consider single-purchase transactions, we need to extend them to accommodate the multi-purchase setting.
Consider a ground truth choice model $\theta(\eta)$ where consumers buy at most $\eta$ items.
Let $\mathbb{P}_{\theta(\eta)}(B \mid S)$ denote the ground-truth probabilities under choice model $\theta(\eta)$ that a consumer would buy bundle $B$ when the offer set is $S$.
Consider now a choice model resulting from an estimation procedure, and let us denote by $\mathbb{P}(B \mid S, \mathbf{x})$ the equivalent choice probabilities that are predicted by the estimated model.
We assume that under the estimated choice models, consumers can buy at most $\eta$ items.
Then, the SRMSE is given by
\begin{align}
& SRMSE = \nonumber\\
& \sqrt{\frac{\sum\limits_{S \subseteq \mathcal{N}} \sum\limits_{B\in \mathcal{B}(S, \eta)} \left(\mathbb{P}(B \mid S, \mathbf{x}) - \mathbb{P}_{\theta_{\eta}}(B \mid S) \right)^2}{\sum\limits_{S\subseteq \mathcal{N}} |\mathcal{B}(S, \eta)|
}}, \label{eq:srmse}
\end{align}
where $\mathcal{B}(S, \eta) = \{B\subseteq S: |B|\leq\eta\}$ is the set of bundles of products from $S$ of cardinality at most $\eta$ (including the empty bundle). Observe that when $\eta=1$, this definition coincides with the usual SRMSE employed to assess single purchase models.

Instead of relying on a ground-truth choice model, the HRMSE is computed using a set of out-of-sample transactions.
Let $\mathcal{T}'=\{(B_1,S_1),\hdots,(B_{T},S_T)\}$ denote this set, and let $\mathcal{I}(B, S_t)$ be an indicator function equal to $1$ if the bundle $B \subseteq S_t$ was purchased in transaction
$t \in \mathcal{T}'$
under the offer set $S_t$ (i.e., $B = B_t$), and $0$ otherwise (i.e., $B \neq B_t$).
Let $\eta := \max\{|B_t| : t \in [\mathcal{T}']\}$ denote the maximum cardinality of any purchased bundle across all transactions.
Similar to the SRMSE, consider a choice model resulting from an estimation procedure and denote by $\mathbb{P}(B \mid S, \mathbf{x})$ the predicted probability that a consumer purchases bundle $B$ when the offer set is $S$.
The HRMSE is given by

\begin{align}
& HRMSE = \nonumber\\
& \sqrt{\frac{\sum\limits_{t=1}^T \sum\limits_{B\in\mathcal{B}(S_t, \eta)} \left( \mathcal{I}(B, S_t) - \mathbb{P}(B \mid S_t) \right)^2}{\sum\limits_{t=1}^T |\mathcal{B}(S_t, \eta)|
}}. \label{eq:hrmse}
\end{align}


{
\color{black}
\citet{Lin2025} considered an alternative metric to \eqref{eq:srmse} and \eqref{eq:hrmse} to assess predictions in the multi-purchase setting. 
Instead of evaluating bundles of products, they proposed a metric that predicts the purchase probability of individual products in the offer sets, assigning a predicted probability to each item.
Since this metric captures the marginal contribution of each product within a bundle to its purchase probability, we refer to it as the \textit{Marginal-RMSE} (MRMSE).
The MRMSE is computed using expression \eqref{eq:mrsme}, originally introduced in \citet{Lin2025}, and is given by:
\begin{align}
& MRMSE = \nonumber\\
& \sqrt{
		\frac{1}{\sum_{t \in \mathcal{T}'} |S_t|} 
		\sum_{t \in \mathcal{T}'} \sum_{j \in S_t}
		\left(
		\mathcal{I}(j, S_t) - \mathbb{P}(j \mid S_t)
		\right)^2
	}.
	\label{eq:mrsme}
\end{align}
Functions $\mathcal{I}(\cdot)$ and $\mathbb{P}(\cdot)$ have the same meaning as above, differing only in that they are applied to individual products rather than to bundles.
In Equation \eqref{eq:mrsme}, instead of computing the probability of a complete transaction, the metric evaluates the vector of marginal purchase probabilities $\mathbb{P}(\cdot)$ item by item, without explicitly considering the no-purchase option (which would appear in $S_t'$).
The definition of $\mathcal{I}(j, S_t)$ is straightforward:
$\mathcal{I}(j, S_t) = 1$ if product $j$ is purchased in transaction $t$, and $0$ otherwise.
The value of $\mathbb{P}(j \mid S_t)$ is computed as $\mathbb{P}(j \mid S_t) = \sum_{i \in \mathcal{C}} x_i \, a_{i,j,S_t}$, where $x_i$ is the probability associated with consumer type $i$, and $a_{i,j,S_t}$ is a binary parameter indicating whether consumer type $i$ purchases product $j$ when faced with offer set $S_t$.
}

\tocheckTwo{
MRMSE can be particularly useful when assessing the performance of simpler models that impose restrictions on the maximum number of purchases. 
While SRMSE and HRMSE provide appropriate metrics when the maximum number of purchases allowed by the estimated model matches that observed in the data, they become less informative when estimating simpler models that restrict the number of purchases.
In particular, when the data contain transactions with more than $k$ purchased products, but the estimated model allows at most $k$ purchases, the model will necessarily assign zero probability to all transactions with more than $k$ products. 
In such cases, SRMSE and HRMSE may fail to properly distinguish between models with different predictive capabilities, since all models incur the same error for those transactions.
To address this limitation, MRMSE evaluates the marginal probability of each product being purchased within an offer set instead of the probability of complete purchase bundles. 
This allows the metric to assess the predictive quality of simpler models even when they cannot reproduce the exact bundle sizes observed in the data.
}


{
\color{black}

\section{Assortment Revenue Computation}\label{sec:monte_carlo_simulation}

In this section, we describe the evaluation procedure used to compute the revenue induced by a given assortment.
We follow the same procedure as the one employed by \citet{Lin2025}.
We assess the expected revenue of assortments obtained under different estimation methods using a consumer-based simulation framework.
We attempted to reproduce this procedure as closely as possible and, through personal communication with the authors, obtained additional implementation details.
The evaluation procedure consists of three main steps, which are described in detail below.
    
\paragraph{Step 1: Input Assortments}	
We consider assortments obtained by solving an MILP formulation (as for instance the one described in Appendix \ref{sec:milp_assortement_recommendation}), considering the probability mass functions (PMFs) estimated using a given method.  
For each instance and each estimation setting, namely \texttt{ML\_DP} and \texttt{L1\_DP}, a single optimal assortment is generated and used as input for the evaluation procedure.
	
\paragraph{Step 2: Generation of the Consumer Evaluation Base}
For each instance, we generate an evaluation base consisting of $10{,}000$ synthetic consumers using the product-level information provided by the original authors (instance files obtained through personal communication).
Each instance file contains the following parameters for each product $j \in \mathcal{N}$:
deterministic utilities $V = (V_j)_{j \in \mathcal{N}}$,
price sensitivities $\beta = (\beta_j)_{j \in \mathcal{N}}$, and
prices (revenues) $P = (P_j)_{j \in \mathcal{N}}$.
For each consumer, random utilities are generated according to a probit-type random utility model.
The utility of the outside option (no-purchase) is defined as $U_0 \sim \mathcal{N}(0,1)$.
For each product $j \in \mathcal{N}$, the utility is given by $U_j = V_j - \beta_j P_j + \varepsilon_j$,
where the error terms $\varepsilon_j \sim \mathcal{N}(0,1)$ are independently and identically distributed.
The complete utility vector is therefore defined as $U = (U_0, U_1, \dots, U_{|\mathcal{N}|})$.

Given the utility vector $U$, all alternatives (including the outside option) are sorted in non-increasing order of utility.
This sorted list defines the consumer's preference ranking.
To complete the consumer type definition, each consumer is assigned an intended purchase quantity $\eta$.
The value of $\eta$ is independently drawn from the discrete set $k \in \{0,1,\dots,Q_{\max}\}$, with $Q_{\max}=2$ (as obtained through personal communication with the authors), according to the exponential distribution $\mathbb{P}(\mathrm{IPQ}=k) \propto \exp(\lambda k)$, where $\lambda = 0.6$.
The resulting probabilities are normalized to ensure that the resulting probability mass function is valid.
Each consumer is fully characterized by a preference ranking and an $\eta$ value.
After generating the $10{,}000$ consumers for a given instance, the entire consumer base is stored in an external file to ensure reproducibility of the revenue evaluation process.

\paragraph{Step 3: Revenue Evaluation of a Given Assortment}
Let $S \subseteq \mathcal{N}$ denote the assortment associated with a given instance and estimation method.
To compute the expected revenue induced by $S$, we simulate the purchasing behavior of all consumers in the evaluation base.
For each consumer, the following procedure is applied:
\begin{enumerate}
	\item The assortment $S$ is presented to the consumer.
	\item The consumer scans her/his preference list sequentially.
	\item Products belonging to $S$ are selected until one of the following conditions is met:
	\begin{itemize}
		\item the number of selected products reaches the consumer's $\eta$;
		\item the outside option is encountered;
		\item no additional products from $S$ are available.
	\end{itemize}
\end{enumerate}
Thus, a consumer purchases at most $\eta$ products, but may purchase fewer if fewer acceptable products are available in the assortment.
Given the set of products selected by the consumer, the revenue collected from that consumer is computed as the sum of the corresponding product prices.

Repeating this process for all consumers, the expected revenue of the assortment $S$ is estimated as
\begin{equation}
	\bar{R}(S) = \frac{1}{M} \sum_{i=1}^{M} R_i(S),
\end{equation}
where $M = 10{,}000$ denotes the number of consumers and $R_i(S)$ is the revenue generated by consumer $i$.
    
%

\begin{algorithm}[H]
\caption{Evaluation of Expected Revenue}
\begin{algorithmic}[1]

\For{each instance}
\State Load the optimal assortment $S$
\State Generate $M = 10{,}000$ consumers

\For{each consumer $i = 1,\dots,M$}
    \State Draw $U_{0} \sim \mathcal{N}(0,1)$

    \For{each product $j \in \mathcal{N}$}
        \State Draw $\varepsilon_{j} \sim \mathcal{N}(0,1)$
        \State Compute $U_{j} \gets V_{j} - \beta_{j} P_{j} + \varepsilon_{j}$
    \EndFor

    \State Build the preference ranking by sorting products $j \in \mathcal{N}$, with respect to $U = (U_{0}, U_{1}, \dots, U_{|\mathcal{N}|})$ in non-increasing order

    \State Draw $\eta \in \{0,1,\dots,Q_{\max}\}$ from a truncated discrete distribution, $Q_{\max} = 2$.
\EndFor

\State Simulate purchases from $S$ for all consumers
\State Compute the average revenue $\bar{R}(S)$
\EndFor


\end{algorithmic}
\label{alg:revenue_eval}
\end{algorithm}
}


\section{Impact of the acceleration techniques on the performance of the DP}\label{sec:dp:accel}

In this experiment, we evaluate the effectiveness of two acceleration techniques introduced in Section~\ref{sec:acceleration_techiniques}, namely, completion bounds (Section~\ref{sec:completion_bounds}) and unreachable-products procedure (Section~\ref{sec:unreachable_prodcts}).
We consider four settings for the proposed DP:  
\begin{description}
	\item[None:] neither feature is active;  
	\item[UP:] only unreachable products are active;  
	\item[CB:] only completion bounds are active;  
	\item[Both:] both features are active.  
\end{description}

We compare the performance of these settings under single- and multi-purchase estimation problems, using instances \texttt{SinglePurchase} and \texttt{MultiPurchaseRankedList} (Section \ref{sec:instances}).
Tables~\ref{tab:cpu_times_cb_up_exp1:sp:em}-\ref{tab:cpu_times_cb_up_exp1:mp:l1:share80} report the mean computational times necessary to estimate the EM and the $\ell_1$-error models in the single- and in the multi-purchase settings.
For the computation of geometric means, timeouts are treated as having a computing time of 3 hours.
Moreover, we restrict our analysis to the problem configurations such that the minimum geometric mean across all algorithm variants is less than 3 hours, and such that the maximum geometric mean across all algorithm variants is higher than $60$ seconds for the single-purchase case, and higher than $600$ seconds for the multi-purchase case.
For this reason, the results for certain instance classes in both the \texttt{SinglePurchase} and \texttt{MultiPurchaseRankedList} settings are not presented in the tables.
We observe that, overall, it is the addition of both features that results in the best performance.

\begin{table}[!hbtp]
\caption{Computational time taken in seconds as a function of the two acceleration techniques available for single-purchase problems using the EM estimation model (shifted geometric mean of the duration in seconds)}
\label{tab:cpu_times_cb_up_exp1:sp:em}
\centering
\begin{tabular}{ccccrrrr}
Share & Structure & Lists & $|\mathcal{P}|$ & None & UP & CB & Both \\ \hline
\multirow{6}{*}{20\%} & \multirow{6}{*}{5x3} & \multirow{3}{*}{10} & 150 & 65.2 & 69.5 & \textbf{8.9} & \textbf{8.9} \\
&  & & 300 & 153.0 & 159.9 & 37.6 & \textbf{37.4} \\
&  & & 600 & 359.7 & 357.3 & 124.0 & \textbf{120.5} \\
\cline{3-8}
&  & \multirow{3}{*}{100} & 150 & 71.2 & 76.0 & \textbf{15.7} & 16.2 \\
&  &  & 300 & 157.0 & 164.2 & \textbf{57.2} & 57.9 \\
&  &  & 600 & 364.8 & 354.9 & 194.6 & \textbf{190.2} \\
\hline
\multirow{6}{*}{50\%} & \multirow{6}{*}{5x3} & \multirow{3}{*}{10} & 150 & 91.7 & 104.4 & 22.0 & \textbf{21.4} \\
 &  &  & 300 & 202.3 & 201.8 & \textbf{66.3} & 67.6 \\
 &  &  & 600 & 441.8 & 366.3 & 181.4 & \textbf{167.2} \\
 \cline{5-8}
 &  & \multirow{3}{*}{100} & 150 & 89.6 & 100.0 & \textbf{34.0} & 34.3 \\
 &  &  & 300 & 173.0 & 189.3 & \textbf{88.7} & 93.5 \\
 &  &  & 600 & 393.0 & 414.5 & 288.4 & \textbf{270.7} \\
\hline
\multirow{8}{*}{90\%} & \multirow{8}{*}{5x3} & \multirow{4}{*}{10} & 75 & 70.5 & 73.5 & 11.0 & \textbf{10.2} \\
 &  &  & 150 & 166.0 & 172.6 & \textbf{34.4} & 35.0 \\
 &  &  & 300 & 354.9 & 371.4 & \textbf{103.7} & 109.7 \\
 &  &  & 600 & 717.2 & 754.5 & \textbf{255.5} & 273.4 \\
 \cline{3-8}
 &  & \multirow{4}{*}{100} & 75 & 84.2 & 85.0 & 19.2 & \textbf{19.0} \\
 &  &  & 150 & 165.3 & 183.0 & 62.5 & \textbf{61.6} \\
 &  &  & 300 & 360.9 & 369.7 & 183.5 & \textbf{170.0} \\
 &  &  & 600 & 811.8 & 841.7 & \textbf{497.4} & 498.8 \\
\hline
\end{tabular}
\end{table}

\begin{table}[!hbtp]
\caption{Computational time taken in seconds as a function of the two acceleration techniques available for single-purchase problems using the $\ell_1$-error model (shifted geometric mean of the duration in seconds)}
\label{tab:cpu_times_cb_up_exp1:sp:l1}
\centering
\begin{tabular}{ccccrrrr}
Share & Structure & Lists & $|\mathcal{P}|$ & None & UP & CB & Both \\ \hline
\multirow{10}{*}{20\%} & \multirow{10}{*}{5x3} & \multirow{5}{*}{10} & 30 & 133.5 & 80.9 & 4.6 & \textbf{3.0} \\
 & & & 75 & 249.0 & 216.4 & 22.7 & \textbf{20.5} \\
 & & & 150 & 378.4 & 345.7 & 67.7 & \textbf{61.2} \\
 & & & 300 & 637.7 & 573.1 & 223.2 & \textbf{217.1} \\
 & & & 600 & 1184.5 & 1064.8 & 607.8 & \textbf{550.1} \\
 \cline{3-8}
 & & \multirow{5}{*}{100} & 30 & 146.2 & 89.4 & 6.3 & \textbf{3.8} \\
 & & & 75 & 262.1 & 219.5 & 33.4 & \textbf{30.6} \\
 & & & 150 & 411.6 & 406.0 & 104.5 & \textbf{97.9} \\
 & &  & 300 & 625.1 & 623.1 & 276.8 & \textbf{265.9} \\
 & & & 600 & 1160.3 & 1135.5 & 714.9 & \textbf{712.4} \\
\hline
\multirow{9}{*}{50\%} & \multirow{9}{*}{5x3} & \multirow{5}{*}{10} & 30 & 405.2 & 216.1 & 30.9 & \textbf{14.0} \\
 & & & 75 & 656.5 & 497.0 & 102.5 & \textbf{73.0} \\
 & & & 150 & 959.6 & 839.1 & 279.5 & \textbf{225.6} \\
 & & & 300 & 2109.9 & 1682.3 & 706.6 & \textbf{643.7} \\
 &  & & 600 & 10158.9 & 8957.0 & 3466.5 & \textbf{3288.8} \\
 \cline{3-8}
 & & \multirow{4}{*}{100} & 30 & 522.1 & 282.2 & 48.6 & \textbf{23.1} \\
 &  & & 75 & 830.5 & 673.7 & 156.2 & \textbf{117.0} \\
 &  & & 150 & 1205.6 & 1107.8 & 402.4 & \textbf{349.6} \\
 &  & & 300 & 8678.2 & 6812.9 & 1130.7 & \textbf{977.6} \\
\hline
\multirow{9}{*}{90\%}  & \multirow{9}{*}{5x3}  & \multirow{5}{*}{10} & 30 & 702.1 & 323.8 & 65.0 & \textbf{26.0} \\
 & &  & 75 & 1081.1 & 727.9 & 180.3 & \textbf{114.4} \\
 & &  & 150 & 1766.8 & 1212.8 & 440.4 & \textbf{325.8} \\
 & &  & 300 & 5306.9 & 2851.8 & 926.0 & \textbf{832.4} \\
 & &  & 600 & 10800.0 & 10800.0 & \textbf{5552.9} & 5559.9 \\
  \cline{3-8}
 & & \multirow{4}{*}{100} & 30 & 952.6 & 433.7 & 124.4 & \textbf{46.9} \\
 & &  & 75 & 1441.4 & 970.4 & 279.6 & \textbf{177.9} \\
 & &  & 150 & 4353.9 & 1692.8 & 630.4 & \textbf{475.0} \\
 & &  & 300 & 10800.0 & 10800.0 & 1499.3 & \textbf{1212.4} \\
\hline
\end{tabular}
\end{table}

\begin{table}[!hbtp]
\caption{Computational time taken in seconds as a function of the two acceleration techniques available under multi-purchase problems using the EM estimation model (shifted geometric mean of the duration in seconds) for problems with $|\mathcal{N}| = 15$ and a market share of 20\%.}
\label{tab:cpu_times_cb_up_exp1:mp:em:share20}
\centering
\begin{tabular}{cccrrrr}
$|\mathcal{C}|$ & $\eta$ & $|\mathcal{P}|$ & None & UP & CB & Both \\ \hline
\multirow{3}{*}{25} & \multirow{3}{*}{2} & 150 & 766.4 & 430.3 & 37.2 & \textbf{32.8} \\
 &  & 300 & 1579.4 & 1129.2 & 97.7 & \textbf{95.1} \\
 &  & 600 & 10800.0 & 10800.0 & 312.5 & \textbf{282.9} \\
\hline
\multirow{3}{*}{50} & \multirow{3}{*}{2} & 150 & 609.9 & 633.4 & 85.4 & \textbf{79.2} \\
 &  & 300 & 10800.0 & 6132.3 & \textbf{436.7} & 488.8 \\
 &  & 600 & 10800.0 & 10800.0 & 1117.1 & \textbf{786.4} \\
\hline
\multirow{3}{*}{100} & \multirow{3}{*}{2} & 150 & 626.7 & 541.6 & 126.7 & \textbf{104.1} \\
 &  & 300 & 6115.2 & 1578.0 & 625.2 & \textbf{397.3} \\
 &  & 600 & 10800.0 & 10800.0 & 4617.8 & \textbf{1570.0} \\
\hline
\multirow{3}{*}{25} & \multirow{3}{*}{3} & 150 & 1331.0 & 1082.3 & \textbf{118.4} & 143.1 \\
 &  & 300 & 10800.0 & 10800.0 & 401.2 & \textbf{377.5} \\
 &  & 600 & 10800.0 & 10800.0 & 1666.5 & \textbf{1073.2} \\
\hline
\multirow{3}{*}{50} & \multirow{3}{*}{3} & 150 & 1804.4 & 1381.2 & 322.2 & \textbf{275.6} \\
 &  & 300 & 10800.0 & 10800.0 & 908.7 & \textbf{897.3} \\
 &  & 600 & 10800.0 & 10800.0 & 8851.3 & \textbf{4029.4} \\
\hline
\multirow{3}{*}{100} & \multirow{3}{*}{3} & 150 & 2912.1 & 1116.1 & 326.1 & \textbf{280.9} \\
 &  & 300 & 10800.0 & 10800.0 & 926.0 & \textbf{754.0} \\
 &  & 600 & 10800.0 & 10800.0 & 8933.9 & \textbf{8782.3} \\
\hline
\multirow{4}{*}{25} & \multirow{4}{*}{4} & 75 & 586.5 & 615.9 & \textbf{58.4} & 65.0 \\
 &  & 150 & 1683.0 & 1315.8 & 154.9 & \textbf{144.7} \\
 &  & 300 & 10800.0 & 8988.0 & 576.9 & \textbf{501.8} \\
 &  & 600 & 10800.0 & 10800.0 & 3885.8 & \textbf{1543.1} \\
\hline
\multirow{3}{*}{50} & \multirow{3}{*}{4} & 75 & 769.0 & 731.2 & \textbf{135.5} & 136.1 \\
 &  & 150 & 5249.3 & 5931.6 & 530.6 & \textbf{442.7} \\
 &  & 300 & 10800.0 & 10800.0 & 2844.1 & \textbf{1347.9} \\
\hline
\multirow{3}{*}{100} & \multirow{3}{*}{4} & 75 & 717.6 & 612.7 & 161.8 & \textbf{161.1} \\
 &  & 150 & 8821.7 & 1825.3 & 625.0 & \textbf{597.8} \\
 &  & 300 & 10800.0 & 10800.0 & 6107.2 & \textbf{3346.4} \\
\hline
\multirow{4}{*}{25} & \multirow{4}{*}{5} & 75 & 656.4 & 780.2 & \textbf{130.3} & 135.7 \\
 &  & 150 & 8910.9 & 4845.4 & 374.3 & \textbf{314.6} \\
 &  & 300 & 10800.0 & 10800.0 & 1287.1 & \textbf{941.6} \\
 &  & 600 & 10800.0 & 10800.0 & 10800.0 & \textbf{7495.8} \\
\hline
\multirow{3}{*}{50} & \multirow{3}{*}{5} & 75 & 811.0 & 968.9 & \textbf{217.7} & 224.9 \\
 &  & 150 & 6047.7 & 10800.0 & 854.7 & \textbf{843.4} \\
 &  & 300 & 10800.0 & 10800.0 & 7299.0 & \textbf{3488.4} \\
\hline
\multirow{3}{*}{100} & \multirow{3}{*}{5} & 75 & 1372.0 & 908.0 & \textbf{335.1} & 338.7 \\
 &  & 150 & 10800.0 & 8878.7 & 907.7 & \textbf{844.1} \\
 &  & 300 & 10800.0 & 10800.0 & 8978.5 & \textbf{8691.1} \\
\hline
\end{tabular}

\end{table}

\begin{table}[!hbtp]
\caption{Computational time taken in seconds as a function of the two acceleration techniques available under multi-purchase problems using the EM estimation model (shifted geometric mean of the duration in seconds) for problems with $|\mathcal{N}| = 15$ and a market share of 50\%.}
\label{tab:cpu_times_cb_up_exp1:mp:em:share50}
\centering
\begin{tabular}{cccrrrr}
$|\mathcal{C}|$ & $\eta$ & $|\mathcal{P}|$ & None & UP & CB & Both \\ \hline
\multirow{3}{*}{25} & \multirow{3}{*}{2} & 150 & 668.2  & 642.9  & 49.6   & \textbf{48.7} \\
                    &                    & 300 & 3592.6 & 1511.8 & 114.7  & \textbf{107.3} \\
                    &                    & 600 & 10800.0 & 10800.0 & \textbf{433.4} & 439.6 \\
\hline
\multirow{3}{*}{50} & \multirow{3}{*}{2} & 150 & 916.9  & 828.2  & 132.7  & \textbf{123.6} \\
                    &                    & 300 & 4940.2 & 2749.0 & 449.8  & \textbf{381.9} \\
                    &                    & 600 & 10800.0 & 10800.0 & 2567.8 & \textbf{1601.5} \\
\hline
\multirow{2}{*}{100} & \multirow{2}{*}{2} & 150 & 1214.8 & 1048.8 & \textbf{284.1} & 322.4 \\
                     &                    & 300 & 10800.0 & 10800.0 & 1591.1 & \textbf{1201.9} \\
\hline
\multirow{3}{*}{25} & \multirow{3}{*}{3} & 150 & 3099.8 & 1350.8 & 188.3  & \textbf{178.1} \\
                    &                    & 300 & 10800.0 & 10800.0 & 774.6  & \textbf{639.7} \\
                    &                    & 600 & 10800.0 & 10800.0 & 3257.2 & \textbf{1425.1} \\
\hline
\multirow{4}{*}{50} & \multirow{4}{*}{3} & 75  & 676.0  & 613.0  & \textbf{105.3} & 133.4 \\
                    &                    & 150 & 2656.4 & 1392.3 & 404.6  & \textbf{381.2} \\
                    &                    & 300 & 10800.0 & 10800.0 & 2077.3 & \textbf{1138.1} \\
                    &                    & 600 & 10800.0 & 10800.0 & 6988.0 & \textbf{5592.5} \\
\hline
\multirow{3}{*}{100} & \multirow{3}{*}{3} & 75  & 666.0  & 659.7  & 191.5  & \textbf{168.3} \\
                     &                    & 150 & 7410.3 & 7221.4 & 669.9  & \textbf{620.6} \\
                     &                    & 300 & 10800.0 & 10800.0 & 4631.2 & \textbf{3364.8} \\
\hline
\multirow{4}{*}{25} & \multirow{4}{*}{4} & 75  & 757.3  & 714.4  & 127.8  & \textbf{117.4} \\
                    &                    & 150 & 4015.3 & 1446.0 & 233.5  & \textbf{228.6} \\
                    &                    & 300 & 8979.9 & 10800.0 & 736.3  & \textbf{685.2} \\
                    &                    & 600 & 10800.0 & 10800.0 & 5731.5 & \textbf{3646.7} \\
\hline
\multirow{3}{*}{50} & \multirow{3}{*}{4} & 75  & 1048.4 & 1026.8 & 270.8  & \textbf{254.0} \\
                    &                    & 150 & 7422.1 & 4178.1 & 726.2  & \textbf{598.5} \\
                    &                    & 300 & 10800.0 & 10800.0 & 6985.2 & \textbf{1820.3} \\
\hline
\multirow{2}{*}{100} & \multirow{2}{*}{4} & 75  & 1445.7 & 1265.5 & 447.7  & \textbf{413.5} \\
                     &                    & 150 & 10800.0 & 10800.0 & 1391.4 & \textbf{1308.7} \\
\hline
\multirow{4}{*}{25} & \multirow{4}{*}{5} & 75  & 1085.7 & 1159.1 & \textbf{285.4} & 305.8 \\
                    &                    & 150 & 7263.8 & 7561.9 & 778.8  & \textbf{590.2} \\
                    &                    & 300 & 10800.0 & 10800.0 & 2897.7 & \textbf{2047.3} \\
                    &                    & 600 & 10800.0 & 10800.0 & 10800.0 & \textbf{9016.9} \\
\hline
\multirow{3}{*}{50} & \multirow{3}{*}{5} & 75  & 1365.2 & 1139.8 & 622.9  & \textbf{523.9} \\
                    &                    & 150 & 9005.8 & 8985.3 & 1210.9 & \textbf{871.3} \\
                    &                    & 300 & 10800.0 & 10800.0 & 10800.0 & \textbf{4052.0} \\
\hline
\multirow{2}{*}{100} & \multirow{2}{*}{5} & 75  & 3145.9 & 2454.1 & \textbf{707.3} & 714.0 \\
                     &                    & 150 & 10800.0 & 10800.0 & 6158.5 & \textbf{2824.0} \\
\hline
\end{tabular}
\end{table}

\begin{table}[!hbtp]
\caption{Computational time taken in seconds as a function of the two acceleration techniques available under multi-purchase problems using the EM estimation model (shifted geometric mean of the duration in seconds) for problems with $|\mathcal{N}| = 15$ and a market share of 80\%.}
\label{tab:cpu_times_cb_up_exp1:mp:em:share80}
\centering
\begin{tabular}{cccrrrr}
$|\mathcal{C}|$ & $\eta$ & $|\mathcal{P}|$ & None & UP & CB & Both \\ \hline
\multirow{3}{*}{25} & \multirow{3}{*}{2} & 150 & 641.8   & 575.4   & 69.9   & \textbf{61.8} \\
                    &                    & 300 & 1628.2  & 1249.9  & \textbf{178.9} & 182.6 \\
                    &                    & 600 & 10800.0 & 7512.5  & 485.8  & \textbf{334.3} \\
\hline
\multirow{3}{*}{50} & \multirow{3}{*}{2} & 150 & 1093.2  & 908.7   & 203.5  & \textbf{155.2} \\
                    &                    & 300 & 7234.7  & 4860.8  & 593.3  & \textbf{504.2} \\
                    &                    & 600 & 10800.0 & 10800.0 & 1419.7 & \textbf{1052.8} \\
\hline
\multirow{2}{*}{100} & \multirow{2}{*}{2} & 150 & 2046.8  & 1231.3  & \textbf{357.9} & 407.8 \\
                     &                    & 300 & 10800.0 & 10800.0 & 1757.8 & \textbf{1198.7} \\
\hline
\multirow{4}{*}{25} & \multirow{4}{*}{3} & 75  & 661.1   & 612.8   & 82.2   & \textbf{76.3} \\
                    &                    & 150 & 1742.0  & 1383.6  & 256.2  & \textbf{236.2} \\
                    &                    & 300 & 10800.0 & 10800.0 & 1073.2 & \textbf{826.4} \\
                    &                    & 600 & 10800.0 & 10800.0 & 4772.3 & \textbf{1351.2} \\
\hline
\multirow{3}{*}{50} & \multirow{3}{*}{3} & 75  & 840.3   & 758.5   & 226.4  & \textbf{193.3} \\
                    &                    & 150 & 4018.6  & 2542.0  & 541.9  & \textbf{363.9} \\
                    &                    & 300 & 10800.0 & 10800.0 & 1959.7 & \textbf{1195.4} \\
\hline
\multirow{3}{*}{100} & \multirow{3}{*}{3} & 75  & 841.0   & 841.9   & 267.9  & \textbf{264.9} \\
                     &                    & 150 & 5909.7  & 4557.8  & 813.2  & \textbf{751.2} \\
                     &                    & 300 & 10800.0 & 10800.0 & 7261.4 & \textbf{4870.6} \\
\hline
\multirow{4}{*}{25} & \multirow{4}{*}{4} & 75  & 816.8   & 837.5   & 123.4  & \textbf{123.2} \\
                    &                    & 150 & 5661.9  & 1437.3  & 346.5  & \textbf{291.0} \\
                    &                    & 300 & 10800.0 & 10800.0 & 1114.3 & \textbf{729.7} \\
                    &                    & 600 & 10800.0 & 10800.0 & 5771.2 & \textbf{3595.6} \\
\hline
\multirow{3}{*}{50} & \multirow{3}{*}{4} & 75  & 1473.1  & 1244.0  & 374.5  & \textbf{371.1} \\
                    &                    & 150 & 9013.3  & 8997.3  & 900.8  & \textbf{754.7} \\
                    &                    & 300 & 10800.0 & 10800.0 & 6217.0 & \textbf{5995.7} \\
\hline
\multirow{2}{*}{100} & \multirow{2}{*}{4} & 75  & 2596.4  & 1545.4  & 647.5  & \textbf{614.3} \\
                     &                    & 150 & 10800.0 & 10800.0 & 7468.3 & \textbf{1796.2} \\
\hline
\multirow{3}{*}{25} & \multirow{3}{*}{5} & 75  & 1407.9  & 1148.7  & \textbf{242.3} & 251.4 \\
                    &                    & 150 & 10800.0 & 8943.2  & 644.7  & \textbf{487.3} \\
                    &                    & 300 & 10800.0 & 10800.0 & \textbf{4805.8} & 5613.4 \\
\hline
\multirow{3}{*}{50} & \multirow{3}{*}{5} & 75  & 2007.7  & 1231.4  & 508.6  & \textbf{506.6} \\
                    &                    & 150 & 10800.0 & 10800.0 & 1490.4 & \textbf{1210.7} \\
                    &                    & 300 & 10800.0 & 10800.0 & 10800.0 & \textbf{8746.3} \\
\hline
\multirow{3}{*}{100} & \multirow{3}{*}{5} & 30  & 696.2   & 605.1   & 307.1  & \textbf{262.8} \\
                     &                    & 75  & 5157.0  & 5030.6  & 1193.8 & \textbf{937.4} \\
                     &                    & 150 & 10800.0 & 10800.0 & 10800.0 & \textbf{5092.3} \\
\hline
\end{tabular}

\end{table}

\begin{table}[!hbtp]
\caption{Computational time taken in seconds as a function of the two acceleration techniques available under multi-purchase problems using the $\ell_1$-error model (shifted geometric mean of the duration in seconds) for problems with $|\mathcal{N}| = 15$ and a market share of 20\%.}
\label{tab:cpu_times_cb_up_exp1:mp:l1:share20}
\centering
\begin{tabular}{cccrrrr}
$|\mathcal{C}|$ & $\eta$ & $|\mathcal{P}|$ & None & UP & CB & Both \\ \hline
\multirow{5}{*}{25} & \multirow{5}{*}{2} & 30  & 644.9   & 415.6   & 17.3   & \textbf{8.8} \\
                    &                    & 75  & 6074.3  & 2873.2  & 103.2  & \textbf{55.9} \\
                    &                    & 150 & 10800.0 & 10800.0 & 229.4  & \textbf{139.1} \\
                    &                    & 300 & 10800.0 & 10800.0 & 620.2  & \textbf{443.3} \\
                    &                    & 600 & 10800.0 & 10800.0 & 3079.7 & \textbf{1783.1} \\
\hline
\multirow{4}{*}{50} & \multirow{4}{*}{2} & 75  & 8370.5  & 3009.1  & 125.4  & \textbf{74.0} \\
                    &                    & 150 & 10800.0 & 10800.0 & 545.1  & \textbf{315.8} \\
                    &                    & 300 & 10800.0 & 10800.0 & 1305.8 & \textbf{982.3} \\
                    &                    & 600 & 10800.0 & 10800.0 & 10800.0 & \textbf{9006.4} \\
\hline
\multirow{3}{*}{100} & \multirow{3}{*}{2} & 75  & 2645.1  & 978.7   & 97.5   & \textbf{61.1} \\
                     &                    & 150 & 10800.0 & 8792.4  & 728.6  & \textbf{387.7} \\
                     &                    & 300 & 10800.0 & 10800.0 & 6089.6 & \textbf{1123.7} \\
\hline
\multirow{5}{*}{25} & \multirow{5}{*}{3} & 30  & 1935.3  & 782.6   & 43.5   & \textbf{21.7} \\
                    &                    & 75  & 7342.2  & 8532.4  & 112.5  & \textbf{63.7} \\
                    &                    & 150 & 10800.0 & 10800.0 & 236.7  & \textbf{156.2} \\
                    &                    & 300 & 10800.0 & 10800.0 & 436.0  & \textbf{321.7} \\
                    &                    & 600 & 10800.0 & 10800.0 & 893.9  & \textbf{700.0} \\
\hline
\multirow{5}{*}{50} & \multirow{5}{*}{3} & 30  & 1541.4  & 576.7   & 43.9   & \textbf{25.6} \\
                    &                    & 75  & 8945.7  & 7076.3  & 212.3  & \textbf{128.4} \\
                    &                    & 150 & 10800.0 & 10800.0 & 512.1  & \textbf{385.0} \\
                    &                    & 300 & 10800.0 & 10800.0 & 2791.8 & \textbf{1116.4} \\
                    &                    & 600 & 10800.0 & 10800.0 & \textbf{8506.3} & 8612.5 \\
\hline
\multirow{4}{*}{100} & \multirow{4}{*}{3} & 30  & 749.0   & 372.6   & 50.3   & \textbf{24.1} \\
                     &                    & 75  & 10800.0 & 2488.6  & 289.7  & \textbf{134.3} \\
                     &                    & 150 & 10800.0 & 10800.0 & 949.7  & \textbf{566.0} \\
                     &                    & 300 & 10800.0 & 10800.0 & 5638.4 & \textbf{1533.0} \\
\hline
\multirow{5}{*}{25} & \multirow{5}{*}{4} & 30  & 2785.1  & 1204.5  & 63.6   & \textbf{38.7} \\
                    &                    & 75  & 5957.6  & 3916.7  & 108.1  & \textbf{86.0} \\
                    &                    & 150 & 10800.0 & 10800.0 & 350.5  & \textbf{173.3} \\
                    &                    & 300 & 10800.0 & 10800.0 & 412.9  & \textbf{408.3} \\
                    &                    & 600 & 10800.0 & 10800.0 & 1111.2 & \textbf{940.8} \\
\hline
\multirow{4}{*}{50} & \multirow{4}{*}{4} & 30  & 1790.2  & 853.4   & 73.1   & \textbf{43.7} \\
                    &                    & 75  & 7123.2  & 8353.0  & 396.8  & \textbf{243.6} \\
                    &                    & 150 & 10800.0 & 10800.0 & 1176.4 & \textbf{675.5} \\
                    &                    & 300 & 10800.0 & 10800.0 & 4411.4 & \textbf{1836.1} \\
\hline
\multirow{4}{*}{100} & \multirow{4}{*}{4} & 30  & 627.3   & 472.5   & 49.3   & \textbf{38.2} \\
                     &                    & 75  & 4838.1  & 2356.0  & 363.4  & \textbf{192.4} \\
                     &                    & 150 & 10800.0 & 10800.0 & 3503.8 & \textbf{1141.5} \\
                     &                    & 300 & 10800.0 & 10800.0 & \textbf{8805.9} & 10800.0 \\
\hline
\multirow{5}{*}{25} & \multirow{5}{*}{5} & 30  & 5607.0  & 2289.8  & 119.7  & \textbf{83.6} \\
                    &                    & 75  & 8520.0  & 8884.5  & 273.0  & \textbf{195.5} \\
                    &                    & 150 & 8634.0  & 8784.4  & 429.6  & \textbf{342.7} \\
                    &                    & 300 & 10800.0 & 10800.0 & 733.1  & \textbf{660.8} \\
                    &                    & 600 & 10800.0 & 10800.0 & 1328.8 & \textbf{1283.8} \\
\hline
\multirow{4}{*}{50} & \multirow{4}{*}{5} & 30  & 1332.1  & 959.2   & 148.6  & \textbf{94.1} \\
                    &                    & 75  & 8565.3  & 8631.5  & 512.6  & \textbf{373.5} \\
                    &                    & 150 & 10800.0 & 10800.0 & 2696.7 & \textbf{1535.7} \\
                    &                    & 300 & 10800.0 & 10800.0 & 10800.0 & \textbf{7172.8} \\
\hline
\multirow{3}{*}{100} & \multirow{3}{*}{5} & 30  & 1724.5  & 627.9   & 221.7  & \textbf{133.3} \\
                     &                    & 75  & 10800.0 & 7426.8  & 876.4  & \textbf{591.4} \\
                     &                    & 150 & 10800.0 & 10800.0 & 3545.2 & \textbf{1994.6} \\
\hline
\end{tabular}

\end{table}

\begin{table}[!hbtp]
\caption{Computational time taken in seconds as a function of the two acceleration techniques available under multi-purchase problems using the $\ell_1$-error model (shifted geometric mean of the duration in seconds) for problems with $|\mathcal{N}| = 15$ and a market share of 50\%.}
\label{tab:cpu_times_cb_up_exp1:mp:l1:share50}
\centering
\begin{tabular}{cccrrrr}
$|\mathcal{C}|$ & $\eta$ & $|\mathcal{P}|$ & None & UP & CB & Both \\ \hline
\multirow{4}{*}{25} & \multirow{4}{*}{2} & 30  & 7231.4  & 1242.8  & 105.6   & \textbf{43.4} \\
                    &                    & 75  & 10800.0 & 10800.0 & 447.3   & \textbf{241.7} \\
                    &                    & 150 & 10800.0 & 10800.0 & 1439.0  & \textbf{785.1} \\
                    &                    & 300 & 10800.0 & 10800.0 & 10800.0 & \textbf{3998.1} \\
\hline
\multirow{3}{*}{50} & \multirow{3}{*}{2} & 30  & 8909.5  & 1698.4  & 254.8   & \textbf{105.3} \\
                    &                    & 75  & 10800.0 & 10800.0 & 1356.4  & \textbf{585.0} \\
                    &                    & 150 & 10800.0 & 10800.0 & 8900.4  & \textbf{2469.0} \\
\hline
\multirow{2}{*}{100} & \multirow{2}{*}{2} & 30  & 4756.7  & 1539.3  & 164.7   & \textbf{80.8} \\
                     &                    & 75  & 10800.0 & 10800.0 & 2076.9  & \textbf{796.1} \\
\hline
\multirow{3}{*}{25} & \multirow{3}{*}{3} & 30  & 8183.6  & 6143.6  & 165.0   & \textbf{82.0} \\
                    &                    & 75  & 10800.0 & 10800.0 & 1061.0  & \textbf{549.2} \\
                    &                    & 150 & 10800.0 & 10800.0 & 6128.5  & \textbf{1676.2} \\
\hline
\multirow{3}{*}{50} & \multirow{3}{*}{3} & 30  & 10800.0 & 10800.0 & 426.0   & \textbf{205.6} \\
                    &                    & 75  & 10800.0 & 10800.0 & 4906.4  & \textbf{1378.9} \\
                    &                    & 150 & 10800.0 & 10800.0 & 10800.0 & \textbf{6273.4} \\
\hline
\multirow{2}{*}{100} & \multirow{2}{*}{3} & 30  & 10800.0 & 7177.0  & 394.7   & \textbf{215.2} \\
                     &                    & 75  & 10800.0 & 10800.0 & 7071.4  & \textbf{2766.5} \\
\hline
\multirow{3}{*}{25} & \multirow{3}{*}{4} & 30  & 10800.0 & 7403.2  & 394.3   & \textbf{187.6} \\
                    &                    & 75  & 10800.0 & 10800.0 & 6878.5  & \textbf{2724.6} \\
                    &                    & 150 & 10800.0 & 10800.0 & 10800.0 & \textbf{7106.1} \\
\hline
\multirow{2}{*}{50} & \multirow{2}{*}{4} & 30  & 10800.0 & 10800.0 & 1176.6  & \textbf{570.2} \\
                    &                    & 75  & 10800.0 & 10800.0 & 8886.3  & \textbf{4537.9} \\
\hline
\multirow{1}{*}{100} & \multirow{1}{*}{4} & 30  & 10800.0 & 10800.0 & 1607.0  & \textbf{525.8} \\
\hline
\multirow{1}{*}{25} & \multirow{1}{*}{5} & 30  & 10800.0 & 8766.5  & 1012.6  & \textbf{512.6} \\
\hline
\multirow{2}{*}{50} & \multirow{2}{*}{5} & 30  & 10800.0 & 10800.0 & 4668.7  & \textbf{1801.1} \\
                    &                    & 75  & 10800.0 & 10800.0 & 10800.0 & \textbf{7337.4} \\
\hline
\multirow{2}{*}{100} & \multirow{2}{*}{5} & 30  & 10800.0 & 10800.0 & 3310.7  & \textbf{1218.8} \\
                     &                    & 75  & 10800.0 & 10800.0 & 10800.0 & \textbf{9021.2} \\
\hline
\end{tabular}
\end{table}

\begin{table}[!hbtp]
\caption{Computational time taken in seconds as a function of the two acceleration techniques available under multi-purchase problems using the $\ell_1$-error model (shifted geometric mean of the duration in seconds) for problems with $|\mathcal{N}| = 15$ and a market share of 80\%.}
\label{tab:cpu_times_cb_up_exp1:mp:l1:share80}
\centering
\begin{tabular}{cccrrrr}
$|\mathcal{C}|$ & $\eta$ & $|\mathcal{P}|$ & None & UP & CB & Both \\ \hline
\multirow{3}{*}{25} & \multirow{3}{*}{2} & 30  & 7438.2  & 1875.7  & 171.0   & \textbf{66.4} \\
                    &                    & 75  & 10800.0 & 10800.0 & 619.6   & \textbf{317.9} \\
                    &                    & 150 & 10800.0 & 10800.0 & 3277.9  & \textbf{1035.1} \\
\hline
\multirow{3}{*}{50} & \multirow{3}{*}{2} & 30  & 10800.0 & 6138.1  & 419.8   & \textbf{171.8} \\
                    &                    & 75  & 10800.0 & 10800.0 & 4768.4  & \textbf{944.9} \\
                    &                    & 150 & 10800.0 & 10800.0 & 10800.0 & \textbf{9026.9} \\
\hline
\multirow{2}{*}{100} & \multirow{2}{*}{2} & 30  & 10800.0 & 8778.6  & 715.0   & \textbf{282.6} \\
                     &                    & 75  & 10800.0 & 10800.0 & 10800.0 & \textbf{1420.0} \\
\hline
\multirow{2}{*}{25} & \multirow{2}{*}{3} & 30  & 10800.0 & 10800.0 & 429.2   & \textbf{205.5} \\
                    &                    & 75  & 10800.0 & 10800.0 & 1990.8  & \textbf{864.4} \\
\hline
\multirow{2}{*}{50} & \multirow{2}{*}{3} & 30  & 10800.0 & 10800.0 & 1276.1  & \textbf{557.4} \\
                    &                    & 75  & 10800.0 & 10800.0 & 10800.0 & \textbf{9018.3} \\
\hline
\multirow{1}{*}{100} & \multirow{1}{*}{3} & 30  & 10800.0 & 10800.0 & 2767.6  & \textbf{650.5} \\
\hline
\multirow{3}{*}{25} & \multirow{3}{*}{4} & 30  & 10800.0 & 10800.0 & 806.7   & \textbf{472.4} \\
                    &                    & 75  & 10800.0 & 10800.0 & 9011.9  & \textbf{4957.2} \\
                    &                    & 300 & 10800.0 & 10800.0 & 10800.0 & \textbf{8992.5} \\
\hline
\multirow{1}{*}{50} & \multirow{1}{*}{4} & 30  & 10800.0 & 10800.0 & 7077.5  & \textbf{1159.7} \\
\hline
\multirow{1}{*}{100} & \multirow{1}{*}{4} & 30  & 10800.0 & 10800.0 & 8892.9  & \textbf{2179.6} \\
\hline
\multirow{1}{*}{25} & \multirow{1}{*}{5} & 30  & 10800.0 & 10800.0 & 3410.4  & \textbf{1271.5} \\
\hline
\multirow{1}{*}{50} & \multirow{1}{*}{5} & 30  & 10800.0 & 10800.0 & 8827.3  & \textbf{6778.2} \\
\hline
\multirow{1}{*}{100} & \multirow{1}{*}{5} & 30  & 10800.0 & 10800.0 & 10800.0 & \textbf{7106.5} \\
\hline
\end{tabular}
\end{table}


\newpage

\section{Comparison between DP and the MIP for the multi-purchase EM estimation problem}\label{sec:detailled_multipurchase_setting}

In this section, we provide a more detailed comparison between DP and the MIP for the multi-purchase EM estimation problem.
Tables~\ref{tab:mip-dp:n4:c100:s80}-\ref{tab:mip-dp:n18:c25:s80} report results grouped by the number of products, $|\mathcal{N}|$, the cardinality of the set of consumer types in the ground-truth model, $|\mathcal{C}|$, the market share (\textit{Share}), the maximum number of purchases per consumer type, $\eta$, and the number of periods.
In the main body of the manuscript, results are aggregated only with respect to \textit{Share} and $\eta$.
\begin{table}[!hbtp]
\centering
\caption{Comparison between the DP and the MIP for the multi-purchase EM estimation problem for $|\mathcal{N}|=10, |\mathcal{C}|=25, \text{Share}=20$ \label{tab:mip-dp:n10:c25:s20}}
\begin{tabular}{cccc}
\hline
$\eta$ & $|\mathcal{P}|$ & DP & MIP \\ \hline
2 & 300 & \textbf{6.8} & 821.9 \\
2 & 600 & \textbf{13.7} & 1245.3 \\
3 & 600 & \textbf{40.5} & 898.6 \\ \hline
\end{tabular}
\end{table}

\begin{table}[!hbtp]
\centering
\caption{Comparison between the DP and the MIP for the multi-purchase EM estimation problem for $|\mathcal{N}|=10, |\mathcal{C}|=50, \text{Share}=20$ \label{tab:mip-dp:n10:c50:s20}}
\begin{tabular}{cccc}
\hline
$\eta$ & $|\mathcal{P}|$ & DP & MIP \\ \hline
2 & 150 & \textbf{3.1} & 625.1 \\
2 & 300 & \textbf{11.2} & 1814.1 \\
2 & 600 & \textbf{41.3} & 4643.3 \\
3 & 300 & \textbf{22.4} & 820.6 \\
3 & 600 & \textbf{67.7} & 3190.6 \\
4 & 600 & \textbf{103.4} & 970.1 \\ \hline
\end{tabular}
\end{table}

\begin{table}[!hbtp]
\centering
\caption{Comparison between the DP and the MIP for the multi-purchase EM estimation problem for $|\mathcal{N}|=10, |\mathcal{C}|=100, \text{Share}=20$ \label{tab:mip-dp:n10:c100:s20}}
\begin{tabular}{cccc}
\hline
$\eta$ & $|\mathcal{P}|$ & DP & MIP \\ \hline
2 & 150 & \textbf{5.3} & 827.9 \\
2 & 300 & \textbf{17.1} & 2111.1 \\
2 & 600 & \textbf{67.3} & 8598.7 \\
3 & 150 & \textbf{11.7} & 617.5 \\
3 & 300 & \textbf{31.1} & 1276.6 \\
3 & 600 & \textbf{84.1} & 6401.4 \\
4 & 600 & \textbf{118.1} & 2245.7 \\ \hline
\end{tabular}
\end{table}

\begin{table}[!hbtp]
\centering
\caption{Comparison between the DP and the MIP for the multi-purchase EM estimation problem for $|\mathcal{N}|=10, |\mathcal{C}|=25, \text{Share}=50$ \label{tab:mip-dp:n10:c25:s50}}
\begin{tabular}{cccc}
\hline
$\eta$ & $|\mathcal{P}|$ & DP & MIP \\ \hline
2 & 150 & \textbf{4.2} & 630.9 \\
2 & 300 & \textbf{11.3} & 1902.8 \\
2 & 600 & \textbf{26.5} & 3544.0 \\
3 & 300 & \textbf{27.7} & 792.2 \\
3 & 600 & \textbf{56.2} & 1218.1 \\ \hline
\end{tabular}
\end{table}

\begin{table}[!hbtp]
\centering
\caption{Comparison between the DP and the MIP for the multi-purchase EM estimation problem for $|\mathcal{N}|=10, |\mathcal{C}|=50, \text{Share}=50$ \label{tab:mip-dp:n10:c50:s50}}
\begin{tabular}{cccc}
\hline
$\eta$ & $|\mathcal{P}|$ & DP & MIP \\ \hline
2 & 150 & \textbf{7.2} & 1077.6 \\
2 & 300 & \textbf{26.5} & 3201.7 \\
2 & 600 & \textbf{67.6} & 10800.0 \\
3 & 300 & \textbf{44.4} & 2194.3 \\
3 & 600 & \textbf{110.0} & 8329.2 \\
4 & 300 & \textbf{58.2} & 1264.7 \\
4 & 600 & \textbf{130.0} & 1508.4 \\ \hline
\end{tabular}
\end{table}

\begin{table}[!hbtp]
\centering
\caption{Comparison between the DP and the MIP for the multi-purchase EM estimation problem for $|\mathcal{N}|=10, |\mathcal{C}|=100, \text{Share}=50$ \label{tab:mip-dp:n10:c100:s50}}
\begin{tabular}{cccc}
\hline
$\eta$ & $|\mathcal{P}|$ & DP & MIP \\ \hline
2 & 150 & \textbf{9.1} & 1271.4 \\
2 & 300 & \textbf{30.5} & 7200.1 \\
2 & 600 & \textbf{79.3} & 10800.0 \\
3 & 150 & \textbf{17.0} & 1494.0 \\
3 & 300 & \textbf{37.5} & 4944.8 \\
3 & 600 & \textbf{135.2} & 10800.0 \\
4 & 150 & \textbf{27.7} & 632.5 \\
4 & 300 & \textbf{72.7} & 1794.3 \\
4 & 600 & \textbf{162.0} & 9577.5 \\
5 & 600 & \textbf{236.6} & 1001.5 \\ \hline
\end{tabular}
\end{table}

\begin{table}[!hbtp]
\centering
\caption{Comparison between the DP and the MIP for the multi-purchase EM estimation problem for $|\mathcal{N}|=10, |\mathcal{C}|=25, \text{Share}=80$ \label{tab:mip-dp:n10:c25:s80}}
\begin{tabular}{cccc}
\hline
$\eta$ & $|\mathcal{P}|$ & DP & MIP \\ \hline
2 & 150 & \textbf{6.0} & 961.2 \\
2 & 300 & \textbf{13.7} & 1560.7 \\
2 & 600 & \textbf{37.1} & 4890.3 \\
3 & 300 & \textbf{48.2} & 783.3 \\
3 & 600 & \textbf{96.4} & 3794.0 \\
4 & 600 & \textbf{108.7} & 886.3 \\ \hline
\end{tabular}
\end{table}

\begin{table}[!hbtp]
\centering
\caption{Comparison between the DP and the MIP for the multi-purchase EM estimation problem for $|\mathcal{N}|=10, |\mathcal{C}|=50, \text{Share}=80$ \label{tab:mip-dp:n10:c50:s80}}
\begin{tabular}{cccc}
\hline
$\eta$ & $|\mathcal{P}|$ & DP & MIP \\ \hline
2 & 150 & \textbf{11.4} & 1673.7 \\
2 & 300 & \textbf{37.6} & 8209.4 \\
2 & 600 & \textbf{87.4} & 10800.0 \\
3 & 150 & \textbf{21.6} & 1167.2 \\
3 & 300 & \textbf{61.7} & 3391.5 \\
3 & 600 & \textbf{146.5} & 8679.9 \\
4 & 300 & \textbf{91.7} & 2581.6 \\
4 & 600 & \textbf{193.0} & 5110.4 \\
5 & 600 & \textbf{229.9} & 871.7 \\ \hline
\end{tabular}
\end{table}


\begin{table}[!hbtp]
\centering
\caption{Comparison between the DP and the MIP for the multi-purchase EM estimation problem for $|\mathcal{N}| = 10, |\mathcal{C}|=100, \text{Share}=80$ \label{tab:mip-dp:n4:c100:s80}}
\begin{tabular}{cccc}
\hline
$\eta$ & $|\mathcal{P}|$ & DP & MIP \\ \hline
2 & 75 & \textbf{3.5} & 851.9 \\
2 & 150 & \textbf{12.8} & 2333.7 \\
2 & 300 & \textbf{58.3} & 10800.0 \\
2 & 600 & \textbf{140.2} & 10800.0 \\ \hline
\end{tabular}
\end{table}


\begin{table}[!hbtp]
\centering
\caption{Comparison between the DP and the MIP for the multi-purchase EM estimation problem for  $|\mathcal{N}| = 10, |\mathcal{C}|=100, \text{Share}=80$ \label{tab:mip-dp:n6:c100:s80}}
\begin{tabular}{cccc}
\hline
$\eta$ & $|\mathcal{P}|$ & DP & MIP \\ \hline
3 & 75 & \textbf{9.3} & 913.0 \\
3 & 150 & \textbf{24.8} & 3418.6 \\
3 & 300 & \textbf{74.5} & 10800.0 \\
3 & 600 & \textbf{186.0} & 10800.0 \\
4 & 150 & \textbf{40.6} & 1164.2 \\
4 & 300 & \textbf{111.6} & 6367.0 \\
4 & 600 & \textbf{259.8} & 10800.0 \\
5 & 300 & \textbf{160.8} & 740.0 \\
5 & 600 & \textbf{449.1} & 2003.0 \\ \hline
\end{tabular}
\end{table}



\begin{table}[!hbtp]
\centering
\caption{Comparison between the DP and the MIP for the multi-purchase EM estimation problem for $|\mathcal{N}|=15, |\mathcal{C}|=25, \text{Share}=20$ \label{tab:mip-dp:n15:c25:s20}}
\begin{tabular}{cccc}
\hline
$\eta$ & $|\mathcal{P}|$ & DP & MIP \\ \hline
2 & 30 & \textbf{2.3} & 9414.4 \\
2 & 75 & \textbf{11.8} & 10800.0 \\
2 & 150 & \textbf{32.8} & 10800.0 \\
2 & 300 & \textbf{95.1} & 10800.0 \\
2 & 600 & \textbf{282.9} & 10800.0 \\
3 & 30 & \textbf{7.1} & 617.1 \\
3 & 75 & \textbf{31.5} & 9224.9 \\
3 & 150 & \textbf{143.1} & 10800.0 \\
3 & 300 & \textbf{377.5} & 8733.0 \\
3 & 600 & \textbf{1073.2} & 10800.0 \\
4 & 150 & \textbf{144.7} & 1996.4 \\
4 & 300 & \textbf{501.8} & 3206.1 \\
4 & 600 & \textbf{1290.6} & 10800.0 \\
5 & 150 & \textbf{314.6} & 620.0 \\
5 & 300 & \textbf{941.6} & 2857.7 \\
5 & 600 & \textbf{1825.1} & 2729.9 \\ \hline
\end{tabular}
\end{table}

\begin{table}[!hbtp]
\centering
\caption{Comparison between the DP and the MIP for the multi-purchase EM estimation problem for $|\mathcal{N}|=15, |\mathcal{C}|=50, \text{Share}=20$ \label{tab:mip-dp:n15:c50:s20}}
\begin{tabular}{cccc}
\hline
$\eta$ & $|\mathcal{P}|$ & DP & MIP \\ \hline
2 & 30 & \textbf{2.7} & 10800.0 \\
2 & 75 & \textbf{19.0} & 10800.0 \\
2 & 150 & \textbf{79.2} & 10800.0 \\
2 & 300 & \textbf{488.8} & 10800.0 \\
2 & 600 & \textbf{786.4} & 10800.0 \\
3 & 30 & \textbf{8.9} & 792.9 \\
3 & 75 & \textbf{57.1} & 6926.7 \\
3 & 150 & \textbf{275.6} & 10800.0 \\
3 & 300 & \textbf{897.3} & 10800.0 \\
3 & 600 & \textbf{1652.4} & 10800.0 \\
4 & 75 & \textbf{136.1} & 697.4 \\
4 & 150 & \textbf{442.7} & 4490.8 \\
4 & 300 & \textbf{1347.9} & 10800.0 \\
4 & 600 & \textbf{1836.8} & 10800.0 \\
5 & 150 & \textbf{843.4} & 1238.1 \\
5 & 300 & \textbf{1715.2} & 3814.3 \\
5 & 600 & \textbf{1872.5} & 10800.0 \\ \hline
\end{tabular}
\end{table}

\begin{table}[!hbtp]
\centering
\caption{Comparison between the DP and the MIP for the multi-purchase EM estimation problem for $|\mathcal{N}|=15, |\mathcal{C}|=100, \text{Share}=20$ \label{tab:mip-dp:n15:c100:s20}}
\begin{tabular}{cccc}
\hline
$\eta$ & $|\mathcal{P}|$ & DP & MIP \\ \hline
2 & 30 & \textbf{4.9} & 10800.0 \\
2 & 75 & \textbf{33.6} & 10800.0 \\
2 & 150 & \textbf{104.1} & 10800.0 \\
2 & 300 & \textbf{397.3} & 10800.0 \\
2 & 600 & \textbf{1313.9} & 10800.0 \\
3 & 75 & \textbf{65.1} & 7028.7 \\
3 & 150 & \textbf{280.9} & 10800.0 \\
3 & 300 & \textbf{754.0} & 10800.0 \\
3 & 600 & \textbf{1774.5} & 10800.0 \\
4 & 75 & \textbf{161.1} & 683.4 \\
4 & 150 & \textbf{597.8} & 7082.2 \\
4 & 300 & \textbf{1642.4} & 10800.0 \\
4 & 600 & \textbf{1835.9} & 10800.0 \\
5 & 150 & \textbf{844.1} & 2011.2 \\
5 & 300 & \textbf{1750.6} & 7592.1 \\
5 & 600 & \textbf{1834.7} & 10800.0 \\ \hline
\end{tabular}
\end{table}

\begin{table}[!hbtp]
\centering
\caption{Comparison between the DP and the MIP for the multi-purchase EM estimation problem for $|\mathcal{N}|=15, |\mathcal{C}|=25, \text{Share}=50$ \label{tab:mip-dp:n15:c25:s50}}
\begin{tabular}{cccc}
\hline
$\eta$ & $|\mathcal{P}|$ & DP & MIP \\ \hline
2 & 30 & \textbf{4.6} & 10800.0 \\
2 & 75 & \textbf{15.3} & 10800.0 \\
2 & 150 & \textbf{48.7} & 10800.0 \\
2 & 300 & \textbf{107.3} & 10800.0 \\
2 & 600 & \textbf{439.6} & 10800.0 \\
3 & 30 & \textbf{10.5} & 3370.8 \\
3 & 75 & \textbf{60.5} & 2355.0 \\
3 & 150 & \textbf{178.1} & 8603.6 \\
3 & 300 & \textbf{639.7} & 10800.0 \\
3 & 600 & \textbf{1425.1} & 10800.0 \\
4 & 75 & \textbf{117.4} & 4460.3 \\
4 & 150 & \textbf{228.6} & 5872.0 \\
4 & 300 & \textbf{685.2} & 10800.0 \\
4 & 600 & \textbf{1502.4} & 10800.0 \\
5 & 75 & \textbf{305.8} & 692.9 \\
5 & 150 & \textbf{590.2} & 3252.8 \\
5 & 300 & \textbf{1202.2} & 9479.0 \\
5 & 600 & \textbf{1821.1} & 10800.0 \\ \hline
\end{tabular}
\end{table}

\begin{table}[!hbtp]
\centering
\caption{Comparison between the DP and the MIP for the multi-purchase EM estimation problem for $|\mathcal{N}|=15, |\mathcal{C}|=50, \text{Share}=50$ \label{tab:mip-dp:n15:c50:s50}}
\begin{tabular}{cccc}
\hline
$\eta$ & $|\mathcal{P}|$ & DP & MIP \\ \hline
2 & 30 & \textbf{9.9} & 10800.0 \\
2 & 75 & \textbf{48.7} & 10800.0 \\
2 & 150 & \textbf{123.6} & 10800.0 \\
2 & 300 & \textbf{381.9} & 10800.0 \\
2 & 600 & \textbf{1339.3} & 10800.0 \\
3 & 30 & \textbf{29.9} & 5214.4 \\
3 & 75 & \textbf{133.4} & 10800.0 \\
3 & 150 & \textbf{381.2} & 10800.0 \\
3 & 300 & \textbf{1138.1} & 10800.0 \\
3 & 600 & \textbf{1608.8} & 10800.0 \\
4 & 75 & \textbf{254.0} & 8317.8 \\
4 & 150 & \textbf{598.5} & 9339.4 \\
4 & 300 & \textbf{1524.4} & 10800.0 \\
4 & 600 & \textbf{1833.9} & 10800.0 \\
5 & 75 & \textbf{523.9} & 1184.1 \\
5 & 150 & \textbf{871.3} & 8918.3 \\
5 & 300 & \textbf{1667.6} & 10800.0 \\
5 & 600 & \textbf{1837.7} & 10800.0 \\ \hline
\end{tabular}
\end{table}

\begin{table}[!hbtp]
\centering
\caption{Comparison between the DP and the MIP for the multi-purchase EM estimation problem for $|\mathcal{N}|=15, |\mathcal{C}|=100, \text{Share}=50$ \label{tab:mip-dp:n15:c100:s50}}
\begin{tabular}{cccc}
\hline
$\eta$ & $|\mathcal{P}|$ & DP & MIP \\ \hline
2 & 30 & \textbf{10.2} & 10800.0 \\
2 & 75 & \textbf{75.3} & 10800.0 \\
2 & 150 & \textbf{322.4} & 10800.0 \\
2 & 300 & \textbf{1201.9} & 10800.0 \\
2 & 600 & \textbf{1821.3} & 10800.0 \\
3 & 30 & \textbf{30.3} & 9118.1 \\
3 & 75 & \textbf{168.3} & 10800.0 \\
3 & 150 & \textbf{620.6} & 10800.0 \\
3 & 300 & \textbf{1652.3} & 10800.0 \\
3 & 600 & \textbf{1833.1} & 10800.0 \\
4 & 30 & \textbf{87.6} & 1210.7 \\
4 & 75 & \textbf{413.5} & 9050.2 \\
4 & 150 & \textbf{1308.7} & 10800.0 \\
4 & 300 & \textbf{1823.0} & 10800.0 \\
4 & 600 & \textbf{1843.4} & 10800.0 \\
5 & 75 & \textbf{714.0} & 2309.5 \\
5 & 150 & \textbf{1654.4} & 10800.0 \\
5 & 300 & \textbf{1827.7} & 10800.0 \\
5 & 600 & \textbf{1840.3} & 10800.0 \\ \hline
\end{tabular}
\end{table}

\begin{table}[!hbtp]
\centering
\caption{Comparison between the DP and the MIP for the multi-purchase EM estimation problem for $|\mathcal{N}|=18, |\mathcal{C}|=25, \text{Share}=80$ \label{tab:mip-dp:n18:c25:s80}}
\begin{tabular}{cccc}
\hline
$\eta$ & $|\mathcal{P}|$ & DP & MIP \\ \hline
2 & 30 & \textbf{6.4} & 5405.5 \\
2 & 75 & \textbf{19.8} & 10800.0 \\
2 & 150 & \textbf{61.8} & 10800.0 \\
2 & 300 & \textbf{182.6} & 10800.0 \\
2 & 600 & \textbf{334.3} & 10800.0 \\
3 & 30 & \textbf{25.5} & 1929.8 \\
3 & 75 & \textbf{76.3} & 7451.8 \\
3 & 150 & \textbf{236.2} & 7863.6 \\
3 & 300 & \textbf{826.4} & 10800.0 \\
3 & 600 & \textbf{1351.2} & 10800.0 \\
4 & 75 & \textbf{123.2} & 2186.1 \\
4 & 150 & \textbf{291.0} & 8443.1 \\
4 & 300 & \textbf{729.7} & 10800.0 \\
4 & 600 & \textbf{1479.2} & 10800.0 \\
5 & 75 & \textbf{251.4} & 842.7 \\
5 & 150 & \textbf{487.3} & 6328.6 \\
5 & 300 & \textbf{1613.1} & 10800.0 \\
5 & 600 & \textbf{1847.1} & 10800.0 \\ \hline
\end{tabular}
\end{table}

\begin{table}[!hbtp]
\centering
\caption{Comparison between the DP and the MIP for the multi-purchase EM estimation problem for $|\mathcal{N}|=15, |\mathcal{C}|=50, \text{Share}=80$ \label{tab:mip-dp:n15:c50:s80}}
\begin{tabular}{cccc}
\hline
$\eta$ & $|\mathcal{P}|$ & DP & MIP \\ \hline
2 & 30 & \textbf{13.0} & 10800.0 \\
2 & 75 & \textbf{69.6} & 10800.0 \\
2 & 150 & \textbf{155.2} & 10800.0 \\
2 & 300 & \textbf{504.2} & 10800.0 \\
2 & 600 & \textbf{1052.8} & 10800.0 \\
3 & 30 & \textbf{37.0} & 8942.3 \\
3 & 75 & \textbf{193.3} & 10800.0 \\
3 & 150 & \textbf{363.9} & 10800.0 \\
3 & 300 & \textbf{1195.4} & 10800.0 \\
3 & 600 & \textbf{1829.3} & 10800.0 \\
4 & 30 & \textbf{71.3} & 675.6 \\
4 & 75 & \textbf{371.1} & 5315.3 \\
4 & 150 & \textbf{754.7} & 10800.0 \\
4 & 300 & \textbf{1719.3} & 10800.0 \\
4 & 600 & \textbf{1841.1} & 10800.0 \\
5 & 75 & \textbf{506.6} & 2677.9 \\
5 & 150 & \textbf{1210.7} & 10800.0 \\
5 & 300 & \textbf{1766.1} & 10800.0 \\
5 & 600 & \textbf{1858.3} & 10800.0 \\ \hline
\end{tabular}
\end{table}

\begin{table}[!hbtp]
\centering
\caption{Comparison between the DP and the MIP for the multi-purchase EM estimation problem for $|\mathcal{N}|=15, |\mathcal{C}|=100, \text{Share}=80$ \label{tab:mip-dp:n15:c100:s80}}
\begin{tabular}{cccc}
\hline
$\eta$ & $|\mathcal{P}|$ & DP & MIP \\ \hline
2 & 30 & \textbf{18.1} & 10800.0 \\
2 & 75 & \textbf{111.3} & 10800.0 \\
2 & 150 & \textbf{407.8} & 10800.0 \\
2 & 300 & \textbf{1198.7} & 10800.0 \\
2 & 600 & \textbf{1822.9} & 10800.0 \\
3 & 30 & \textbf{63.3} & 9288.6 \\
3 & 75 & \textbf{264.9} & 10800.0 \\
3 & 150 & \textbf{751.2} & 10800.0 \\
3 & 300 & \textbf{1674.4} & 10800.0 \\
3 & 600 & \textbf{1823.4} & 10800.0 \\
4 & 30 & \textbf{144.4} & 4094.2 \\
4 & 75 & \textbf{614.3} & 10800.0 \\
4 & 150 & \textbf{1502.1} & 10800.0 \\
4 & 300 & \textbf{1818.6} & 10800.0 \\
4 & 600 & \textbf{1845.3} & 10800.0 \\
5 & 75 & \textbf{937.4} & 7404.6 \\
5 & 150 & \textbf{1745.5} & 10800.0 \\
5 & 300 & \textbf{1830.0} & 10800.0 \\
5 & 600 & \textbf{1872.1} & 10800.0 \\ \hline
\end{tabular}
\end{table}

\end{document}